\numberwithin{equation}{section}
\newtheorem{theorem}{Theorem}
\newtheorem{lemma}[theorem]{Lemma}
\newtheorem{proposition}[theorem]{Proposition}
\theoremstyle{definition}
\newtheorem{example}[theorem]{Example}
\newtheorem{remark}[theorem]{Remark}
\newcommand{\bk}{{\bf k}}
\newcommand{\bb}{{\bf b}}
\newcommand{\bc}{{\bf c}}
\newcommand{\Z}{{\mathbb Z}}
\newcommand{\R}{{\mathbb R}}
\newcommand{\C}{{\mathbb C}}
\begin{document}
\title[Zero range process]
{Density and current profiles  
in $\boldsymbol{U_q(A^{(1)}_2)}$ zero range process}
\author{A.~Kuniba}
\email{atsuo.s.kuniba@gmail.com}
\address{Institute of Physics, University of Tokyo, Komaba, Tokyo 153-8902, Japan}

\author{V.~V.~Mangazeev}
\email{vladimir.mangazeev@anu.edu.au}
\address{Department of Theoretical Physics, 
Research School of Physics and Engineering,
Australian National University, Canberra, ACT 0200, Australia.}

\maketitle

\begin{center}{\bf Abstract}\end{center}

The stochastic $R$ matrix for $U_q(A^{(1)}_n)$ introduced recently  
gives rise to an integrable zero range process of $n$ classes of particles 
in one dimension.
For $n=2$ we investigate how finitely many first class 
particles fixed as defects influence the grand canonical ensemble of 
the second class particles.
By using the matrix product stationary probabilities involving infinite products of 
$q$-bosons, exact formulas are derived for the 
local density and current of the second class particles in the large volume limit.
\vspace{0.2cm}

\section{Introduction}\label{sec1}

Zero range processes (ZRPs) \cite{S} are stochastic particle systems
on lattice modeling various flows in granules, 
queuing networks, traffic and so forth.
Their characteristic feature is that 
particles are allowed to share a site and hop over the lattice
with the rate that only depends on the occupancy 
and the list of leaving particles at the departure site\footnote{
This slightly generalizes the original terminology in that 
an arbitrary number of particles 
are allowed to jump out simultaneously.
Such multiple jumps can be suppressed by setting $\mu\rightarrow 0$
in our model. See 
the explanation after (\ref{wm}).}.
To describe their hydrodynamic limit and the rich behavior 
like condensation has been an important issue 
in non-equilibrium statistical mechanics.
See for example \cite{EH, GSS, KL} and references therein.

In the recent work \cite{KMMO}, 
a stochastic $R$ matrix for the 
quantum affine algebra $U_q(A^{(1)}_n)$ was constructed.
It gives rise to discrete and continuous time Markov processes
associated with a commuting family of Markov transfer matrices.
They are formulated as stochastic dynamics of 
$n$ classes of particles in one dimension with zero range type interaction.
Many integrable Markov processes studied earlier, 
e.g. \cite{BC, BCG, KMO2, P, SW2, T0, T1}  
can be identified with their special cases 
as summarized in \cite[Fig.1,2]{Kuan}.
In this paper we will be concerned with the version of the model
introduced in \cite[Sec.3.3, 3.4]{KMMO}, which will be called 
the $U_q(A^{(1)}_n)$ ZRP.
When $n=1$ it reduces to the zero range chipping model 
introduced in \cite{P}.

Stationary states of the $U_q(A^{(1)}_n)$ ZRP 
were obtained in \cite{KO1,KO2}.
Let $\sigma_i =(\sigma_{i,1},\ldots, \sigma_{i,n}) \in \Z_{\ge 0}^n$
be a local state, which means that 
there are $\sigma_{i,k}$ particles of class $k$ at the lattice site $i$.
For the length $L$ periodic chain,
the probability of finding the system in a given configuration
$(\sigma_1,\ldots, \sigma_L) \in (\Z_{\ge 0}^n)^L$
is expressed, up to normalization, by the matrix product formula
\begin{align}\label{nzm}
{\mathbb P}(\sigma_1,\ldots, \sigma_L)
= \mathrm{Tr}(X_{\sigma_1}\cdots X_{\sigma_L}),
\end{align}
where $X_{\sigma_i}$ is an operator acting on the tensor product 
$F^{\otimes \frac{n}{2}(n-1)}$ of the $q$-boson Fock space 
$F = \oplus_{m \ge 0}\C(q)|m\rangle$.
For $n=1$, the $X_{\sigma_i}$ 
is just a scalar meaning that the stationary measure is factorized.
However it is an exceptional feature limited to $n=1$.
For instance when $n=2$, the operator 
$X_{\alpha_1,\alpha_2}$ for the local state 
$(\alpha_1, \alpha_2) \in \Z_{\ge 0}^2$ reads  
\begin{align}\label{aoi}
X_{\alpha_1,\alpha_2}
=\frac{(\mu;q)_{\alpha_1+\alpha_2}}
{(q;q)_{\alpha_1}(q;q)_{\alpha_2}}
\frac{(\mu \bb;q)_\infty}{(\bb;q)_\infty}\bk^{\alpha_2}\bc^{\alpha_1},
\qquad
\frac{(\mu \bb;q)_\infty}{(\bb;q)_\infty} 
= \sum_{j \ge 0}
\frac{(\mu;q)_{j}}
{(q;q)_{j}}\bb^j,
\end{align}
where $\mu$ is another model parameter and 
the symbol $(z;q)_m$ is the $q$-shifted factorial 
defined in the end of this section.
The operators $\bb, \bc$ and $\bk$ are the $q$-boson creation, 
annihilation and the number operators acting on $F$ as
\begin{align*}
\bb | m \rangle &= |m+1\rangle,\qquad \bc | m \rangle = (1-q^m)|m-1\rangle,
\qquad \bk |m\rangle = q^m |m \rangle.
\end{align*}
For $n$ general the operator $X_{\alpha_1,\ldots, \alpha_n}$ 
for the local state $(\alpha_1,\ldots, \alpha_n) \in \Z_{\ge 0}^n$
possesses a nested structure with respect to the rank $n$ \cite{KO2}.
Thus the $U_q(A^{(1)}_n)$ ZRPs form the first systematic examples 
of multispecies (or multi-class) ZRPs whose stationary measure on the ring
is not factorized.
The relevant matrix product operators 
are also quite distinct from those in the exclusion type processes
(cf. \cite{DJLS,GDW,PEM}) 
in that they involve quantum dilogarithm type {\em infinite products} of $q$-bosons,
offering a challenge to extract physics of the model.

With this background in mind we present in this paper a modest analysis of   
stationary properties of the $U_q(A^{(1)}_2)$ ZRP
based on the matrix product formula (\ref{nzm})--(\ref{aoi}).
We introduce finitely many first class particles as {\em defects}
and investigate their influence on the second class particles
whose density is kept finite in the infinite volume limit.
To motivate this setting, although somewhat technically, note that one must pick  
an equal number of $\bb$'s and $\bc$'s to get 
a nonzero contribution to the trace (\ref{nzm}).
Therefore the sum of the expansion index $j$ in (\ref{aoi}) coming from 
$X_{\sigma_1}, \ldots, X_{\sigma_L}$ in (\ref{nzm}) should coincide  
with the total number of the first class particles.
Gathering all such contributions 
in the limit $L \rightarrow \infty$
is a feasible task 
at least if the first class particles are kept finite.

The second class particles will be treated in the {\em grand canonical ensemble}, 
which means that $X_{\alpha_1,\alpha_2}$ is effectively replaced by  
the generating series with respect to $\alpha_2$ in the {\em  fugacity} $y$:
\begin{align*}
A_{\alpha_1} = \sum_{\alpha_2 \ge 0}y^{\alpha_2}X_{\alpha_1,\alpha_2}
=\frac{(\mu;q)_{\alpha_1}}{(q;q)_{\alpha_1}}
\frac{(\mu \bb;q)_\infty}{(\bb;q)_\infty}
(y \bk;q)_\infty^{-1} \bc^{\alpha_1}
(\mu y \bk;q)_\infty.
\end{align*}
See (\ref{Adef}) and (\ref{akn}).
We stay in the regime $0 < q, \mu < 1$ 
where there is no symptom of condensation.
See the remarks around (\ref{ry}) concerning this point.
The equivalence with the canonical ensemble treatment will be 
argued in Section \ref{ss:ce}.
The basic quantity is the probability $P(r,m)$ 
that exactly $m$ second class particles are found at the site $r$ 
under the condition that
the sites $1,2,\ldots, s$ of the periodic lattice $\Z_L$ of size $L$
contain $d_1,\ldots, d_s$ first class particles.
To avoid the ambiguity we assume $d_1, d_s \ge 1$ but 
the choice $d_i=0$ is still allowed for $0 < i < s$.
So they form a cluster of defects of size $s$ in general.
Up to normalization the conditional probability $P(r,m)$ 
is given by replacing the $r$ th operator from the left in 
\begin{align*}
\mathrm{Tr}\bigl(A_{d_1}\cdots A_{d_s} A_0^{L-s}\bigr)
\end{align*}
by $y^mX_{d_r,m}$ if $1 \le r \le s$ and by $y^mX_{0,m}$ elsewhere\footnote{
Precise treatment involves a regularization as mentioned after (\ref{Pleft}).}.  
Our task is to evaluate it in the infinite volume limit $L \rightarrow \infty$
with $s$ and $d_1, \ldots, d_s$ kept fixed.
The limit separates the periodic lattice $\Z_L$ into the three distinct regions 
I, II and III, which are the inside ($1 \le r \le s$), the right ($r>s$), 
and the left ($r \le 0$) of the defect cluster, respectively.
The $\Z_L$ periodicity of the lattice implies that the probability
$P(r,m)$ for the right region $r>s$ and the left region $r\leq0$ 
should match when $r \to \infty$. This is indeed the case as seen 
from (\ref{rinf}) and (\ref{N3}).

Once the conditional probability is determined, the
local density and current of the second class particles 
are derived at any site $r$.
They are physical quantities {\em seen} from the defects.
The necessary calculations are elementary.
The final results are summarized 
in Theorem \ref{th:naka}, \ref{th:migi} and \ref{th:hidari} 
for the regions I, II and III, respectively.
They are expressed in terms of the $q$-digamma function and its derivative
together with the functions
$G_{m,l}(d_1,\ldots, d_s)$ (\ref{Gm}) which incorporate 
the effect of defects.
Curiously the latters are related to
the monodromy matrices of the $U_q(A^{(1)}_1)$ ZRP
containing the fugacity $y$ as a spectral parameter.
In Proposition \ref{pr:excess} we will also show that 
the defects decrease the 
second class particles in the entire system 
exactly by their number 
$d_1+\cdots + d_s$ compared from the defect-free situation.

We present the profiles of the local density and currents in a number of 
figures for various values of $q, \mu, \rho$ 
and the defect pattern $(d_1,\ldots, d_s)$,
where $\rho$ denotes the average density of the second class particles.
The density profiles exhibit a peak and a valley at the left and the right
boundaries of the defect cluster, respectively.
In the current profiles the peak is not observed but 
other behavior is more or less similar to the density. 
The detail is dependent on the pattern $(d_1,\ldots, d_s)$ and 
it is not easy to provide an intuitive explanation in general.
However our result captures the mode $\eta_j$ (\ref{etadef}) 
controlling the correlation length, 
which shows that the influence of the defects 
reaches longer distance when the average density $\rho$ is lower.
Moreover we provide especially simple profiles
in the limits $\rho \rightarrow 0$ and $\rho \rightarrow \infty$ 
in the case of homogeneous defects
in Figure \ref{sch}, which makes the general case easy to infer.

We remark that analyses of density and current 
based on the grand canonical ensemble similar to the present paper
have been done for a class of two species exclusion type processes.
See for example \cite{DJLS, RSS} and references therein.

The layout of the paper is as follows.
We recall the $U_q(A^{(1)}_n)$ ZRP \cite{KMMO} in Section \ref{sec2} and 
the matrix product formula for the stationary
probabilities for $n=2$ \cite{KO1} in Section \ref{sec3}.
The local density and current of the 
second class particles in the grand canonical ensemble are 
formulated in Section \ref{sec4}.
We first deal with the defect-free single species case  
in Section \ref{sec5} as a preparatory warm-up
partly reproducing known facts in earlier works, e.g. \cite{BC,EH,P}.
Our main results in the presence of defects are 
given in Section \ref{sec:results} with a number of 
figures showing the density and current profiles.
Their derivation are detailed in Section \ref{sec6}.
Section \ref{sec7} contains a brief summary and discussion.
Technical lemmas are collected in Appendix \ref{app:ls}.

Throughout the paper we use the notation
$\theta(\mathrm{true})=1, 
\theta(\mathrm{false}) =0$,
$(z)_m = (z; q)_m = \prod_{j=0}^{m-1}(1-zq^j)$
and the $q$-binomial 
$\binom{m}{k}_{\!q} = \theta(k \in [0,m])
\frac{(q)_m}{(q)_k(q)_{m-k}}$.
The symbols $(z)_m$ appearing in this paper always mean $(z; q)_m$.
For integer arrays 
$\alpha=(\alpha_1,\ldots, \alpha_m), \beta=(\beta_1,\ldots, \beta_m)$ 
of any length $m$, we write 
$|\alpha | = \alpha_1+\cdots  + \alpha_m$ and 
the Kronecker delta 
$\delta_{\alpha, \beta} = \delta^{\alpha}_{\beta} =
 \prod_{i=1}^m\theta(\alpha_i=\beta_i)$. 
The relation $\alpha \le \beta$ or equivalently  
$\beta \ge \alpha$ is defined by $\beta-\alpha \in \Z^m_{\ge 0}$. 

\section{$U_q(A^{(1)}_n)$ zero range process}\label{sec2}

\subsection{Stochastic $R$ matrix}
Set 
$W = \bigoplus_{\alpha=(\alpha_1,\ldots, \alpha_n) \in \Z_{\ge 0}^n}
\C |\alpha\rangle$.
Define the operator $\mathscr{S}(\lambda,\mu) 
\in \mathrm{End}(W \otimes W)$ 
depending on the parameters $\lambda$ and $\mu$ by
\begin{align}
&\mathscr{S}(\lambda,\mu)(|\alpha\rangle \otimes |\beta\rangle ) = 
\sum_{\gamma,\delta \in 
\Z_{\ge 0}^n}\mathscr{S}(\lambda,\mu)_{\alpha,\beta}^{\gamma,\delta}
\,|\gamma\rangle \otimes |\delta\rangle,
\label{ask1}\\
&\mathscr{S}(\lambda,\mu)^{\gamma,\delta}_{\alpha, \beta} 
= \theta(\gamma+\delta=\alpha+\beta)
\Phi_q(\gamma | \beta; \lambda,\mu), \label{ask2}
\end{align}
where $\Phi_q(\gamma | \beta; \lambda,\mu)$ is given by

\begin{align}
\Phi_q(\gamma|\beta; \lambda,\mu)  = 
q^{\varphi(\beta-\gamma, \gamma)}
\left(\frac{\mu}{\lambda}\right)^{|\gamma|}
\frac{(\lambda)_{|\gamma|}(\frac{\mu}{\lambda})_{|\beta|-|\gamma|}}
{(\mu)_{|\beta|}}
\prod_{i=1}^{n}\binom{\beta_i}{\gamma_i}_{\!q},
\quad
\varphi(\alpha, \beta) &= \sum_{1 \le i < j \le n}\alpha_i\beta_j. \label{mho}
\end{align}
The sum (\ref{ask1}) is finite due to the $\theta$ factor in (\ref{ask2}).
According to the direct sum decomposition
$W \otimes W = \bigoplus_{\gamma \in \Z_{\ge 0}^n}
\left(\bigoplus_{\alpha+\beta=\gamma}\C
|\alpha\rangle \otimes |\beta\rangle\right)$,
$\mathscr{S}(\lambda,\mu)$ splits into the corresponding submatrices. 
Note that 
$\mathscr{S}(\lambda,\mu)^{\gamma,\delta}_{\alpha, \beta}=0$ 
unless $\gamma \le \beta$ hence $\alpha \le \delta$ as well.
The difference property 
$\mathscr{S}(\lambda,\mu)=\mathscr{S}(c\lambda,c\mu)$ is absent.
We call $\mathscr{S}(\lambda, \mu)$ 
the {\em stochastic $R$ matrix} and depict its elements as
\begin{equation}\label{vertex}
\begin{picture}(200,45)(-90,-21)
\put(-70,-2){$\mathscr{S}(\lambda,\mu)^{\gamma,\delta}_{\alpha, \beta}\,=$}
\put(0,0){\vector(1,0){24}}
\put(12,-12){\vector(0,1){24}}
\put(-10,-2){$\alpha$}\put(27,-2){$\gamma$}
\put(9,-22){$\beta$}\put(9,16){$\delta$}
\end{picture}
\end{equation}
The stochastic $R$ matrix was constructed \cite{KMMO} based on the 
quantum $R$ matrix of the symmetric tensor representation of 
the quantum affine algebra $U_q(A^{(1)}_n)$.
It satisfies the Yang-Baxter equation 
and the sum-to-unity condition \cite{KMMO}:
\begin{align}
&\mathscr{S}_{1,2}(\nu_1,\nu_2)
\mathscr{S}_{1,3}(\nu_1, \nu_3)
\mathscr{S}_{2,3}(\nu_2, \nu_3)
=
\mathscr{S}_{2,3}(\nu_2, \nu_3)
\mathscr{S}_{1,3}(\nu_1, \nu_3)
\mathscr{S}_{1,2}(\nu_1,\nu_2),
\label{sybe}\\
&\sum_{\gamma, \delta \in \Z_{\ge 0}^n}
\mathscr{S}(\lambda,\mu)^{\gamma,\delta}_{\alpha, \beta} =1\qquad
(\forall \alpha, \beta \in \Z_{\ge 0}^n).
\nonumber
\end{align}
The latter is a consequence of the sum rule \cite{KMMO}:
\begin{align}\label{syk}
\sum_{\gamma \in \Z_{\ge 0}^n}
\Phi_q(\gamma | \beta; \lambda,\mu)
= 1 \quad(\forall \beta \in \Z_{\ge 0}^n),
\end{align}
where the summand is zero unless $\gamma \le \beta$.

\subsection{\mathversion{bold}Discrete time 
$U_q(A^{(1)}_n)$ ZRP}\label{ss:sae}
For a positive integer $L$ we introduce the operator
\begin{align}\label{ngm}
T(\lambda|\mu_1,\ldots, \mu_L) = 
\mathrm{Tr}_{W}\left(
\mathscr{S}_{0,L}(\lambda,\mu_L)\cdots \mathscr{S}_{0,1}(\lambda,\mu_1)
\right)
\in \mathrm{End}(W^{\otimes L})
\end{align}
depending on the parameters 
$\mu_1, \ldots, \mu_L$ and $\lambda$.
To explain the notation, 
consider the space  $W\otimes W^{\otimes L}$ labeled as
$W_0\otimes W_1 \otimes \cdots \otimes W_L$ for distinction ($W_i = W$).
Then $\mathscr{S}_{0,i}(\lambda, \mu_i)$ acts as 
the stochastic $R$ matrix $\mathscr{S}(\lambda, \mu_i)$ (\ref{ask1}) 
on $W_0 \otimes W_i$ and as the identity elsewhere.
The trace in (\ref{ngm}) is taken over $W_0$ leaving an 
operator acting on $W_1 \otimes \cdots \otimes W_L = W^{\otimes L}$.

Write the action of 
$T=T(\lambda|\mu_1,\ldots, \mu_L) $ as
\begin{align*}
T(|\beta_1\rangle \otimes \cdots \otimes |\beta_L\rangle) 
= \sum_{\alpha_1,\ldots, \alpha_L \in \Z_{\ge 0}^n}
 T_{\beta_1,\ldots, \beta_L}^{\alpha_1,\ldots, \alpha_L}
|\alpha_1\rangle \otimes \cdots \otimes |\alpha_L\rangle 
\in W^{\otimes L}.
\end{align*}
Then the matrix element is depicted by the concatenation of (\ref{vertex}) as 
\begin{equation}\label{tdiag}
\begin{picture}(250,50)(10,-25)
\put(-20,0){$T_{\beta_1,\ldots, \beta_L}^{\alpha_1,\ldots, \alpha_L}=
{\displaystyle \sum_{\gamma_1,\ldots, \gamma_L \in \Z_{\ge 0}^n}}$}

\put(100,0){
\put(0,0){\vector(1,0){24}}
\put(12,-12){\vector(0,1){24}}
\put(-12,-2){$\gamma_L$}\put(28,-2){$\gamma_1$}
\put(9,-22){$\beta_1$}\put(8,16){$\alpha_1$}}

\put(140,0){
\put(0,0){\vector(1,0){24}}
\put(12,-12){\vector(0,1){24}}
\put(27,-2){$\gamma_2$}
\put(9,-22){$\beta_2$}\put(8,16){$\alpha_2$}}

\put(182,-3){$\cdots$}

\put(220,0){
\put(0,0){\vector(1,0){24}}
\put(12,-12){\vector(0,1){24}}
\put(-24,-2){$\gamma_{L-1}$}\put(27,-2){$\gamma_L$,}
\put(9,-22){$\beta_L$}\put(8,16){$\alpha_L$}}
\end{picture}
\end{equation}
which is a customary diagram for row transfer matrices of vertex models 
on the length $L$ periodic lattice \cite{Bax}.
By the construction it satisfies the weight conservation:
\begin{align}\label{wc}
T_{\beta_1,\ldots, \beta_L}^{\alpha_1,\ldots, \alpha_L} = 0
\;\;\text{unless}\;\;
\alpha_1+\cdots +\alpha_L = 
\beta_1+\cdots + \beta_L \in \Z_{\ge 0}^{n}.
\end{align}
Thanks to the Yang-Baxter equation (\ref{sybe}), 
the matrix (\ref{ngm}) forms a commuting family (cf. \cite{Bax}):
\begin{align}\label{mri}
[T(\lambda|\mu_1,\ldots, \mu_L), 
T(\lambda'|\mu_1,\ldots, \mu_L)]=0.
\end{align}

Let $t$ be a time variable and consider the evolution equation 
\begin{align}\label{dmt}
|P(t+1)\rangle = T(\lambda|\mu_1,\ldots, \mu_L)
|P(t)\rangle \in W^{\otimes L}.
\end{align}
Due to the weight conservation (\ref{wc}) it splits into 
finite-dimensional subspaces which we call {\em sectors}.
In terms of the array $m=(m_1,\ldots, m_n) \in \Z^n_{\ge 0}$ 
and the set 
\begin{align*}
S(m) = \{(\sigma_1,\ldots, \sigma_L) \in (\Z^n_{\ge 0})^L\mid 
\sigma_1+\cdots + \sigma_L = m\},
\end{align*}
the corresponding sector, which will also be referred to as $m$, is given by 
$\oplus_{(\sigma_1,\ldots, \sigma_L)\in S(m)}
\C |\sigma_1\rangle \otimes \cdots \otimes  |\sigma_L\rangle$.
The vector 
$|\sigma_1\rangle \otimes \cdots \otimes  |\sigma_L\rangle \in W^{\otimes L}$ 
with $\sigma_i=(\sigma_{i,1},\ldots, \sigma_{i,n}) \in \Z_{\ge 0}^n$ 
is interpreted as a state of the system where the $i$ th site from the left 
is populated with $\sigma_{i,a}$ particles of the $a$ th class or species.  
The array $m=(m_1,\ldots, m_n)$ indicates that 
there are $m_a$ particles of class $a$ in total in the corresponding sector.

In order to interpret (\ref{dmt}) as the master 
equation of a discrete time Markov process, the matrix 
$T=T(\lambda|\mu_1,\ldots, \mu_L) $ should fulfill the following conditions:

\vspace{0.2cm}
\begin{enumerate}
\item  Non-negativity; all the elements (\ref{tdiag}) belong to $\R_{\ge 0}$,

\item Sum-to-unity; $\sum_{\alpha_1,\ldots, \alpha_L\in \Z_{\ge 0}^{n}}
T_{\beta_1,\ldots, \beta_L}^{\alpha_1,\ldots, \alpha_L} = 1$ for any
$(\beta_1,\ldots, \beta_L) \in (\Z_{\ge 0}^n)^L$.
\end{enumerate}

\vspace{0.2cm}\noindent
The property  (i)  holds if 
$\Phi_q(\gamma|\beta; \lambda,\mu_i)\ge 0$
for all $i \in \Z_L$.
This is achieved 
by taking $0 < \mu^{\epsilon}_i < \lambda ^{\epsilon} < 1, q^{\epsilon}<1$
for $\epsilon=\pm 1$.
The property (ii) assures the total probability conservation and can be 
shown by using (\ref{syk}) in a similar manner to \cite[Sec.3.2]{KMMO}.

Henceforth we call the $T(\lambda|\mu_1,\ldots, \mu_L) $
{\em Markov transfer matrix} assuming 
$0 < \mu^{\epsilon}_i < \lambda ^{\epsilon} < 1, q^{\epsilon}<1$ always. 
The choice of $\epsilon=\pm 1$ specifies one 
of the two physical regimes of the system.
The evolution equation (\ref{dmt}) 
describes a stochastic dynamics of $n$ classes of particles
hopping to the right periodically 
via an extra lane (horizontal arrows in (\ref{tdiag}))
which particles get on or get off when they leave or arrive at a site.
The rate of such local processes is specified by 
(\ref{ask2}), (\ref{mho}) and (\ref{vertex}).
For $n=1$ and the homogeneous choice $\mu_1=\cdots= \mu_L$, it reduces to 
the model introduced in \cite{P}.

\subsection{\mathversion{bold}Continuous time 
$U_q(A^{(1)}_n)$ ZRP\mathversion{bold}}

One can derive continuous time versions of (\ref{dmt}) 
from the homogeneous case $\mu_1=\cdots = \mu_L=\mu$
by taking the logarithmic derivative either at $\lambda=1$ or $\lambda=\mu$ 
\cite[Sec.3.4]{KMMO}.
The result is given by
\begin{align}\label{skb}
\frac{d}{dt}|P(t)\rangle &= H |P(t)\rangle \in W^{\otimes L},\quad
H = aH_+ + bH_- \; \;(a, b \in \R_{\ge 0}),\\
H_+ &= \left.-\epsilon\mu^{-1}
\frac{\partial \log 
T(\lambda| \mu, \ldots, \mu)}{\partial \lambda}\right|_{\lambda=1},
\qquad
H_- = \left. \epsilon\mu\,
\frac{\partial \log T(\lambda| \mu, \ldots, \mu)}{\partial \lambda} 
\right|_{\lambda = \mu}
\label{Hs}
\end{align}
with
$H_\pm = \sum_{i\in \Z_L} h_{\pm ,i,i+1}$. 
The summands 
$h_{\pm, i,i+1}$ act on the adjacent  
$(i, i\!+\!1)$ th components of $W^{\otimes L}$ as $h_\pm$ and 
as the identity elsewhere.
The pairwise interactions $h_\pm$ are specified by
\begin{align}
h_+(|\alpha\rangle \otimes |\beta\rangle)
&=\epsilon\!\!\!\sum_{\gamma \in \Z^n_{\ge 0}\setminus \{0^n\}}  
\!\!\!\frac{q^{\varphi(\alpha-\gamma,\gamma)}
\mu^{|\gamma|-1}(q)_{|\gamma|-1}}
{(\mu q^{|\alpha|-|\gamma|})_{|\gamma|}}
\prod_{i=1}^n
\binom{\alpha_i}{\gamma_i}_{\!q}|\alpha-\gamma\rangle 
\otimes |\beta+\gamma\rangle
\nonumber\\ 
&-\epsilon\sum_{i=0}^{|\alpha|-1}\frac{q^i}{1-\mu q^i}
|\alpha\rangle \otimes |\beta\rangle,
\label{ri1}\\
h_-(|\alpha\rangle \otimes |\beta\rangle)
&= \epsilon\!\!\!\sum_{\gamma \in \Z^n_{\ge 0}\setminus \{0^n\}}
\!\!\!\frac{q^{\varphi(\gamma, \beta-\gamma)}
(q)_{|\gamma|-1}}
{(\mu q^{|\beta|-|\gamma|})_{|\gamma|}}
\prod_{i=1}^n
\binom{\beta_i}{\gamma_i}_{\!q}
|\alpha+\gamma\rangle \otimes |\beta-\gamma\rangle
\nonumber\\
&-\epsilon\sum_{i=0}^{|\beta|-1}\frac{1}{1-\mu q^i}
|\alpha\rangle \otimes |\beta\rangle,
\label{ri2}
\end{align}
where $\varphi(\alpha, \beta)$ is defined in (\ref{mho}) and 
$0^n$ stands for $(0,\ldots, 0)\in \Z_{\ge 0}^n$.
We have included the sign factor $\epsilon=\pm 1$ to cover the two regimes.
Denote the action of the matrices ${\mathcal H}= H_+, H_-$
on the base vectors as ${\mathcal H}
(|\beta_1\rangle \otimes \cdots \otimes |\beta_L \rangle)
= \sum_{\alpha_1,\ldots, \alpha_L}
{\mathcal H}^{\alpha_1, \ldots, \alpha_L}_{\beta_1,\ldots, \beta_L}
|\alpha_1\rangle \otimes \cdots \otimes |\alpha_L\rangle$.
The equation (\ref{skb}) can be viewed as the master equation 
of a continuous time Markov process if the following conditions are satisfied: 

\vspace{0.1cm}
(i)' Non-negativity; 
${\mathcal H}^{\alpha_1, \ldots, \alpha_L}_{\beta_1,\ldots, \beta_L} 
\in \R_{\ge 0}$ for any pair such that 
$(\alpha_1, \ldots, \alpha_L) \neq (\beta_1,\ldots, \beta_L)$,

\vspace{0.1cm}
(ii)' Sum-to-zero;
$\sum_{\alpha_1, \ldots, \alpha_L}
{\mathcal H}^{\alpha_1, \ldots, \alpha_L}_{\beta_1,\ldots, \beta_L} = 0$ 
for any $(\beta_1,\ldots, \beta_L)$.

\vspace{0.1cm}\noindent
The latter represents the total probability conservation.
It was shown in \cite{KMMO} that 
(i)' and (ii)' are satisfied if 
$0 < q^\epsilon, \mu^\epsilon <1$
for $\epsilon = \pm 1$.

The commutativity (\ref{mri}) leads to $[H_+, H_-]=0$.
Therefore they share the same eigenvectors with the superposition
$H= H(a,b,\epsilon,q,\mu) =a H_+(\epsilon,q,\mu)+ b H_-(\epsilon,q,\mu)$
in (\ref{skb}).
A curious symmetry  
$H(a,b,-\epsilon, q^{-1}, \mu^{-1})
= \mathscr{P}H(\mu b, \mu a,\epsilon,q,\mu)\mathscr{P}^{-1}$
is known to hold \cite[Rem.9]{KMMO}, where 
$\mathscr{P} = \mathscr{P}^{-1} 
\in \mathrm{End}(W^{\otimes L})$ is the `parity' operator 
reversing the sites as
$\mathscr{P}
(|\alpha_1\rangle \otimes \cdots \otimes |\alpha_L\rangle)
= |\alpha_L\rangle \otimes \cdots \otimes |\alpha_1\rangle$.
 
The Markov processes (\ref{skb}) 
is naturally interpreted as the stochastic dynamics of 
$n$ classes of particles on the ring of length $L$.
The base vector 
$|\alpha_1\rangle \otimes \cdots \otimes |\alpha_L\rangle$ 
with $\alpha_i = (\alpha_{i,1}\ldots, \alpha_{i,n}) \in \Z^n_{\ge 0}$ 
represents a state in which 
there are $\alpha_{i,a}$ class $a$ particles at the $i$ th site.
There is no constraint on the number of particles that occupy a site.
The matrices $H_+$ and $H_-$ describe their stochastic hopping
to the right and the left nearest neighbor sites, respectively.
The transition rate can be read off the first terms on the RHS of 
(\ref{ri1}) and (\ref{ri2}), where the array 
$\gamma=(\gamma_1,\ldots, \gamma_n)$ specifies the numbers of 
particles that are jumping out.
The superposition $H(a,b,\epsilon,q,\mu)$ corresponds to 
a {\em mixture} of such right and left moving dynamics.
The rate is determined from the original occupancy 
($\alpha$ for $h_+$ and $\beta$ for $h_-$) and  
the list of leaving particles ($\gamma$ for $h_\pm$) 
at the departure site and it is independent of the status of the destination site.
Thus it defines a ZRP of $n$ classes of particles 
in a slightly generalized sense in that the rate is allowed to 
depend on $\gamma$.
Here is a snapshot of the system for the $n=2$ case.

\setlength{\unitlength}{1mm}
\begin{picture}(100,37)(-5,31)
\thicklines
\put(50,50){\ellipse{80}{20}}

\put(12,51){\line(0,1){3.8}} 
\put(20,54.4){\line(0,1){3.8}}
  \put(22,58){$\bullet$}\put(24,58.5){$\bullet$}\put(26,58.8){$\circ$}

\put(30,56.4){\line(0,1){3.8}}
\put(40,57.4){\line(0,1){3.8}}
   \put(42.5,60.7){$\circ\,\circ$}

\put(50,58){\line(0,1){3.8}}
  \put(52.3,60.5){$\bullet$} \put(54.5,60.5){$\circ$}\put(56.5,60.5){$\circ$}
  
\put(60,57.6){\line(0,1){3.8}}

\put(70,56.5){\line(0,1){3.8}}
  \put(71.6,58.7){$\circ$}\put(73.6,58.5){$\circ$}\put(75.6,58.1){$\bullet$}

\put(80,54.7){\line(0,1){3.8}}
  \put(82.2,56.5){$\bullet$}\put(84.2,55.7){$\circ$}

\put(87,51.5){\line(0,1){3.8}}

\put(12,45){\line(0,1){4}}
  \put(14.2,46){$\bullet$}\put(16,45.3){$\bullet$}

\put(20,41){\line(0,1){4}}
\put(30,39){\line(0,1){4}}
  \put(34,41.5){$\circ$}

\put(35,36.1){$\cdot$}
\put(36.5,35.9){$\cdot$}
\put(38,35.8){$\cdot$}
\put(39.5,35.7){$\cdot$}
\put(41,35.6){$\cdot$}
\put(42.5,35.6){$\cdot$}

\put(40,38){\line(0,1){4}}
\put(-0.5,0){
 \put(42,41){$\circ \circ$}\put(42,43){$\circ \circ$}\put(43,45){$\circ$}
 \put(45.8,41){$\bullet \bullet$}\put(46.5,43){$\bullet$}}

\put(50,38){\line(0,1){4}}
  \put(54,41){$\bullet$}

\put(60,38){\line(0,1){4}}
\put(-0.8,0){
  \put(62,41.4){$\circ$}\put(64,41.6){$\circ$}
  \put(62.6,43.5){$\circ$}
  \put(66,41.8){$\bullet$}\put(68,42){$\bullet$}}

\put(70,39){\line(0,1){4}}
  \put(72.5,42.6){$\circ$}\put(74.5,43){$\bullet$}\put(76.7,43.4){$\bullet$}

\put(80,41){\line(0,1){4}}
\put(87,44){\line(0,1){4}}

\put(14,41){$\sigma_L$}
\put(23,38){$\sigma_1$}
\put(52.5,35){$\sigma_{i-1}$}
\put(63.7,35.5){$\sigma_i$}
\put(73,37){$\sigma_{i+1}$}

\put(83.5,38.5){$\cdot$}
\put(85,39.1){$\cdot$}
\put(86.3,39.7){$\cdot$}
\put(87.6,40.7){$\cdot$}
\put(88.6,41.7){$\cdot$}

\put(59.5,46.5){\oval(7,7)[t]}\put(56.1,45.5){\vector(0,-1){1}}
\put(52.3,51.9){$h_{-,i-1,i}$}
\put(68.5,47){\oval(7,7)[t]}\put(72.2,46.5){\vector(0,-1){1}}
\put(66,52.4){$h_{+,i,i+1}$}

\put(0,-6){
\put(100,66){$\circ\,$ first class particle}
\put(100,61){$\bullet\,$ second class particle}

\put(104,52){$\sigma_{i-1}=(0,1)$,}
\put(107.5,48){$\sigma_i=(3,2)$,}
\put(104,44){$\sigma_{i+1}=(1,2)$.}
}

\end{picture}
\setlength{\unitlength}{1pt}

\noindent
The local hopping in (\ref{ri1}) and (\ref{ri2})
with $n=2$ is depicted as  

\setlength{\unitlength}{0.7mm}
\begin{picture}(200,40)(-3,32)
\put(3,55){rate}
\put(0,48){$\displaystyle{a w_+(\gamma|\alpha)}:$
\put(10,-10){
\drawline(0,5)(0,0)
\drawline(0,0)(50,0)\drawline(50,0)(50,5)
\drawline(25,0)(25,5)
\put(-1,0){
\drawline(12,12)(12,17)\drawline(12,17)(37,17)
\put(37,17){\vector(0,-1){5}}
}
\put(3,2){$\overbrace{\circ ... \circ}^{\alpha_1}
\overbrace{\bullet...\bullet}^{\alpha_2}$}
\put(28,2){$\circ ... \circ \bullet...\bullet$}
\put(14, 19){$\overbrace{\circ ... \circ}^{\gamma_1}
\overbrace{\bullet...\bullet}^{\gamma_2}$}}

}
\put(75,38){
\put(33,17){rate}
\put(30,10){$\displaystyle{b w_-(\gamma|\beta)}:$

\put(7,-10){
\drawline(0,5)(0,0)
\drawline(0,0)(50,0)\drawline(50,0)(50,5)
\drawline(25,0)(25,5)
%
\put(-1,0){
\drawline(37,12)(37,17)\drawline(37,17)(12,17)
\put(12,17){\vector(0,-1){5}}
}
\put(28,2){$\overbrace{\circ ... \circ}^{\beta_1}
\overbrace{\bullet...\bullet}^{\beta_2}$}
\put(3,2){$\circ ... \circ \bullet...\bullet$}
\put(14, 19){$\overbrace{\circ ... \circ}^{\gamma_1}
\overbrace{\bullet...\bullet}^{\gamma_2}$}}
}

}
\end{picture}
\setlength{\unitlength}{1pt}

\noindent
where the rate $w_+(\gamma|\alpha)$ and $w_-(\gamma|\beta)$ 
with $\epsilon=+1$ are given for $|\gamma | >0$ by 
\begin{align}
w_+(\gamma|\alpha) 
&=\frac{q^{(\alpha_1-\gamma_1)\gamma_2}\mu^{\gamma_1+\gamma_2-1}
(q)_{\gamma_1+\gamma_2-1}}
{(\mu q^{\alpha_1+\alpha_2-\gamma_1-\gamma_2})_{\gamma_1+\gamma_2}}
\frac{(q)_{\alpha_1}}{(q)_{\gamma_1}(q)_{\alpha_1-\gamma_1}}
\frac{(q)_{\alpha_2}}{(q)_{\gamma_2}(q)_{\alpha_2-\gamma_2}},
\label{wp}\\
w_-(\gamma|\beta) 
&=\frac{q^{\gamma_1(\beta_2-\gamma_2)}
(q)_{\gamma_1+\gamma_2-1}}
{(\mu q^{\beta_1+\beta_2-\gamma_1-\gamma_2})_{\gamma_1+\gamma_2}}
\frac{(q)_{\beta_1}}{(q)_{\gamma_1}(q)_{\beta_1-\gamma_1}}
\frac{(q)_{\beta_2}}{(q)_{\gamma_2}(q)_{\beta_2-\gamma_2}}.
\label{wm}
\end{align}

The integrable Markov processes recalled here 
cover several models studied earlier.
When $\epsilon=1, \mu\rightarrow 0$ in $H_+$, 
the multiple jumps $|\gamma |>1$ are suppressed in (\ref{ri1}).
So if $\gamma_a=1$ and the other components of $\gamma$ are 0,
the rate reduces to   
$q^{\alpha_1+\cdots + \alpha_{a-1}}\frac{1-q^{\alpha_a}}{1-q}$.
This reproduces the $n$-species $q$-boson process 
in \cite{T1} whose $n=1$ case further goes back to \cite{SW2}.
For $n=1$, 
there are numerous works 
including \cite{BCG,BC,P,T0} for example.
One can overview their interrelation in \cite[Fig.1,2]{Kuan}.
When $\epsilon=1, (\mu,q)\rightarrow (0,0)$ in $H_-$,
a kinematic constraint 
$\varphi(\gamma, \beta-\gamma) = 
\sum_{1 \le i<j \le n}\gamma_i(\beta_j-\gamma_j)=0$ 
occurs in (\ref{ri2}).
In order that $\gamma_a>0$ happens,
the equalities
$\gamma_{a+1}=\beta_{a+1}, 
\gamma_{a+2}=\beta_{a+2}, \ldots, \gamma_n = \beta_n$ must hold.
It means that larger class particles have the {\em priority} to jump out,
which precisely reproduces the $n$ class totally asymmetric zero range process  
explored in \cite{KMO2} after reversing the labeling of the 
classes $1,2, \ldots, n$ of the particles.

\section{Stationary states}\label{sec3}

\subsection{Definition and example}
By definition a stationary state of the discrete time 
$U_q(A^{(1)}_n)$ ZRP (\ref{dmt})
is a vector $|\overline{P}\rangle \in W^{\otimes L}$ 
such that
\begin{align*}
|\overline{P}\rangle= T(\lambda|\mu_1,\ldots, \mu_L)|\overline{P}\rangle.
\end{align*}
The stationary state is unique within each sector $m$ up to normalization.
Apart from $m$, 
it depends on $q$ and the inhomogeneity parameters $\mu_1, \ldots, \mu_L$ but
{\em not} on $\lambda$ thanks to the commutativity (\ref{mri}).
Sectors $m=(m_1,\ldots, m_n)$ such that $\forall m_a \ge 1$ are called {\em basic}.
Non-basic sectors are equivalent to a basic sector of some $n'<n$ models
with a suitable relabeling of the classes.
Henceforth we concentrate on the basic sectors. 
The coefficient appearing in the expansion
\begin{align}\label{hrk}
|\overline{P}\rangle = \sum_{(\sigma_1,\ldots, \sigma_L) \in S(m)}
{\mathbb P}(\sigma_1,\ldots, \sigma_L)
|\sigma_1,\ldots, \sigma_L\rangle
\end{align}
is the stationary probability if it is normalized as
$\sum_{(\sigma_1,\ldots, \sigma_L) \in S(m)}
{\mathbb P}(\sigma_1,\ldots, \sigma_L) = 1$.
In this paper we will abuse the terminology also for 
the unnormalized states and probabilities.

The stationary states for the continuous Markov process (\ref{skb}) are those 
$|\overline{P}\rangle$ that satisfy 
$H |\overline{P}\rangle = 0$.
They are obtained from the discrete time ones just by the
specialization to the homogeneous case
$\mu_1=\cdots = \mu_L = \mu$.
This is because (\ref{skb}) is an infinitesimal version of the 
commuting time evolutions (\ref{dmt}) by the construction. 
In particular the stationary states are independent of $a$ and $b$ in (\ref{skb}).

\begin{example}\label{ex:tgm}
Set $(n,L) = (2,3)$ and consider  
the homogeneous case $\mu_1=\mu_2=\mu_3 = -\kappa$.
The stationary state in the sector $m=(1,2)$ is given by
\begin{align*}
|\overline{P}\rangle = &
3 (1 + q \kappa) (1 + q^2 \kappa) |\emptyset, \emptyset, 122\rangle 
+ (1 + q) (2 + 
    q) (1 + \kappa) (1 + q \kappa) |\emptyset, 2, 12\rangle \\
&+ (2 + q^2) (1 + 
    \kappa) (1 + q \kappa) |\emptyset, 22, 1\rangle + (1 + 2 q^2) (1 + 
    \kappa) (1 + q \kappa) |\emptyset, 1, 22\rangle \\
&+ (1 + q) (1 + 2 q) (1 + 
    \kappa) (1 + q \kappa) |\emptyset, 12, 2\rangle + (1 + q) (1 + q + 
    q^2) (1 + \kappa)^2 |2, 2, 1\rangle+ \text{cyclic},
\end{align*}
where $ |\emptyset, 2, 12\rangle$ for example is 
the multiset representation of the 
base vector
$ |(0,0)\rangle \otimes |(0,1)\rangle \otimes |(1,1)\rangle$ 
 in the multiplicity representation.
The terms 
``cyclic" are those obtained by 
the shift $|\sigma_1, \sigma_2, \sigma_3\rangle 
\rightarrow |\sigma_{i+1}, \sigma_{i+2}, \sigma_{i+3}\rangle$
for $i \in \Z_3\setminus \{0\}$.
Similarly the stationary state in the sector $m=(2,1)$ is given by
\begin{equation}\label{knh}
\begin{split}
|\overline{P}\rangle = &3 (1 + q \kappa) (2 + q + \kappa + 2 q \kappa) (1 + 
    q^2 \kappa) |\emptyset, \emptyset, 112\rangle \\
&+ (1 + q) (1 + \kappa) (1 + 
    q \kappa) (3 + 3 q + 3 q^2 + 2 \kappa + 2 q \kappa + 
    5 q^2 \kappa) |\emptyset, 1, 12\rangle \\
&+ (1 + \kappa) (1 + q \kappa) (3 + 3 q + 
    3 q^2 + \kappa + 5 q \kappa + 2 q^2 \kappa + 
    q^3 \kappa) |\emptyset, 2, 11\rangle \\
&+ (1 + q) (1 + \kappa) (1 + 
    q \kappa) (5 + 2 q + 2 q^2 + 3 \kappa + 3 q \kappa + 
    3 q^2 \kappa) |\emptyset, 12, 1\rangle \\
&+ (1 + \kappa) (1 + q \kappa) (1 + 
    2 q + 5 q^2 + q^3 + 3 q \kappa + 3 q^2 \kappa + 
    3 q^3 \kappa) |\emptyset, 11, 2\rangle \\
&+ (1 + q) (1 + q + q^2) (1 + 
    \kappa)^2 (2 + q + \kappa + 2 q \kappa) |2, 1, 1\rangle + \text{cyclic}.
 \end{split}
 \end{equation}
It has been conjectured \cite[Ex.4]{KO2} that for any sector $m$
there is a normalization such that 
$\mathbb{P}(\sigma_1,\ldots, \sigma_L) \in 
\Z_{\ge 0}[q,-\mu_1,\ldots, -\mu_L]$ for all 
$(\sigma_1,\ldots, \sigma_L) \in S(m)$.
\end{example}

\subsection{Matrix product formula}
In \cite{KO2} a matrix product formula for the stationary probability 
was obtained for general $n$ and inhomogeneity
$\mu_1, \ldots, \mu_L$.
In the rest of the paper we shall 
exclusively deal with 
the $n=2$ case \cite{KO1} with the homogeneous choice 
$\mu_1=\cdots = \mu_L = \mu$ 
in the regime $0 < q, \mu < 1$.
In the continuous time setting, 
it corresponds to $\epsilon=+1$ in (\ref{Hs})--(\ref{ri2}).

To describe the result we need the $q$-boson algebra 
$\mathcal{B}$
generated by $\bb,\bc, \bk$ obeying the relations 
\begin{align}\label{akn}
\bk \bb = q\bb \bk,\qquad \bk\bc = q^{-1} \bc \bk,\qquad
\bb \bc = 1 - \bk,\qquad \bc \bb = 1-q\bk.
\end{align}

Let 
$F = \bigoplus_{m \ge 0}\C(q) |m\rangle$ be the Fock space and 
$F^\ast = \bigoplus_{m \ge 0}\C(q) \langle m |$ be its dual on which 
the $q$-boson operators $\bb, \bc, \bk$ act as
\begin{equation}\label{qb}
\begin{split}
\bb | m \rangle &= |m+1\rangle,\qquad \bc | m \rangle = (1-q^m)|m-1\rangle,
\qquad \bk |m\rangle = q^m |m \rangle,\\
\langle m | \bc &= \langle m+1 |,\qquad
\langle m | \bb = \langle m-1|(1-q^m),\qquad
\langle m | \bk = \langle m | q^m,
\end{split}
\end{equation}
where $|\!-\!1\rangle = \langle -1 |=0$.
They satisfy the defining relations (\ref{akn}).
The bilinear pairing of $F^\ast$ and $F$ is specified as 
$\langle m | m'\rangle = \delta_{m,m'}(q)_m$.
Then $\langle m| (Q|m'\rangle) = (\langle m|Q)|m'\rangle$ is valid and the 
trace is given by $\mathrm{Tr}\,Q= \sum_{m \ge 0}
\frac{\langle m|Q|m\rangle}{(q)_m}$.
As a vector space, the $q$-boson algebra $\mathcal{B}$ 
has the direct sum decomposition
\begin{align}\label{Bd}
\mathcal{B} &=  
\bigoplus_{r \in \Z_{\ge 0}, s \in \Z}\mathcal{B}^r_s,
\qquad\mathcal{B}^r_s =
\C(q)\bb^{\max(s,0)}\bk^r\bc^{\max(-s,0)}.
\end{align}
The trace $\mathrm{Tr}\,Q$ is finite and nonzero only if 
$Q \in \bigoplus_{r \ge 1}\mathcal{B}^r_0$ when it is 
evaluated by $\mathrm{Tr}(\bk^r) = (1-q^r)^{-1}$.
For $Q \in \mathcal{B}$, we let $Q'$ denote the projection of $Q$ 
onto $\bigoplus_{r \ge 1}\mathcal{B}^r_0$.

We introduce the operator 
depending on $\mu, q$ which acts 
on (a completion of) $F^\ast$ and $F$: 
\begin{align}\label{cie}
X_{\alpha}
=\frac{(\mu)_{\alpha_1+\alpha_2}}
{(q)_{\alpha_1}(q)_{\alpha_2}}
\frac{(\mu \bb)_\infty}{(\bb)_\infty}\bk^{\alpha_2} 
\bc^{\alpha_1}\quad \text{for }\;
\alpha = (\alpha_1, \alpha_2) \in \Z_{\ge 0}^2.
\end{align}
We have written $X_{(\alpha_1, \alpha_2)}$ as 
$X_{\alpha_1, \alpha_2}$ for simplicity.
The ratio of the infinite product of operators here is to be understood via the 
series expansion 
\begin{align}\label{air}
\frac{(zw)_\infty}{(z)_\infty}
= \sum_{j \ge 0}\frac{(w)_j}{(q)_j}z^j.
\end{align}
\begin{theorem}\label{th:mp}
For any sector $m=(m_1,m_2)$ with $m_2\ge 1$,  
the stationary probability in (\ref{hrk}) 
of the continuous time $U_q(A^{(1)}_2)$ ZRP (\ref{skb}) or 
the discrete time $U_q(A^{(1)}_2)$ ZRP (\ref{dmt}) with 
homogeneous parameters $\mu_1=\cdots = \mu_L = \mu$
is expressed in the matrix product form
\begin{align}\label{sin}
{\mathbb P}(\sigma_1,\ldots, \sigma_L)
= \mathrm{Tr}(X_{\sigma_1}\cdots X_{\sigma_L}).
\end{align}
\end{theorem}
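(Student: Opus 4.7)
The plan is to prove the matrix product formula via the standard \emph{matrix Ansatz}: introduce a family of auxiliary operators on the Fock space, establish a local intertwining relation encoding the stochastic $R$-matrix, and propagate around the ring by cyclicity of the Fock-space trace. Concretely, I would look for operators $\hat{X}_\alpha = \hat{X}_\alpha(\lambda) \in \mathcal{B}$, indexed by $\alpha \in \Z_{\geq 0}^2$, satisfying the ``hat relation''
\begin{align*}
\sum_{\alpha, \beta \in \Z_{\geq 0}^2} \mathscr{S}(\lambda, \mu)^{\gamma, \delta}_{\alpha, \beta}\, X_\beta\, \hat{X}_\alpha \;=\; \hat{X}_\gamma\, X_\delta \qquad (\forall\, \gamma, \delta \in \Z_{\geq 0}^2).
\end{align*}
This is the $q$-boson analogue of the $RLL=LLR$ identity for a monodromy built from $\mathscr{S}(\lambda,\mu)$; compatibility of such a family as $\lambda$ varies is underwritten by the Yang--Baxter equation (\ref{sybe}). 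A natural ansatz for $\hat{X}_\alpha$ mirrors the structure of $X_\alpha$ in (\ref{cie}): a $\lambda$-dependent scalar prefactor times a monomial in $\bb, \bk, \bc$ together with an accompanying infinite product of $q$-bosons.

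Granted the hat relation, stationarity follows by telescoping. Expanding the action of (\ref{ngm}) on $|\overline{P}\rangle$, the coefficient of $|\alpha_1, \ldots, \alpha_L\rangle$ becomes
\begin{align*}
\sum_{\beta_1, \ldots, \beta_L} \sum_{\gamma_0, \ldots, \gamma_{L-1}} \prod_{i=1}^L \mathscr{S}(\lambda, \mu)^{\gamma_i, \alpha_i}_{\gamma_{i-1}, \beta_i}\, \mathrm{Tr}\bigl(X_{\beta_1} \cdots X_{\beta_L}\bigr),
\end{align*}
with the auxiliary indices closing cyclically, $\gamma_L \equiv \gamma_0$. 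Inserting $\hat{X}_{\gamma_0}$ inside the trace between $X_{\beta_L}$ and $X_{\beta_1}$ and applying the hat relation successively at sites $L, L-1, \ldots, 1$ absorbs each $\mathscr{S}$-weight and converts the $X_{\beta_i}$'s into $X_{\alpha_i}$'s while marching the hat operator through the product. After one full revolution cyclicity of the Fock-space trace eliminates the remaining $\hat{X}_{\gamma_0}$, leaving exactly $\mathrm{Tr}(X_{\alpha_1} \cdots X_{\alpha_L})$, which proves stationarity for the discrete-time model. The continuous-time case then follows by taking the logarithmic derivative at $\lambda=1$ or $\lambda=\mu$ as in (\ref{Hs}), using $[T(\lambda), T(\lambda')] = 0$ to preserve the same stationary state.

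The main obstacle will be proving the hat relation itself. Projecting both sides onto Fock matrix elements $\langle m | \cdot | m' \rangle$ and expanding the infinite product $(\mu\bb)_\infty/(\bb)_\infty$ via (\ref{air}), the relation decomposes into a family of identities among basic $q$-hypergeometric sums assembled from the factor $q^{\varphi(\beta-\gamma, \gamma)}$, the products of $q$-binomials and the $q$-Pochhammer ratios in $\Phi_q$ of (\ref{mho}). I would expect these to be dispatched by the $q$-binomial theorem and $q$-Chu--Vandermonde summation, possibly after re-summation rearrangements using the sum rule (\ref{syk}). Well-definedness of the trace in the presence of infinite $q$-boson products is ensured by the grading (\ref{Bd}) together with the weight conservation (\ref{wc}): the trace selects only the $\mathcal{B}^r_0$-components, forcing an equal count of $\bb$'s and $\bc$'s, so within any fixed sector only finitely many terms of each infinite product contribute. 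As a sanity check, the $n=1$ specialization, where $X_\alpha$ collapses to a scalar, should recover the known factorized stationary measure of the chipping model \cite{P}.
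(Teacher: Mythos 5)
First, a point of reference: the paper does not actually prove Theorem \ref{th:mp} — it is quoted from \cite[eq.~(42)]{KO1} — so the comparison is really against that cited proof. Your overall strategy (a local intertwining ``hat'' relation propagated around the ring, with the trace over the $q$-boson Fock space) is indeed the mechanism used there and in \cite{KO2,KMO2}. The structural point you do not identify is that for this model no new family $\hat{X}_\alpha$ is needed: the relation that actually holds is the Zamolodchikov--Faddeev exchange relation
$\sum_{\alpha,\beta}\mathscr{S}(\lambda,\mu)^{\gamma,\delta}_{\alpha,\beta}\,X_\alpha(\lambda)X_\beta(\mu)=X_\delta(\mu)X_\gamma(\lambda)$,
where $X_\alpha(\mu)$ is (\ref{cie}) itself with $\mu$ promoted to a spectral parameter. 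That is the key insight of \cite{KO1}, and it is also where essentially all of the work lies: verifying this identity for the explicit infinite-product operators is a nontrivial computation involving the commutation of $(\mu\bb)_\infty/(\bb)_\infty$ past $\bk^{\alpha_2}\bc^{\alpha_1}$ and a precise matching of the factor $q^{\varphi(\beta-\gamma,\gamma)}$; deferring it to ``$q$-Chu--Vandermonde should dispatch it'' leaves the theorem unproved.

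There is also a genuine gap in your stationarity argument. The quantity to be reproduced is $\sum_{\beta}T^{\alpha_1,\ldots,\alpha_L}_{\beta_1,\ldots,\beta_L}\,\mathrm{Tr}(X_{\beta_1}\cdots X_{\beta_L})$, which contains no hat operator; ``inserting $\hat{X}_{\gamma_0}$ inside the trace'' changes the quantity being computed, and after one revolution the telescoping leaves you with $\sum_{\gamma_0}\mathrm{Tr}\bigl(\hat{X}_{\gamma_0}X_{\alpha_1}\cdots X_{\alpha_L}\bigr)$: cyclicity relocates $\hat{X}_{\gamma_0}$ around the trace but cannot eliminate it, so the argument does not close as written. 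A correct closing step needs an extra ingredient, for instance: use the exchange relation to show the matrix product state is a common eigenvector of the commuting family $T(\lambda|\mu,\ldots,\mu)$, note that $T(\mu|\mu,\ldots,\mu)$ is the cyclic shift (since $\mathscr{S}(\mu,\mu)$ is the permutation) under which the trace is manifestly invariant, and then use the commutativity (\ref{mri}) together with non-negativity of the entries and the stochasticity of $T$ (Perron--Frobenius) to conclude the eigenvalue is $1$ for all $\lambda$; alternatively, recast the local relation in commutator (divergence) form so that the sum over sites telescopes to zero inside the trace. Your remarks on well-definedness of the trace via the grading (\ref{Bd}) and on the $n=1$ degeneration are correct but do not repair these two gaps.
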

This result was obtained in \cite[eq.(42)]{KO1}.
The formula (\ref{cie}) depends on $\alpha_1$ and $\alpha_2$ 
quite differently, indicating distinct features 
between the two classes of particles.
Interestingly this is a reflection of a tiny asymmetry 
of the hopping rate (\ref{wp}) and (\ref{wm})
under the interchange 
$(\alpha_1, \beta_1, \gamma_1) \leftrightarrow 
(\alpha_2, \beta_2, \gamma_2)$. 

\begin{example}
Consider the first two terms in (\ref{knh}).
\begin{align*}
{\mathbb P}(\emptyset, \emptyset, 112) &= 
\frac{(\mu)_3}{(q)_1(q)_2}\mathrm{Tr}\Bigl(
\frac{(\mu \bb)^3_\infty}{(\bb)^3_\infty}\bk\bc^2\Bigr)
= 3\frac{(\mu)_3}{(q)_1(q)_2}
\Bigl(\frac{(\mu)_1^2}{(q)_1^2} +
\frac{(\mu)_2}{(q)_2}\Bigr) 
\mathrm{Tr}(\bb^2 \bk\bc^2),\\
{\mathbb P}(\emptyset, 1,12) &= 
\frac{(\mu)_1(\mu)_2}{(q)_1^3}\mathrm{Tr}\Bigl(
\frac{(\mu \bb)_\infty^2}{(\bb)_\infty^2} \bc 
\frac{(\mu \bb)_\infty}{(\bb)_\infty} \bk \bc \Bigr)\\
&= \frac{(\mu)_1(\mu)_2}{(q)_1^3}\left(
\Bigl(2\frac{(\mu)_2}{(q)_2} +\frac{(\mu)_1^2}{(q)_1^2}\Bigr)
\mathrm{Tr}(\bb^2 \bc \bk \bc)+
2\frac{(\mu)_1^2}{(q)_1^2}\mathrm{Tr}(\bb \bc \bb \bk \bc)
+\frac{(\mu)_2}{(q)_2}\mathrm{Tr}(\bc \bb^2 \bk \bc)\right). 
\end{align*}
Setting $\mu = -\kappa$ and substituting 
$\mathrm{Tr}(\bb^2 \bk \bc^2) = q^{-1}
\mathrm{Tr}(\bb^2 \bc \bk \bc) = q^{-2}
\mathrm{Tr}(\bk \bb^2 \bc^2) = (1-q^3)^{-1}$ and 
$\mathrm{Tr}(\bb \bc \bb \bk \bc) = (1+q^2)^2\frac{(q)_1(q)_2}{(q)_4}$,
we reproduce the first two terms in (\ref{knh}) after removing the 
common factor $(q)_1(q)_2(q)_3/(1+\kappa)^2$.
\end{example}

\section{Density and currents of second class particles:\\
Formulation of the problem}
\label{sec4}

Now we come to the main theme of the paper, 
the stationary quantities in 
a grand canonical ensemble with respect to the second class particles 
in the infinite volume limit.
The first class particles are kept finite and regarded as {\em defects}.
In this section we give a general formulation of the 
problem only deferring the results and their derivation to subsequent sections.
We fix the parameters $0 < q,\mu <1$ and 
consider the $\epsilon=+1$ case of (\ref{Hs})--(\ref{ri2}) 
for the continuous time ZRP (\ref{skb}).

Let us introduce the generating series of $X_{m,n}$ (\ref{cie}) 
with respect to the 
second class particles with fugacity $y$:
\begin{align}\label{Adef}
A_m = \sum_{n \ge 0}X_{m,n}y^n
= g_m \frac{(\mu \bb)_\infty}{(\bb)_\infty}
\frac{(q^m \mu y \bk)_\infty}{(y \bk)_\infty}\bc^m,
\qquad
g_m = \frac{(\mu)_m}{(q)_m}.
\end{align}
The quantity $g_m$ introduced here will be used very frequently in the sequel.

We consider the conditional probability in the stationary states supposing that
there are $d_i$ first class particles at site $i$ for 
$i=1,\ldots, s$, and no first class particle is present elsewhere.
So they form a fixed cluster of defects whose spatial extension is $s$ and
the total number of defect particles is $d_1+\cdots + d_s$.
In terms of the site variable 
$\sigma_i = (\sigma_{i,1}, \sigma_{i,2})$, 
the condition is expressed as $\sigma_{i,1}=\theta(1 \le i \le s)d_i$,
allowing $\sigma_{i,2}$ still to fluctuate everywhere. 
We assume that $d_1 \ge 1$ and $d_s \ge 1$ to fix the location of the
defect cluster but allow the choice $d_i=0$ for $1<i < s$.
The basic quantity is the conditional probability that a given site 
$r$ contains exactly $n$ second class particles.
In the customary notation for the probability 
$P({\rm A}|{\rm B})$ of the event A under the condition B, we set
\begin{align}\label{prn}
P(r,n) := 
\begin{cases}
P(\sigma_r=(d_r,n)|\sigma_{i,1}=\theta(1 \le i \le s)d_i) & \text{if } 
1 \le r \le s\\
P(\sigma_r=(0,n)|\sigma_{i,1}=\theta(1 \le i \le s)d_i)
& \text{otherwise}.
\end{cases}
\end{align}
There are three distinguish 
regions in the infinite volume limit $L \rightarrow \infty$ as
\begin{align}\label{reg}
{\rm I}:\; 1 \le r \le s,\qquad
{\rm II}: \; r > s, \qquad
{\rm III}: \; r \le 0,
\end{align}
where $r \le 0$ should be understood as the site $r+L \in \Z_L$
in the limit $L \rightarrow \infty$.
They correspond to the inside, the right and the left side 
of the defect cluster, respectively.
The quantity (\ref{prn}) will be denoted by 
$P_{\rm I}(r,n), P_{\rm II}(r,n)$ and $P_{\rm III}(r,n)$ accordingly.

From the matrix product formula (\ref{sin}) and 
the definition (\ref{Adef}), we have
\begin{align}
P_{\rm I}(r,n)&= \lim_{L \rightarrow \infty}
\frac{y^n\mathrm{Tr}\bigl(A_{d_1}\cdots A_{d_{r-1}}X_{d_r,n}
A_{d_{r+1}}\cdots A_{d_s}
A_0^{L-s}\bigr)'}
{\mathrm{Tr}\bigl(A_{d_1}\cdots A_{d_s} 
A_0^{L-s} \bigr)'},
\label{Pmiddle}\\
P_{\rm II}(r,n) &= \lim_{L \rightarrow \infty}
\frac{y^n\mathrm{Tr}\bigl(A_{d_1}\cdots A_{d_s}
A_0^{r-s-1}X_{0,n} A_0^{L-r}\bigr)'}
{\mathrm{Tr}\bigl(A_{d_1}\cdots A_{d_s} 
A_0^{L-s} \bigr)'},
\label{Pright}\\
P_{\rm III}(r,n)&= \lim_{L \rightarrow \infty}
\frac{y^n\mathrm{Tr}\bigl(X_{0,n}A_0^{|r|} 
A_{d_1}\cdots A_{d_s} 
A_0^{L-|r|-s-1}\bigr)'}
{\mathrm{Tr}\bigl(A_{d_1}\cdots A_{d_s} 
A_0^{L-s} \bigr)'},
\label{Pleft}
\end{align}
where $(\cdots)'$ is defined after (\ref{Bd}).
In view of the matrix product operators (\ref{cie}) and (\ref{Adef}),
the prime restricts the ensemble to those sectors containing at least one 
second class particle for which the traces become finite.
The formulas (\ref{Pright})--(\ref{Pleft}) fix the relative weight of such sectors,
thereby specify what is meant by the ``grand canonical ensemble" 
with respect to the second class particles with fugacity $y$.
Obviously for any $r$ the normalization 
$\sum_{n \ge 0}P(r,n)=1$ should be fulfilled in each region.

Once the probability $P(r,n)$ is obtained, one can evaluate various
physical quantities.
In this paper we investigate 
the expectation number and currents of the second class particles at site $r$.
The former is defined by
\begin{align}\label{occ}
\rho(r) = \sum_{n \ge 0}n P(r,n).
\end{align}
According to the regions, it will be denoted by
$\rho_{\rm I}(r), \rho_{\rm II}(r)$ and $\rho_{\rm III}(r)$.
We call them {\em density} for simplicity.
As for the current, there are two components $J(r)_+$ and $J(r)_-$ 
associated with the local Markov matrices 
$h_+$ (\ref{ri1}) and $h_-$ (\ref{ri2}) with $\epsilon=+1$ respectively:
\begin{align}\label{jr}
J(r)_\pm = 
\begin{cases}
\sum_{n \ge l \ge 1}l w_{\pm}((0,l)|(d_r ,n))P(r,n) & \text{if } 
1 \le r \le s,\\
\sum_{n \ge l \ge 1}l  w_{\pm}((0,l)|(0,n))P(r,n)
& \text{otherwise}.
\end{cases}
\end{align} 
They sum up the contributions from the $l$ hopping second class particles
out of total $n+\theta(1 \le r \le s)d_r$ 
occupants weighted by the probability $P(r,n)$ and 
the hopping rate  $w_\pm$ in (\ref{wp}) and (\ref{wm}).
From (\ref{skb}) and the picture before (\ref{wp}), 
the total current from the site $r$ to $r+1$ 
is obtained by the superposition 
\begin{align}\label{fk}
J(r) = a J(r)_+ - b J(r+1)_-.
\end{align}
Thus it suffices to investigate $J(r)_+$ and $J(r)_-$ separately.
We will also write them as 
$J_{\rm I}(r)_\pm, J_{\rm II}(r)_\pm$ and 
$J_{\rm III}(r)_\pm$ according to 
the regions.

\section{Defect-free case} \label{sec5}

First we illustrate the analysis on the defect-free case $s=0$ as a warm-up.
It corresponds to the single species model, 
and some of the contents are well known by 
earlier works, e.g. \cite{BC,EH,P}.

In the absence of defects, 
the system acquires the $\Z_L$ translational symmetry.
As the result, the probabilities (\ref{Pright}) and (\ref{Pleft}) 
are equal and independent of $r$.
So we simply denote it by $P(n)$.
It is calculated as
\begin{align}
P(n) &= \lim_{L \rightarrow \infty}
\frac{y^n\mathrm{Tr}\bigl(X_{0,n}A_0^{L-1}\bigr)'}
{\mathrm{Tr}\bigl(A_0^L \bigr)'}
=  \lim_{L \rightarrow \infty}
\frac{y^ng_n\mathrm{Tr}\Bigl(
\frac{(\mu \bb)_\infty}{(\bb)_\infty}\bk^n 
\bigl( \frac{(\mu \bb)_\infty}{(\bb)_\infty}
\frac{(\mu y \bk)_\infty}{(y \bk)_\infty}\bigr)^{L-1}\Bigr)'}
{\mathrm{Tr}\Bigl(
\bigl( \frac{(\mu \bb)_\infty}{(\bb)_\infty}
\frac{(\mu y \bk)_\infty}{(y \bk)_\infty}\bigr)^{L}\Bigr)'}
\nonumber\\
&= \lim_{L \rightarrow \infty}
\frac{y^ng_n\mathrm{Tr}\Bigl(
\bk^n \bigl(\frac{(\mu y \bk)_\infty}{(y \bk)_\infty}\bigr)^{L-1}\Bigr)'}
{\mathrm{Tr}\Bigl(
\bigl(\frac{(\mu y \bk)_\infty}{(y \bk)_\infty}\bigr)^{L}\Bigr)'}
= \lim_{L \rightarrow \infty}
\frac{y^ng_n\sum_{m \ge 0}q^{mn}(\Lambda(q^m y)^{L-1}-\delta_{n,0})}
{\sum_{m \ge 0}(\Lambda(q^m y)^{L}-1)}
\nonumber\\
&= y^n g_n\Lambda(y)^{-1} 
\lim_{L \rightarrow \infty}
\frac{\sum_{m \ge 0}q^{mn}(\eta_m^{L-1}-\delta_{n,0}\eta_\infty^{L-1})}
{\sum_{m \ge 0}(\eta_m^{L}-\eta_\infty^{L})},
\label{P01}
\end{align}
where the prime is defined under (\ref{Bd}) and we have set
\begin{align}
\Lambda(y) &= \langle 0 |  \frac{(\mu y \bk)_\infty}{(y \bk)_\infty}
|0 \rangle = \frac{(\mu y)_\infty}{(y)_\infty},
\label{Ladef}\\
\eta_m & = \eta_m(y) = \Lambda(q^m y)\Lambda(y)^{-1} 
= \frac{(y)_m}{(\mu y)_m}.
\label{etadef}
\end{align}
These quantities will be utilized frequently 
in the subsequent calculations.
We will abbreviate $\eta_m(q^ky)$ to $\eta_m$ 
{\em if and only if} $k=0$.

We suppose $0 < y < 1$ whose consistency will be confirmed shortly.
Then 
\begin{align}\label{lin}
1=\eta_0 > \eta_1 > \eta_2 > \cdots \ge 0
\end{align}
holds due to $0 < q, \mu < 1$.
In Appendix \ref{app:ls} we show that 
the limit $\lim_{L\rightarrow \infty}$
and the infinite sum $\sum_{m \ge 0}$ in (\ref{P01})
may be interchanged, i.e.,
\begin{align}\label{ex1}
\lim_{L\rightarrow \infty}\sum_{m \ge 0}q^{mn}
(\eta_m^{L}-\delta_{n,0}\eta^L_\infty)
=\sum_{m \ge 0}q^{mn}\lim_{L\rightarrow \infty}
(\eta_m^{L}-\delta_{n,0}\eta^L_\infty).
\end{align}
Since this is just 
$\sum_{m \ge 0}q^{mn}\delta_{m,0}=1$, 
we obtain the probability
\begin{align}\label{P0}
P(n) = y^n \frac{(\mu)_n}{(q)_n}\frac{(y)_\infty}{(\mu y)_\infty}.
\end{align}
The correct normalization 
$\sum_{n \ge 0}P(n) = 1$ has been achieved 
by virtue of (\ref{air}).
This formally agrees with the probability \cite[eq.(53)]{BC} 
upon identification of the parameter $\alpha$ 
there with the fugacity $y$ here.

Next we relate the density $\rho$ of the second class particles
to the fugacity $y$.
In the grand canonical ensemble under consideration, it is evaluated as
\begin{align*}
\rho &= \lim_{L \rightarrow \infty}\frac{1}{L} 
\frac{\sum_{n_1,\ldots, n_L \ge 0}
(n_1+\cdots + n_L) y^{n_1+\cdots + n_L}
\mathrm{Tr}\bigl(X_{0,n_1}\cdots X_{0,n_L}\bigr)'}
{\mathrm{Tr}\bigl(A_0^L \bigr)'}
\\
&= \lim_{L \rightarrow \infty}\frac{1}{L}
y \frac{\partial}{\partial y}\log 
\mathrm{Tr}(A_0^L)'.
\end{align*}
As mentioned before (\ref{ex1}),
the ``partition function" $\mathrm{Tr}(A_0^L)'$ here
may be replaced with $\Lambda(y)^L$ as $L$ goes to infinity,
leading to 
\begin{align}
\rho &= y \frac{\partial}{\partial y}\log \Lambda(y) = f(y)- f(\mu y)
=\sum_{i\ge 0}
\frac{(1-\mu)yq^i}{(1-y q^i)(1-\mu y q^i)},
\label{rhof}\\
f(\zeta) &= -\zeta \frac{\partial}{\partial \zeta}\log (\zeta)_\infty
= \sum_{i\ge 1}\frac{\zeta^i}{1-q^i}
=\sum_{i \ge 0} \frac{\zeta q^i}{1-\zeta q^i}.
\label{fdef}
\end{align}
The $f(\zeta)$ is a version of the $q$-digamma function.
It monotonously grows from $0$ to $\infty$ 
as $\zeta$ changes from $0$ to $1$ behaving as  
$f(\zeta) \simeq \frac{\zeta}{1-q}\, (\zeta \searrow 0)$ and 
$f(\zeta) \simeq \frac{1}{1-\zeta}\, (\zeta \nearrow 1)$.
Thus the fugacity and density are related asymptotically as
\begin{align}\label{ry}
\rho \simeq \frac{1-\mu}{1-q}y\;\;\;
(\rho \rightarrow 0,\, y \rightarrow 0), \qquad
\rho \simeq \frac{1}{1-y}\;\;\; (\rho \rightarrow \infty,\,
y \rightarrow 1).
\end{align}
The difference $f(y)- f(\mu y)$ in (\ref{rhof}) is similarly 
increasing monotonously from $0$ to $\infty$ for 
$y \in (0,1)$.
Thus $\rho \in (0,\infty)$ is in one-to-one correspondence 
with $y \in (0,1)$, which confirms the consistency of the assumption 
made before (\ref{lin}).
It implies that there is no phase transition typically recognized as condensation 
(cf. \cite{EH}) in ZRPs.
To summarize, 
(\ref{rhof}) determines the relation $y=y(\rho)$ and $\rho=\rho(y)$.

The density $\rho(r)$ (\ref{occ}) is also independent of $r$.
From (\ref{P0}) it is evaluated as
\begin{align*}
\rho(r) = \sum_{n \ge 0} nP(n) = \frac{(y)_\infty}{(\mu y)_\infty}
\sum_{n\ge 0}n y^n\frac{(\mu)_n}{(q)_n} = \rho
\end{align*}
in terms of the average density $\rho$ in (\ref{rhof}) confirming 
the consistency.

Now we are ready to evaluate the currents of the second class particles
(\ref{jr}).
Since they are independent of the site $r$, we simply write it as 
\begin{align*}
J_\pm = \sum_{n \ge l \ge 1}l w_{\pm}((0,l)|(0,n))P(n).
\end{align*} 
For instance $J_-$ is computed from (\ref{wm}) and (\ref{P0}) as
\begin{equation}\label{J0m}
\begin{split}
J_- &= \frac{(y)_\infty}{(\mu y)_\infty}
\sum_{n \ge l \ge 1}y^n \frac{(\mu)_n}{(q)_n}
\frac{l(q)_{l-1}}{(\mu q^{n-l})_l}
\binom{n}{l}_q
\\
&=  \frac{(y)_\infty}{(\mu y)_\infty}
\sum_{n -l \ge 0}
y^{n-l}\frac{(\mu)_{n-l}}{(q)_{n-l}} 
\sum_{l \ge 1}\frac{l y^l}{1-q^l}
=  \sum_{l \ge 1}\frac{l y^l}{1-q^l},
\end{split}
\end{equation}
where the last equality is due to (\ref{air}).
A similar calculation for $J_+$ leads to 
current-density relation, a basic characteristic of the system, given as 
\begin{align}
J_+ = \mu^{-1}h(\mu y),\qquad 
J_- = h(y)
\label{J0}
\end{align}
via $y=y(\rho)$ (\ref{rhof}).
Here the function $h(\zeta)$ is the derivative of the $q$-digamma function 
$f(\zeta)$ in (\ref{fdef}):
\begin{align}\label{hdef}
h(\zeta) = \zeta \frac{df(\zeta)}{d\zeta} = 
\sum_{i \ge 1}\frac{i \zeta^i}{1-q^i} = 
\sum_{i \ge 0}\frac{\zeta q^i}{(1-\zeta q^i)^2}.
\end{align}
It obviously satisfies the relation
\begin{align}\label{hh}
h(q^i \zeta) = h(\zeta) 
- \sum_{k=0}^{i-1}\frac{\zeta q^k}{(1-\zeta q^k)^2}\qquad (i \ge 0).
\end{align}
The result (\ref{J0m}) is restated as 
\begin{align}\label{wph}
\sum_{n \ge l \ge 1}l w_-((0,l)|(0,n))P(n) = h(y), 
\end{align}
which is $\mu$-independent 
despite that $w_-((0,l)|(0,n))$ (\ref{wm}) 
and $P(n)$ (\ref{P0}) depend on $\mu$ individually.
An expression similar to 
the total current $aJ_+ - bJ_-$ (\ref{fk}) with 
$J_\pm$ given by (\ref{J0}) has also been obtained in 
\cite[Sec.4.1]{BC}.

The function $h(\zeta)$ is monotonously increasing and
tends to $\infty$ as $\zeta$ approaches 1 from below.
Consequently 
the both currents in (\ref{J0}) grow monotonously with the density
$\rho$ via (\ref{rhof}).
Their leading asymptotic behavior is given by
\begin{align}
&J_+ \simeq J_- \simeq \frac{\rho}{1-\mu}
\;\;\text{as }\;\; \rho \rightarrow 0,
\label{jlow}\\
&J_+ \rightarrow \mu^{-1}h(\mu),\quad
J_- \simeq \rho^2 \;\;\text{as }\;\; \rho \rightarrow \infty.
\label{jhi}
\end{align}
So $J_+$ converges to a finite value 
whereas $J_-$ diverges 
as the density $\rho$ gets large.
Such large $\rho$ behavior is out of question 
in asymmetric simple exclusion processes where the well known relation
$J=\mathrm{const}\,\rho(1-\rho)$ makes sense only for $0\le \rho \le 1$.

\begin{figure}[H]
\begin{center}
\includegraphics[scale=0.55]{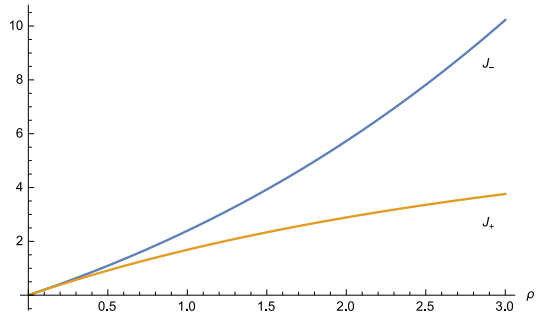}
\caption{Comparison of $J_+$ and $J_-$ in (\ref{J0}) 
as functions of the density $\rho$ 
for $(q,\mu)=(0.7, 0.5)$.}
\end{center}
\label{jbare1}
\end{figure}

\begin{figure}[H]
\begin{tabular}{cc}
\begin{minipage}[t]{0.45\hsize}
\begin{center}
\includegraphics[scale=0.6]{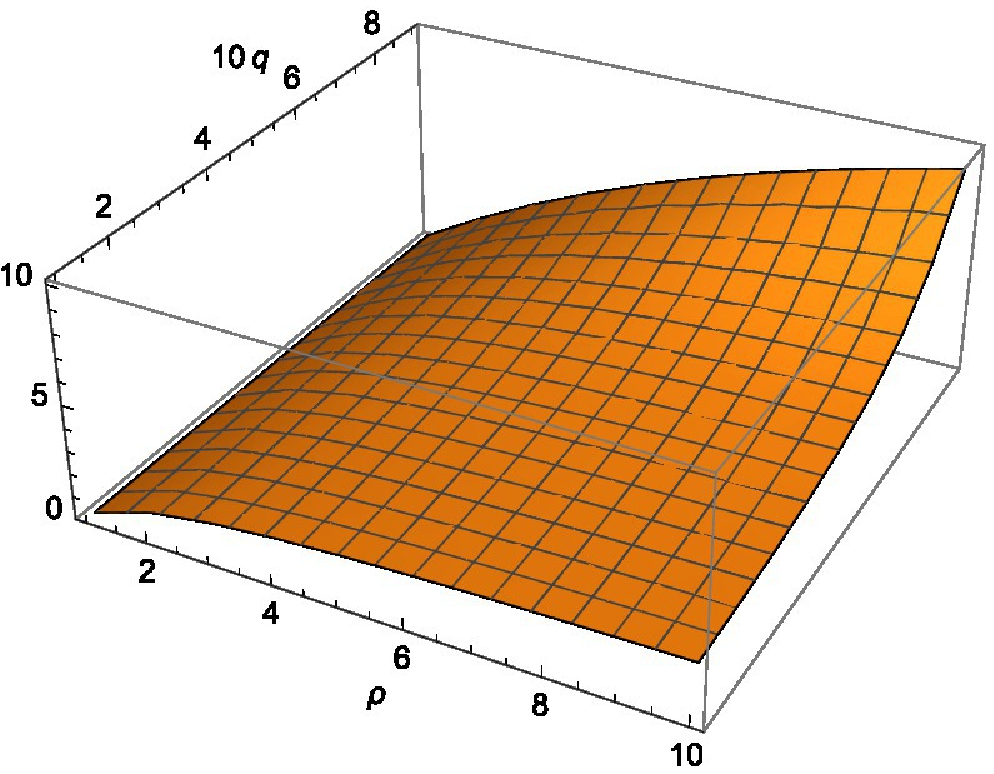}
\end{center}
\end{minipage}
&\;\;
\begin{minipage}[t]{0.45\hsize}
\begin{center}
\includegraphics[scale=0.6]{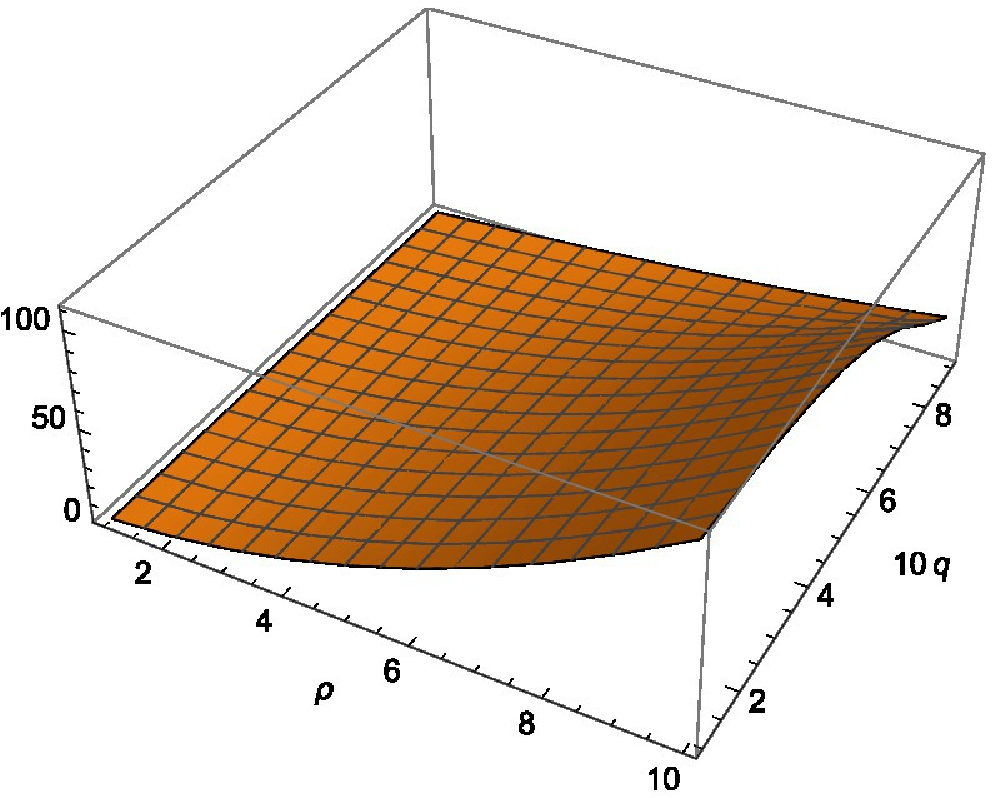}
\end{center}
\end{minipage}
\end{tabular}
\caption{The current $J_+$ (left) and $J_-$ (right) in 
(\ref{J0}) with $\mu=0.4$ in the range $0.1<\rho< 10$ and $0.1 \le q \le  0.9$.}
\label{jbare2}
\end{figure}

\begin{figure}[H]
\begin{tabular}{cc}
\begin{minipage}[t]{0.45\hsize}
\begin{center}
\includegraphics[scale=0.6]{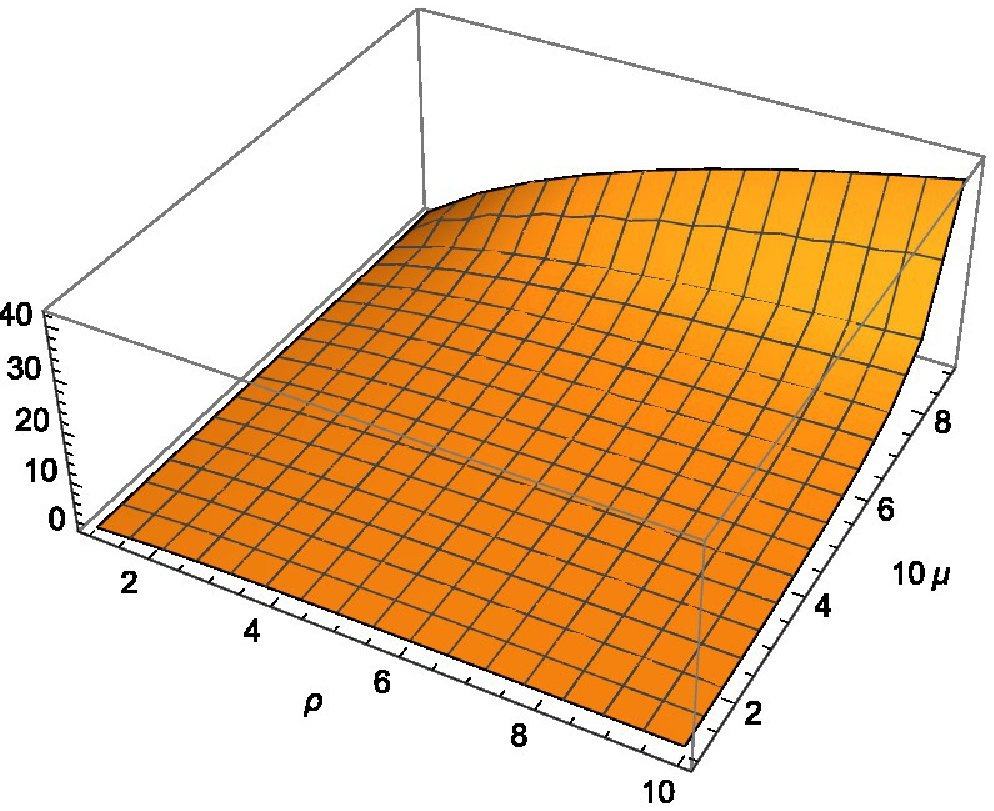}
\end{center}
\end{minipage}
&\;\;
\begin{minipage}[t]{0.45\hsize}
\begin{center}
\includegraphics[scale=0.6]{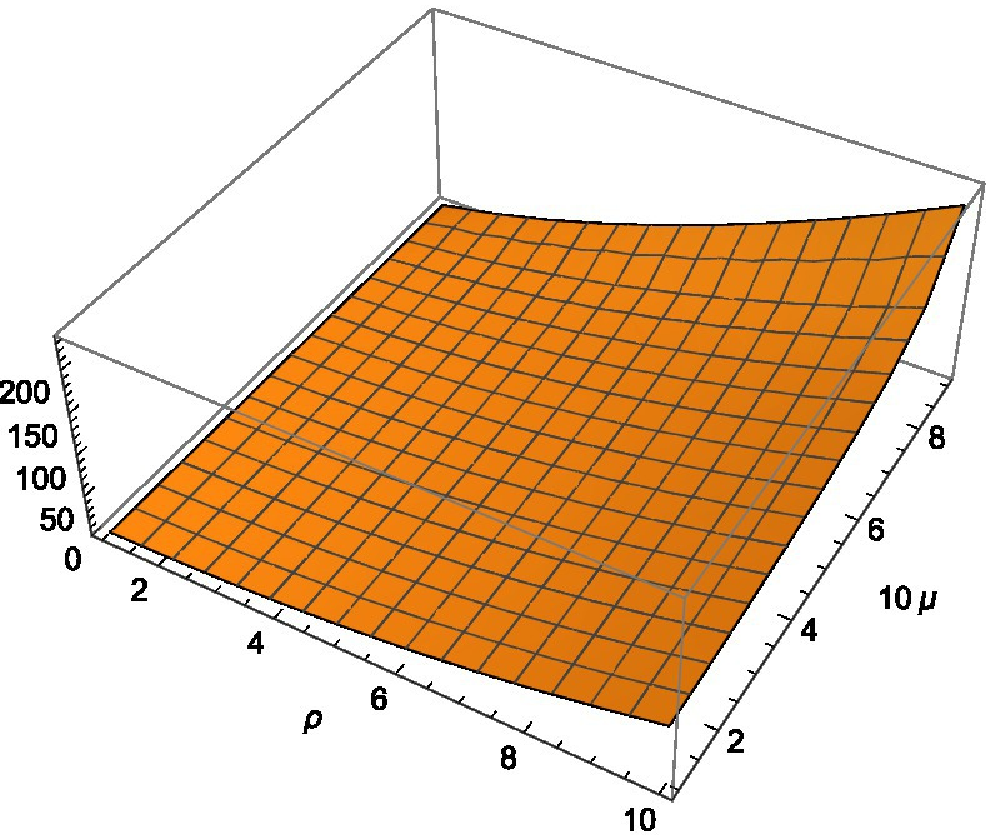}
\end{center}
\end{minipage}
\end{tabular}
\caption{The current $J_+$ (left) and $J_-$ (right) in 
(\ref{J0}) with $q=0.5$ in the range $0.1<\rho< 10$ and $0.1 \le \mu \le  0.9$.}
\label{jbare3}
\end{figure}

We close the section with the description on the limiting cases 
$q, \mu \rightarrow 0,1$.
We shall only give the leading terms.
\begin{alignat*}{3}
{\rm (i)}\; q\rightarrow 0: 
\;
\rho &= \frac{(1-\mu)y}{(1-y)(1-\mu y)},
\quad
&J_{\pm} &= \frac{y}{(1-\mu^{(1\pm 1)/2} y)^2},
\quad 
&P(n) &= y^n\frac{1-y}{1-\mu y} (1-\mu)^{\theta(n \ge 1)},
\\
{\rm (ii)}\; q \rightarrow 1: 
\;
y &\simeq -\frac{\rho\log q}{1-\mu}, 
\quad
&J_\pm &= \frac{\rho}{1-\mu},
\quad
&P(n) &= \frac{\rho^n\mathrm{e}^{-\rho}}{n!},
\\
{\rm (iii)} \; \mu \rightarrow 0:
\;
\rho &= f(y), \quad
&J_+ & = \frac{y}{1-q},
\;\,
J_- =h(y),
\quad
&P(n) &= y^n\frac{(y)_\infty}{(q)_n},
\\
{\rm (iv)} \; \mu \rightarrow 1:
\;
y&  \simeq 1-\sqrt{\frac{1-\mu}{\rho}},
\quad
&J_\pm& \simeq \frac{\rho}{1-\mu},
\quad
&P(n)& = \delta_{n,0}.
\end{alignat*}

\section{Density and currents: Results in general case}\label{sec:results}

Let us proceed to the general case in which 
$d_1,\ldots, d_s$ defect (first class) particles are present at the sites 
$1,\ldots, s$.
We present the final results on the 
conditional probability  $P(r,n)$ (\ref{prn}), 
the local density $\rho(r)$ (\ref{occ}) 
and the currents $J(r)_\pm$ (\ref{jr}) in the
regions I, II and III in  
Theorem \ref{th:naka}, \ref{th:migi} and \ref{th:hidari}.
Some of them look bit messy but the point is that 
they are always {\em finite} sums of 
appropriate building blocks, which 
allow accurate numerical evaluations.
Another point is that they are expressed in the form that elucidates 
the difference from the defect-free case $s=0$.
We continue to stay in the range $0<q,\mu, y <1$ 
and use the following quantities that have already appeared in 
the previous section:

\vspace{0.1cm}
$\rho$: average density of the second class particles in the entire system,

$y \in (0,1)$: fugacity of the second class particles 
determined from $\rho$
via (\ref{rhof}),

$P(n)$: probability in the defect-free case (\ref{P0}), 

$J_\pm$: currents in the defect-free case (\ref{J0}),

$h(\zeta)$: derivative of $q$-digamma function describing the currents (\ref{hdef}),

$\eta_j$: quantity controlling the decay of correlations (\ref{etadef}).

\vspace{0.2cm}\noindent
In addition to them the following functions, detailed in 
Section \ref{sec6}, will be the basic ingredients:

\vspace{0.1cm}
$\phi(l|m)$: constituent of $G_{m,l}(d_1,\ldots, d_s)$ (\ref{phidef}),

$G_{m,l}(d_1,\ldots, d_s)$: building block incorporating the effect of defects 
(\ref{Gdef}).

\vspace{0.2cm}\noindent
The relation (\ref{rhof}) between the fugacity $y$ and the 
average density $\rho$ of the second class particles was originally derived for the 
defect-free case.
We will justify its use in the presence of defects 
in Proposition \ref{pr:excess} and the comments following it.

\subsection{Main results}
First we consider the region inside the defect cluster.

\begin{theorem}\label{th:naka}
In the region I $(1 \le r \le s)$  the following formulas are valid:
\begin{align}
&P_{\rm I}(r,n) = y^n\frac{(q^{d_r}\mu)_n(y)_\infty}
{(q)_n(\mu y)_\infty}
\sum_{m}q^{n(m-d_r)}\frac{(\mu y)_m}{(y)_{m-d_r}}
G_{0,m}(d_1,\ldots, d_{r}),
\label{Pk2}\\
&\rho_{\rm I}(r) -\rho=
\sum_{m}G_{0,m}(d_1,\ldots, d_{r})
\Bigl(\sum_{k=0}^{m-1}\frac{\mu y q^k}{1-\mu y q^k}
-\sum_{k=0}^{m-d_r-1}\frac{y q^k}{1-y q^k}\Bigr),
\label{N2}\\
&J_{\rm I}(r)_{+} -J_+ =
-\sum_{m}
G_{0,m}(d_1,\ldots, d_{r})
\sum_{k=0}^{m-1}
\frac{y q^k}{(1-\mu y q^k)^2},
\label{Jf21}\\
&J_{\rm I}(r)_{-}  -J_- = 
-\sum_{m}
G_{0,m}(d_1,\ldots, d_{r})
\sum_{k=0}^{m-d_r-1}
\frac{y q^k}{(1-y q^k)^2},
\label{Jf22}
\end{align}
where the sums $\sum_{m}$  
extend over $m \in \Z_{\ge 0}$ satisfying 
$d_{r}\le m \le d_1+\cdots + d_{r}$.
\end{theorem}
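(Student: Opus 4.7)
The plan is to begin from the matrix-product formula (\ref{Pmiddle}) for $P_{\rm I}(r,n)$ and reduce it to a finite vacuum-to-vacuum computation, in direct parallel with the defect-free derivation (\ref{P01})--(\ref{P0}). Using cyclicity of the trace, I would rotate so that the long factor $A_0^{L-s}$ is contiguous, writing the numerator as
\begin{align*}
y^n\,\mathrm{Tr}\bigl(X_{d_r,n}A_{d_{r+1}}\cdots A_{d_s}A_0^{L-s}A_{d_1}\cdots A_{d_{r-1}}\bigr)'.
\end{align*}
Inspecting (\ref{Adef}), the operator $A_0$ is upper-triangular on the Fock basis with diagonal entries $\Lambda(q^m y)$; after normalising by $\Lambda(y)^{L-s}$ only the $m=0$ diagonal mode should survive, since $\eta_m^{L-s}\to 0$ for $m\ge 1$ by (\ref{lin}). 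The interchange of limit and sum carried out for the defect-free case in (\ref{ex1}) and Appendix \ref{app:ls} should extend with only cosmetic changes, because the bounded factors $A_{d_{r+1}}\cdots A_{d_s}$ and $A_{d_1}\cdots A_{d_{r-1}}$ affect only a finite range of Fock modes.

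After this reduction, both numerator and denominator in (\ref{Pmiddle}) are finite vacuum expectation values up to the common prefactor $\Lambda(y)^{L-s}=((\mu y)_\infty/(y)_\infty)^{L-s}$, whose cancellation is responsible for the $(y)_\infty/(\mu y)_\infty$ appearing in (\ref{Pk2}). Substituting $X_{d_r,n}$ from (\ref{cie}), the factor $\bk^n\bc^{d_r}$ acts on an intermediate Fock state $|m\rangle$, producing the scalar $q^{n(m-d_r)}$ after $\bc^{d_r}$ is commuted past the $\bb$-factors to its left and $\bk^n$ acts diagonally on the resulting mode. Expanding each $(\mu\bb)_\infty/(\bb)_\infty$ via (\ref{air}) and moving the $\bb^{j_i}$'s past the $\bc^{d_i}$'s using (\ref{akn}), the bookkeeping of the resulting indices should package exactly into the function $G_{0,m}(d_1,\ldots,d_r)$ assembled from $\phi(l|m)$ in (\ref{phidef}) and (\ref{Gm}). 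The summation range $d_r\le m\le d_1+\cdots+d_r$ is forced by the $\bb$--$\bc$ balance: at least $d_r$ of the $\bc$'s come from $X_{d_r,n}$ itself, while the total $\bb$-count fed in by the defect factors cannot exceed their cumulative $\bc$-count $d_1+\cdots+d_r$ up to position $r$.

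Formulas (\ref{N2})--(\ref{Jf22}) then follow from (\ref{Pk2}) by applying the weighted sums $\sum_n n\,\cdot$ and $\sum_{n\ge l\ge 1}l\,w_\pm((0,l)|(d_r,n))\,\cdot$, mimicking the defect-free computations (\ref{rhof}) and (\ref{J0m})--(\ref{J0}). After re-indexing the $n$-sum and re-applying (\ref{air}), each weighted sum produces a truncated $q$-digamma-type factor of the form $\sum_{k=0}^{m-1}\mu yq^k/(1-\mu yq^k)$ (from $\rho$) or $\sum_{k=0}^{m-1}yq^k/(1-\mu yq^k)^2$ (from $J_+$), and similarly for $J_-$ with $m-1$ replaced by $m-d_r-1$. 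The bare quantities $\rho$, $J_\pm$ in (\ref{rhof}), (\ref{J0}) are separated off using an identity for $G_{0,\cdot}$ extractable from its definition (telescoping the $m$-sum), leaving only the pure defect correction recorded in the theorem.

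The main obstacle will be the first step: justifying that the $L\to\infty$ limit commutes with the two-dimensional expansion summations, one indexed by the surviving Fock mode $m$ and the others by the $\bb$-expansion indices $j_i$ at each site. This requires a more delicate version of the argument behind (\ref{ex1}), since one must also control the cumulative contribution of the strictly upper-triangular part of $A_0^{L-s}$, which can in principle feed through the defect factors $A_{d_i}$ before being projected out. Once these uniform tail bounds are secured, the combinatorial identification of $G_{0,m}(d_1,\ldots,d_r)$ and the subsequent $q$-digamma evaluations are essentially mechanical.
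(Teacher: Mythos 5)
Your plan does track the paper's own route: reduce the trace to vacuum--to--vacuum matrix elements, let the long $A_0$ string project onto the dominant eigenvalue $\Lambda(y)$, package the defect combinatorics into $G_{0,m}(d_1,\ldots,d_r)$, and then take moments in $n$. However, the two steps you defer are exactly where the content of the proof lies, and one of your heuristics is phrased in a way that, read literally, makes the computation collapse.

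First, ``after normalising by $\Lambda(y)^{L-s}$ only the $m=0$ diagonal mode should survive'' is wrong if it means $\Lambda(y)^{-(L-s)}A_0^{L-s}$ tends to a multiple of $|0\rangle\langle 0|$: since $A_d|0\rangle=0$ for $d\ge 1$ (because $\bc|0\rangle=0$), that reading annihilates both numerator and denominator once your cyclic rotation places $A_{d_1}\cdots A_{d_{r-1}}$ (and ultimately $A_{d_s}$ with $d_s\ge1$) against the vacuum. The correct statement, which is (\ref{ma0}) and (\ref{Ad0}) in the paper, is that $\Lambda(y)^{-L}\langle m|A_0^L|0\rangle\to y^{-m}(\mu y)_m$ for \emph{every} $m$: $\langle 0|$ is an exact left eigenvector, but the limiting right eigenvector is spread over all Fock modes, and it is precisely these weights $y^{-m}(\mu y)_m$ that produce the factor $(\mu y)_m/(y)_{m-d_r}$ in (\ref{Pk2}). (Your rotation is also unnecessary: in (\ref{Pmiddle}) the string $A_0^{L-s}$ is already contiguous at the right end, which is the form to which Lemma \ref{le:zer} applies.) Second, the interchange of $L\to\infty$ with the infinite sums, which you yourself flag as the main obstacle, is not a cosmetic extension of (\ref{ex1}); the paper devotes Lemmas \ref{le:zer} and \ref{le:krt} to it, and the indispensable input is the closed single-sum formula (\ref{bx}) for $F_{m,L}(y)=\Lambda(y)^{-L}\langle m|A_0^L|0\rangle/(q)_m$ coming from Lemma \ref{le:GFred}, which supplies the dominating bounds and monotonicity required by Lemma \ref{le:s1}. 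Without that formula the ``uniform tail bounds'' you invoke are not available. The remaining combinatorial identification is Lemma \ref{le:GF} together with the recursion (\ref{Grec}) and (\ref{Gex1}); your sketch of it is plausible but must be carried out, since the prefactors $(q)_l(\mu y)_m/(\mu y)_l$ are what make the intermediate sum collapse to a single $G_{0,m}(d_1,\ldots,d_r)$. The moment computations for (\ref{N2})--(\ref{Jf22}) and the separation of $\rho$ and $J_\pm$ are as routine as you say, except that the identity used there is the sum rule $\sum_m G_{0,m}(d_1,\ldots,d_r)=1$ of (\ref{G1}), not a telescoping of the $m$-sum.
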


The proof will be given in Section \ref{ss:naka}.
By means of (\ref{air}) one finds that  
the total probability is expressed as
$\sum_{n\ge 0}P_{\rm I}(r,n)
= \sum_{m}G_{0,m}(d_1,\ldots, d_{r})$.
This indeed gives $1$ thanks to (\ref{G1}).
In Section \ref{ss:excess} we will show that (\ref{N2}) 
can also be expressed as
\begin{align}
\rho_{\rm I}(r) -\rho &= -d_r + K(r)-K(r-1),
\label{rrr}\\
K(r) &= \sum_{m}G_{0,m}(d_1,\ldots, d_r)\sum_{k=0}^{m-1}
\frac{1}{1-\mu yq^k},
\label{Kdef}
\end{align}
where $K(0)=0$ and 
the sum over $m$ is taken in the same way as in Theorem \ref{th:naka}.
The difference structure $K(r)-K(r-1)$ in (\ref{rrr})
matches the sum rule in Proposition \ref{pr:excess}.

\begin{example}\label{ex:naka1}
At $r=1$ which is the left boundary 
of the defect cluster, 
Theorem \ref{th:naka} simplifies to
\begin{align*}
&P_{\rm I}(1,n)= y^n
\frac{(q^{d_1}\mu)_n(y)_\infty}{(q)_n(q^{d_1}\mu y)_\infty},
\qquad
\rho_{\rm I}(1)= \rho + \sum_{k=0}^{d_1-1}
\frac{\mu y q^k}{1-\mu y q^k},\\
&J_{\rm I}(1)_{+} = \mu^{-1}h(q^{d_1}\mu y),\quad
J_{\rm I}(1)_{-} = h(y)
\end{align*}
due to (\ref{Gex1}).
Thus we always have $\rho_{\rm I}(1)>\rho$ because of $d_1\ge 1$. 
The last result $J_{\rm I}(1)_{-} = h(y)$ may seem strange.
It follows from the properties 
\begin{align*}
P_{\rm I}(1,n) = P(n)|_{\mu \rightarrow q^{d_1}\mu},\quad
w_-((0,l)|(d_1,n)) = w_-((0,l)|(0,n))|_{\mu \rightarrow q^{d_1}\mu},
\end{align*}
which are easily seen in (\ref{P0}) and (\ref{wm})
and the remark on (\ref{wph}). 
\end{example}

\begin{example}\label{ex:naka2}
At $r=2$ which is one step inside the defect cluster from the left, 
Theorem \ref{th:naka} takes the form
\begin{align*}
P_{\rm I}(2,n)&= y^n
\frac{(q^{d_2}\mu)_n(y)_\infty}{(q)_n(\mu y)_\infty}
\sum_{m=d_2}^{d_1+d_2} q^{n(m-d_2)}
\frac{(\mu y)_m}{(y)_{m-d_2}}\phi(d_1+d_2-m|d_1),\\
\rho_{\rm I}(2)-\rho&=
\sum_{m=d_2}^{d_1+d_2}\phi(d_1+d_2-m|d_1)
\left(\sum_{k=0}^{m-1}\frac{\mu y q^k}{1-\mu y q^k}
-\sum_{k=0}^{m-d_2-1}\frac{y q^k}{1-y q^k}\right),
\\
J_{\rm I}(2)_{+} - J_+& = 
\sum_{m=d_2}^{d_1+d_2}\phi(d_1+d_2-m|d_1)
\sum_{k=0}^{m-1}\frac{y q^k}{(1-\mu y q^k)^2},
\\
J_{\rm I}(2)_{-}  - J_-& = 
\sum_{m=d_2}^{d_1+d_2}\phi(d_1+d_2-m|d_1)
\sum_{k=0}^{m-d_2-1}\frac{y q^k}{(1- y q^k)^2},
\end{align*}
where $G_{0,m}(d_1,d_2)=\phi(d_1+d_2-m|d_1)$ has been used by (\ref{Gex2}).   
\end{example}

Let us remark on the low and high density asymptotic behavior.
They correspond to $y \searrow 0$ and $y \nearrow 1$, respectively.
See (\ref{ry}).
From the properties of 
the unperturbed density and currents in 
(\ref{ry}), (\ref{jlow}) and (\ref{jhi}), 
one can derive the following behavior by utilizing 
(\ref{Gy}) and (\ref{ayk}).
\begin{align}
&\lim_{\rho\rightarrow 0}\rho_{\rm I}(r)/\rho=
\frac{1-\mu q^{d_r}}{1-\mu}q^{d_1+\cdots + d_{r-1}}, 
\label{rir}\\
&\lim_{\rho\rightarrow \infty}(\rho_{\rm I}(r)-\rho) 
= \sum_{k=0}^{d_r-1}\frac{\mu q^k}{1-\mu q^k}
-\sum_{k=0}^{d_{r-1}-1}
\frac{1}{1-\mu q^k},
\label{rir0}\\
&\lim_{\rho \rightarrow 0}
J_{\rm I}(r)_{+}/J_+ = q^{d_1+\cdots + d_r},
\qquad
\lim_{\rho \rightarrow 0}
J_{\rm I}(r)_{-}/J_- = q^{d_1+\cdots + d_{r-1}},
\label{irir}\\
&\lim_{\rho\rightarrow \infty}(J_{\rm I}(r)_{+} - J_+)=
-\sum_{k=0}^{d_r-1}\frac{q^k}{1-\mu q^k},
\\
&\lim_{\rho\rightarrow \infty}(J_{\rm I}(r)_{-} - J_-)/\rho=
-\sum_{k=0}^{d_{r-1}-1}
\frac{1}{1-\mu q^k}.
\label{ktn}
\end{align}
The result (\ref{rir}) with $r=1$ agrees with the 
$y\rightarrow 0$ case of Example \ref{ex:naka1} implied by (\ref{ry}).

\vspace{0.2cm}
Next we consider the right region II.

\begin{theorem}\label{th:migi}
In the region II $(r>s)$ the following formulas are valid:
\begin{align}
&P_{\rm II}(r,n) = P(n)
\sum_{i,j,m}G_{0,m}(d_1,\ldots, d_s)
(-1)^{i}q^{\frac{1}{2}i(i-1+2n)+j}
\frac{(q^{-m})_j}{(q)_i(q)_{j-i}}
\eta_i^{-1}\eta_j^{r-s},
\label{Pk1}\\
&\rho_{\rm II}(r)-\rho
= \sum_{j, m}\!G_{0,m}(d_1,\ldots, d_s)
\frac{q^j(q^{-m})_j\eta^{r-s}_j}{1-q^j}
\bigl(\frac{1}{(y)_j} - \frac{1}{(\mu y)_j}\bigr),
\label{N1}\\
&J_{\rm II}(r)_{\pm} - J_\pm
= \sum_{j, m}\!G_{0,m}(d_1,\ldots, d_s)
\frac{q^j(q^{-m})_j \eta_j^{r-s}}{(1-q^j)(\mu^{(1\pm1)/2}y)_j}
\sum_{k=0}^{j-1}
\frac{yq^k}{1-\mu^{(1\pm1)/2}yq^k},
\label{Jf1}
\end{align}
where the sum in (\ref{Pk1})  
extends over $i,j,m \in \Z_{\ge 0}$ such that
$0 \le i \le j \le m$ and 
$d_s \le m  \le d_1+\cdots + d_s$.
The sums in (\ref{N1}) and (\ref{Jf1}) are taken for
$1 \le j \le m$ and 
$d_s \le m  \le d_1+\cdots + d_s$.
\end{theorem}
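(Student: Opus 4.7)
My plan is to follow the same strategy used for the defect-free case in Section \ref{sec5}, but with more careful bookkeeping to accommodate the defect cluster. The first step is to apply cyclicity of the trace to (\ref{Pright}) and bring $X_{0,n}$ to the front, rewriting the numerator as
$$y^n\mathrm{Tr}(X_{0,n}A_0^{L-r}A_{d_1}\cdots A_{d_s}A_0^{r-s-1})'.$$
The key structural observation is that $A_0$ acts as an upper-triangular operator on the Fock basis,
$$A_0|m\rangle = \Lambda(q^m y)\sum_{j\geq 0}\frac{(\mu)_j}{(q)_j}|m+j\rangle,$$
with diagonal eigenvalues $\Lambda(q^m y)=\Lambda(y)\eta_m$. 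Consequently the trace decomposes as a sum over a ``Fock momentum'' $m$, with $A_0^k$ contributing $\Lambda(y)^k\eta_m^k$ on the $m$-th component together with subleading off-diagonal pieces.

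The second step is to take the $L\to\infty$ limit of the long block $A_0^{L-r}$. Since $\eta_0=1$ and $\eta_m<1$ for $m\geq 1$ (compare (\ref{lin})), only the $m=0$ eigenvector survives in this limit, contributing a factor $\Lambda(y)^{L-r}$ that cancels against the corresponding $\Lambda(y)^{L-s}$ emerging from the denominator under the same procedure. The interchange of the $L\to\infty$ limit with the infinite Fock-space sum here is justified by the estimates of Appendix \ref{app:ls}, analogous to (\ref{ex1})--(\ref{P0}). The residual factor $\Lambda(y)^{-1}=(y)_\infty/(\mu y)_\infty$ combines with $y^n g_n$ from $X_{0,n}$ to produce the $P(n)$ prefactor in (\ref{Pk1}). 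The finite block $A_0^{r-s-1}$, by contrast, retains its full spectral sum, contributing $\eta_j^{r-s-1}$ for each spectral index $j$; one additional $\eta_j$ arises from the coupling of $X_{0,n}$ (which lacks the $(\mu y\bk)_\infty/(y\bk)_\infty$ factor present in $A_0$) to the Fock boundary, yielding the $\eta_j^{r-s}$ that appears in the theorem.

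The third step is to evaluate the defect block $A_{d_1}\cdots A_{d_s}$ on a pair of Fock states by expanding each $(\mu\bb)_\infty/(\bb)_\infty$ and $(q^{d_i}\mu y\bk)_\infty/(y\bk)_\infty$ and tracking the raising by $\bb^k$ and lowering by $\bc^{d_i}$ on the Fock index; this reproduces the function $G_{0,m}(d_1,\ldots,d_s)$ of (\ref{Gdef}). A standard $q$-binomial identity, e.g.\ $(q^{-m})_j=(-1)^j q^{j(j-1)/2-jm}(q)_m/(q)_{m-j}$, then recasts the sum into the compact shape (\ref{Pk1}). Once (\ref{Pk1}) is established, the density formula (\ref{N1}) and the current formulas (\ref{Jf1}) follow by direct summation: $\sum_n n P_{\rm II}(r,n)$ and $\sum_{n\geq l\geq 1}l w_\pm((0,l)|(0,n))P_{\rm II}(r,n)$ both reduce, on the $j=0$ component, to the defect-free quantities $\rho$ and $J_\pm$ of Section \ref{sec5} via the sum rule $\sum_m G_{0,m}(d_1,\ldots,d_s)=1$ of (\ref{G1}); subtracting these recovers the $j\geq 1$ corrections displayed in (\ref{N1}) and (\ref{Jf1}).

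The main obstacle is the simultaneous handling of the spectral decomposition of $A_0^{r-s-1}$ and the defect block $A_{d_1}\cdots A_{d_s}$. In particular, extracting the precise exponent $r-s$ on $\eta_j$ (rather than $r-s-1$) from the combined action of $A_0^{r-s-1}X_{0,n}$ is delicate, and the interchange of the $L\to\infty$ limit with the infinite Fock-space sum in both numerator and denominator must be justified carefully before any subleading terms can be dropped; this is where the summability lemmas collected in Appendix \ref{app:ls} play an indispensable role.
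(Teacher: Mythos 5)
Your route differs from the paper's: you work directly with the trace in (\ref{Pright}) via a spectral (triangular) decomposition of $A_0$ on the Fock basis, whereas the paper obtains Region II for free from the Region I formula (\ref{Pk2}) by padding the defect string to $(d_1,\ldots,d_s,0,\ldots,0)$ and then isolating the $r$-dependence through $G_{m,i}(0,\ldots,0)$. Your starting observations are sound: $A_0|m\rangle=\Lambda(q^m y)\sum_j g_j|m+j\rangle$ is correct, the long block $A_0^{L-r}$ does collapse onto the top eigenvector as $L\to\infty$ (this is (\ref{ma0}) and Lemma \ref{le:zer}), the defect block does produce $G_{0,m}(d_1,\ldots,d_s)$ via Lemma \ref{le:GF}, and the vertex count ($r-s-1$ factors of $A_0$ plus one from $X_{0,n}$) correctly explains the exponent $r-s$ on $\eta_j$.

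The genuine gap is that the quantitative heart of (\ref{Pk1}) --- the coefficients $(-1)^{i+j}q^{\frac12 i(i-1)+\frac12 j(j+1-2m)}\binom{m}{j}_q\binom{j}{i}_q$ multiplying $\eta_j^{r-s}$, i.e.\ the explicit diagonalization of the finite block --- is asserted rather than derived. Saying that $A_0^{r-s-1}$ ``retains its full spectral sum, contributing $\eta_j^{r-s-1}$ for each spectral index $j$'' only tells you that $\langle m|A_0^k|l\rangle$ is \emph{some} linear combination of the $\Lambda(q^jy)^k$ (the triangular matrix has distinct diagonal entries, hence is diagonalizable on the invariant finite-dimensional subspace); it does not produce the spectral projections. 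In the paper this is exactly Lemma \ref{le:GFred}, proved by an induction on the number of $A_0$ factors that uses the $q$-Chu--Vandermonde summation (\ref{manga3}) --- a nontrivial computation, not the cosmetic rewriting $(q^{-m})_j=(-1)^jq^{j(j-1)/2-jm}(q)_m/(q)_{m-j}$ that you invoke. Relatedly, the phrase ``subleading off-diagonal pieces'' is only appropriate for the infinite block; for the finite block $A_0^{r-s-1}$ the off-diagonal contributions enter at full strength and must be resummed exactly. Until you supply the analogue of Lemma \ref{le:GFred} (or an explicit construction of the left eigenvectors of $A_0$), formulas (\ref{Pk1}), and hence (\ref{N1}) and (\ref{Jf1}) --- which additionally require the summation identity (\ref{hgi}) to collapse the $i$ and $k$ sums --- are not established.
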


The proof will be given in Section \ref{ss:migi}
It is based on the choice  
$(d_1,\ldots, d_r) =(d_1,\ldots, d_s, \overset{r-s}{\overbrace{0,\ldots,0}})$
in Theorem \ref{th:naka}.
In this sense Theorem \ref{th:naka} covers 
Theorem \ref{th:migi} save the 
extraction of the dependence on the 
distance $r-s$ in (\ref{Gr}).

\begin{example}\label{ex:pk1}
Consider the case  $s=1$.
From (\ref{Gex1}) we know $G_{0,m}(d_1) = \delta_{m,d_1}$.
Thus (\ref{Pk1}), (\ref{N1}) and  (\ref{Jf1})  
for $d_1=1$ and $d_1=2$ read
\begin{align*}
P_{\rm II}(r,n) & = 
y^n\frac{(\mu)_n (y)_\infty}{(q)_n (\mu y)_\infty}
\bigl(1-\eta_1^{r-1}
+ q^n \eta_1^{r-2}\bigr),\\
P_{\rm II}(r,n) & = 
y^n\frac{(\mu)_n (y)_\infty}{(q)_n (\mu y)_\infty}
\bigl(
1-q^{-1}\binom{2}{1}_q\eta_1^{r-1}
+ q^{-1}\eta_2^{r-1} +
q^{n-1}\binom{2}{1}_q(\eta_1^{r-2}-
\eta_1^{-1}\eta_2^{r-1})+
q^{2n}\eta_2^{r-2}\bigr),
\\
\rho_{\rm II}(r) -\rho& = 
-\frac{(1-\mu)y\eta_1^{r-1}}{(1-y)(1-\mu y)},
\\
\rho_{\rm II}(r) -\rho& = 
-\frac{(1-\mu)yq^{-1}((1+q)\eta_1^{r-1}-\eta_2^{r-1})}{(1-y)(1-\mu y)}
-\frac{(1-\mu)yq\eta_2^{r-1}}{(1-yq)(1-\mu yq)},
\\
J_{\rm II}(r)_{\pm} - J_\pm
&= -\frac{y\eta_1^{r-1}}{(1-\mu^{(1\pm 1)/2}y)^2},\\
J_{\rm II}(r)_{\pm} - J_\pm
&= \frac{yq^{-1}\eta_2^{r-1}-y(1+q^{-1})\eta_1^{r-1}}
{(1-\mu^{(1\pm 1)/2}y)^2}
- \frac{yq \eta_2^{r-1}}{(1-\mu^{(1\pm 1)/2}yq)^2},
\end{align*}
where $\eta_1= \frac{1-y}{1-\mu y}$,  
$\eta_2  = \frac{(1-y)(1-qy)}{(1-\mu y)(1-q \mu y)}$
by (\ref{etadef}).
\end{example}

In general the effect of defects disappears
in the long distance, i.e.,
\begin{align}\label{rinf}
\lim_{r \rightarrow \infty}P_{\rm II}(r,n) = P(n),
\qquad
\lim_{r \rightarrow \infty}\rho_{\rm II}(r) = \rho,
\qquad
\lim_{r \rightarrow \infty}J_{\rm II}(r)_{\pm}= J_\pm.
\end{align}
These are derived from $\lim_{r\rightarrow \infty}
\eta_j^{r-s}=\delta_{j,0}$ by (\ref{lin}) and (\ref{G1}).
As seen in Example \ref{ex:pk1}, 
deviation from the defect-free case comes in various {\em mode}
proportional to $\eta_j^{r-s}$ having the 
correlation (or decay)  length $-(\log \eta_j)^{-1}\, (j \ge 1)$. 
Thus from (\ref{lin}) and (\ref{etadef}) 
the influence of the defects reaches the distance 
of order $\bigl(\log \frac{1-\mu y}{1-y}\bigr)^{-1}$.
Since $y\rightarrow 0$ as the density tends to $0$,
the correlation is longer for smaller density and 
$\mu$ closer to $1$. 

By a calculation similar to the region I, one can show for $r>s$ the 
asymptotic behavior as follows.
\begin{alignat}{2}
&\lim_{\rho \rightarrow 0}\rho_{\rm II}(r)/\rho=q^{d_1+\cdots + d_s},
&
&\lim_{\rho \rightarrow \infty}(\rho_{\rm II}(r)-\rho)
= -\delta_{r,s+1}\sum_{k=0}^{d_s-1}\frac{1}{1-\mu q^k},
\label{msw}\\
&\lim_{\rho \rightarrow 0}
J_{\rm II}(r)_{\pm}/J_\pm = q^{d_1+\cdots + d_s},
&\quad
&
\label{jjq}\\
&\lim_{\rho \rightarrow \infty}
(J_{\rm II}(r)_{+} - J_+)=0,
&
&\lim_{\rho \rightarrow \infty}
(J_{\rm II}(r)_{-} - J_-)/\rho = 
-\delta_{r,s+1}\sum_{k=0}^{d_s-1}\frac{1}{1-\mu q^k}.
\label{rna}
\end{alignat}
These results have the form that
naturally extends (\ref{rir}) -- (\ref{ktn}) to $r>s$.
In particular in the large $\rho$ limit, 
the region II feels the rightmost $d_s$ defects only.

\vspace{0.2cm}
Finally we consider the left region III.
\begin{theorem}\label{th:hidari}
In the region III $(r \le 0)$  the following formulas are valid:
\begin{align}\label{N3}
P_{\rm III}(r,n) = P(n),
\qquad
\rho_{\rm III}(r) = \rho,
\qquad
J_{\rm III}(r)_{\pm} = J_\pm.
\end{align}
\end{theorem}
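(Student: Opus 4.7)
The plan is to deduce Theorem \ref{th:hidari} from Theorem \ref{th:migi} via a cyclic-trace rearrangement together with the long-distance decay already exhibited by the modes $\eta_j$ in the region II formulas.

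First I would apply the cyclic invariance of the trace to the numerator of (\ref{Pleft}):
\begin{align*}
\mathrm{Tr}\bigl(X_{0,n} A_0^{|r|} A_{d_1}\cdots A_{d_s} A_0^{L-|r|-s-1}\bigr)'
= \mathrm{Tr}\bigl(A_{d_1}\cdots A_{d_s} A_0^{L-|r|-s-1} X_{0,n} A_0^{|r|}\bigr)'.
\end{align*}
The denominator is identical to that of (\ref{Pright}). The rearranged numerator has exactly the form of the numerator in (\ref{Pright}) with the position $r$ replaced by the effective site $r' = L - |r|$, namely a block $A_0^{r'-s-1}$ sitting between the defect cluster and $X_{0,n}$, followed by a block $A_0^{L-r'} = A_0^{|r|}$. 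Therefore the derivation carried out in Section \ref{ss:migi} for Theorem \ref{th:migi} applies verbatim and produces, before passing to the $L \to \infty$ limit, the finite-sum representation (\ref{Pk1}) for $P_{\rm III}(r,n)$ with the exponent $r - s$ replaced by $L - |r| - s$ in every occurrence of $\eta_j^{r-s}$. The same substitution works for (\ref{N1}) and (\ref{Jf1}).

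Next I would take $L \to \infty$ with $r \le 0$ fixed. By (\ref{lin}) we have $0 < \eta_j < 1$ for every $j \ge 1$, so $\eta_j^{L-|r|-s} \to 0$ while $\eta_0 = 1$. In the analogue of (\ref{Pk1}) only the $j = 0$ term survives, which also forces $i = 0$; the summand collapses to $P(n)\,G_{0,m}(d_1,\ldots,d_s)$, and summing over $m$ via $\sum_m G_{0,m}(d_1,\ldots,d_s) = 1$ (identity (\ref{G1})) gives $P_{\rm III}(r,n) = P(n)$. Each term of the analogues of (\ref{N1}) and (\ref{Jf1}) already carries a factor $\eta_j^{L-|r|-s}$ with $j \ge 1$, so the same vanishing delivers $\rho_{\rm III}(r) - \rho = 0$ and $J_{\rm III}(r)_\pm - J_\pm = 0$.

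The main obstacle is to justify that the finite-sum manipulations behind Theorem \ref{th:migi} do not see the fact that the ``filler'' $A_0$ block has been split into a long part of length $L - |r| - s - 1$ on the left of $X_{0,n}$ and a short part of length $|r|$ on the right of $X_{0,n}$, rather than sitting entirely on one side. Because $|r|$ is kept fixed, the short right-hand block is a bounded operator-valued modification that can be absorbed into the expansion without affecting convergence, but one should verify uniform bounds permitting the exchange of $\lim_{L\to\infty}$ with the finite sums over $i,j,m$, in the same spirit as the justification of (\ref{ex1}) given in Appendix \ref{app:ls}. Once this is in place the conclusion (\ref{N3}) follows immediately.
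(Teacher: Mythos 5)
There is a genuine gap, and it sits exactly where you park it at the end as ``the main obstacle.'' Theorem \ref{th:migi} (and its finite-sum representation (\ref{Pk1})) is a statement about the limit $L\to\infty$ taken \emph{at fixed} $r$: the derivation in Section \ref{ss:migi} rests on Proposition \ref{pr:red} and on (\ref{Ad0}), both of which require the trailing block $A_0^{L-r}$ to have length tending to infinity while the rest of the operator string is held fixed. Your claim that the derivation ``applies verbatim and produces, before passing to the $L\to\infty$ limit, the finite-sum representation (\ref{Pk1}) with $r-s$ replaced by $L-|r|-s$'' is therefore not correct: at finite $L$ the trace ratio is \emph{not} equal to that finite sum (there are finite-$L$ corrections from the trace reduction of Lemma \ref{le:zer} and from (\ref{Ad0})), and in your cyclically rearranged trace the growing $A_0$-block sits in the middle while the trailing block $A_0^{|r|}$ stays fixed, which is precisely the configuration those lemmas do not cover. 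What you actually need is the diagonal limit $\lim_{L\to\infty}N_L(L-|r|)$ of the pre-limit ratio $N_L(r')$, and this is not determined by knowing $\lim_{L\to\infty}N_L(r')$ for each fixed $r'$ together with the decay $\eta_j^{r'-s}\to\delta_{j,0}$; some uniformity in $r'$ of the convergence must be established. Your closing caveat misidentifies the issue as interchanging $\lim_{L\to\infty}$ with the sums over $i,j,m$ --- those sums are finite (bounded by $d_1+\cdots+d_s$) and commute with any limit trivially.

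The paper's own proof avoids all of this and is essentially one line: starting from the vacuum-reduced expression (\ref{Pl}), one uses $X_{0,n}=g_n\frac{(\mu\bb)_\infty}{(\bb)_\infty}\bk^n$ together with $\langle 0|\frac{(\mu\bb)_\infty}{(\bb)_\infty}=\langle 0|\bk^n=\langle 0|$ from (\ref{qb}), so that $\langle 0|X_{0,n}=g_n\langle 0|$ and $\langle 0|A_0^{|r|}=\Lambda(y)^{|r|}\langle 0|$; then (\ref{Ad0}) evaluates numerator and denominator and yields $P_{\rm III}(r,n)=y^ng_n\Lambda(y)^{-1}=P(n)$ exactly, for \emph{every} $r\le 0$, not merely asymptotically. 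Your guiding intuition --- that by periodicity the left region should agree with the $r\to\infty$ limit (\ref{rinf}) of the right region --- is correct and is the consistency check the paper itself points out in the introduction, but the paper establishes (\ref{N3}) independently rather than deducing it from Theorem \ref{th:migi}. If you want to salvage your route, the cleanest repair is to abandon the cyclic rearrangement and instead apply the vacuum reduction directly to (\ref{Pleft}), which lands you on the paper's argument.
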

The proof will be given in Section \ref{ss:hidari}.

Let us introduce the quantity
\begin{align}
\Delta\rho_{\rm tot}:=\sum_{r=-\infty}^\infty(\rho(r)-\rho)
= \sum_{r=1}^s(\rho_{\rm I}(r)-\rho) 
+ \sum_{r>s}(\rho_{\rm II}(r)-\rho),
\end{align}
which represents the {\em total excess} of the number of the second class particles 
from the average value in the entire system. 
From (\ref{rir0}) and (\ref{msw}) one can easily check
$\lim_{\rho \rightarrow \infty}\Delta\rho_{\rm tot} = -(d_1+\cdots + d_s)$.
Our next result shows that this holds in general.

\begin{proposition}\label{pr:excess}
The total excess of the second class particles is given by 
\begin{align*}
\Delta\rho_{\rm tot} = -(d_1+\cdots + d_s).
\end{align*}
\end{proposition}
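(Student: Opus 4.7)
The plan is to evaluate $\Delta\rho_{\rm tot}$ region by region using the formulas already at hand. Region III contributes nothing by Theorem~\ref{th:hidari}. For region I, the telescoping expression (\ref{rrr})--(\ref{Kdef}), together with $K(0)=0$, immediately gives
\begin{equation*}
\sum_{r=1}^{s}\bigl(\rho_{\rm I}(r)-\rho\bigr)=-(d_1+\cdots+d_s)+K(s),
\end{equation*}
so the proposition reduces to showing $\sum_{r>s}\bigl(\rho_{\rm II}(r)-\rho\bigr)=-K(s)$.

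To handle region II, I would start from (\ref{N1}) and, using $0<\eta_j<1$ for $j\ge 1$ by (\ref{lin}), interchange the $r$-sum with the finite $(j,m)$-sums. The $r$-sum is then the geometric series $\sum_{r>s}\eta_j^{r-s}=\eta_j/(1-\eta_j)$. The crucial algebraic simplification is that, since $\eta_j=(y;q)_j/(\mu y;q)_j$,
\begin{equation*}
\frac{\eta_j}{1-\eta_j}\Bigl(\frac{1}{(y;q)_j}-\frac{1}{(\mu y;q)_j}\Bigr)=\frac{1}{(\mu y;q)_j},
\end{equation*}
so the region II contribution collapses to
\begin{equation*}
\sum_{r>s}\bigl(\rho_{\rm II}(r)-\rho\bigr)=\sum_{m}G_{0,m}(d_1,\ldots,d_s)\sum_{j=1}^{m}\frac{q^j(q^{-m};q)_j}{(1-q^j)(\mu y;q)_j}.
\end{equation*}

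Comparison with $K(s)$ in (\ref{Kdef}) reduces the whole proposition to the single $q$-series identity
\begin{equation*}
\sum_{j=1}^{m}\frac{q^j(q^{-m};q)_j}{(1-q^j)(\mu y;q)_j}=-\sum_{k=0}^{m-1}\frac{1}{1-\mu y q^k}\qquad(m\ge 1),
\end{equation*}
which is the main technical obstacle. I would prove it by expanding both sides as formal power series in $\mu y$: the RHS becomes $-\sum_{n\ge 0}(\mu y)^n\sum_{k=0}^{m-1}q^{kn}$, while the $q$-binomial theorem $1/(\mu y;q)_j=\sum_{n\ge 0}(q^j;q)_n(\mu y)^n/(q;q)_n$ turns the LHS into a double sum whose coefficient of $(\mu y)^n$ is a terminating $q$-hypergeometric sum in $j$; matching with the RHS coefficient reduces to the terminating $q$-Chu--Vandermonde summation (and is directly verifiable by induction on $m$ at low orders). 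Once the identity is established, adding the three regional contributions yields $\Delta\rho_{\rm tot}=-(d_1+\cdots+d_s)$.
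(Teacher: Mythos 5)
Your regional decomposition and your treatment of regions II and III coincide with what the paper actually does in Section \ref{ss:excess}: it likewise interchanges the $r$-sum with the finite $(j,m)$-sums in (\ref{N1}), sums the geometric series $\sum_{r>s}\eta_j^{r-s}=\eta_j/(1-\eta_j)$, performs exactly the collapse $\frac{\eta_j}{1-\eta_j}\bigl(\frac{1}{(y)_j}-\frac{1}{(\mu y)_j}\bigr)=\frac{1}{(\mu y)_j}$, and reduces the region II contribution to the same terminating $q$-series identity, which is correct. The one genuine methodological difference there is how that identity is established: instead of expanding in powers of $\mu y$ and appealing to $q$-Chu--Vandermonde, the paper obtains it in one line as the substitution $(q,d,\mu)\rightarrow(q^{-1},m,q^{m-1}\mu y)$ in the identity (\ref{kzn}), which is itself the $y\rightarrow 1$ limit of Lemma \ref{le:ma}. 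Your route is viable but costs an extra terminating-sum evaluation that the paper gets for free from machinery it needs anyway.

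The gap is in region I. The telescoping form (\ref{rrr})--(\ref{Kdef}) is not a previously established result: the paper states it after Theorem \ref{th:naka} with the explicit remark that it \emph{will be shown} in Section \ref{ss:excess}, i.e.\ it is part of the proof of Proposition \ref{pr:excess} itself. What follows immediately from (\ref{N2}) and $\sum_m G_{0,m}=1$ is only
\begin{equation*}
\rho_{\rm I}(r)-\rho=-d_r+K(r)-\tilde K(r),\qquad
\tilde K(r)=\sum_m G_{0,m}(d_1,\ldots,d_r)\sum_{k=0}^{m-d_r-1}\frac{1}{1-yq^k},
\end{equation*}
and converting $\tilde K(r)$ into $K(r-1)$ --- which is exactly what makes your region I sum telescope --- is a nontrivial step. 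The paper carries it out by rewriting $\tilde K(r+1)$ via (\ref{Grec}) and (\ref{Gex1}) as $\sum_m G_{0,m}(d_1,\ldots,d_r)\sum_{j}\phi(m-j|m)\sum_{i=0}^{j-1}(1-yq^i)^{-1}$ and then invoking Lemma \ref{le:ma}, which is the generic-$y$ strengthening of the very identity you prove for region II (and is proved in the paper by an induction exploiting polynomiality in $\mu$). So your proposal silently assumes the harder half of the argument; to close it you must either prove (\ref{rrr}) --- equivalently $\tilde K(1)=0$ and $K(r)=\tilde K(r+1)$ for $1\le r<s$ --- or supply Lemma \ref{le:ma} directly. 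Once that is added, the three contributions combine exactly as you state.
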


The proof will be given in Section \ref{ss:excess}.
Since $\Delta\rho_{\rm tot}$ is finite, 
the parameter $\rho$ indeed remains to be 
the average density of the second class particles
in the infinite volume limit.
It is curious that $\Delta\rho_{\rm tot}$ exactly cancels the total
excess $+(d_1+\cdots + d_s)$ coming from the first class particles.
To find a simple explanation of this fact is an interesting open problem.

The results in Theorem \ref{th:naka} 
(resp. Theorem \ref{th:hidari}) are 
independent of $d_{r+1},\ldots, d_s$ (resp. $d_{1},\ldots, d_s$).
They imply that the defects only influence their right in the large volume limit 
despite that the component $H_-$ in the Markov matrix (\ref{skb}) 
represents the left moving particles.
We do not have an intuitive explanation of this fact.
One should however remember that 
the both $H_\pm$ in (\ref{Hs}) originate in the discrete time 
evolution $T(\lambda|\mu,\ldots, \mu)$  
described by the right moving particles only as in (\ref{tdiag}).
Moreover the derivative in (\ref{Hs}) 
gives rise to the denominator $T(\lambda|\mu,\ldots, \mu)^{-1}$ which is  
$1$ for $H_+$ whereas it is the {\em left} cyclic shift for $H_-$
due to 
$\mathscr{S}(\mu,\mu)(|\alpha\rangle \otimes |\beta\rangle )
=|\beta\rangle \otimes |\alpha\rangle$. 
Thus we could have had the right moving dynamics only,
if not so neatly described,  
just by taking the derivative rather than the logarithmic derivative
at $\lambda=\mu$.
In other words, the curiosity is attributed to the 
commutativity $[H_+, H_-]=0$, a basic consequence of the 
integrability of the model, which implies that 
the left and the right moving dynamics should possess 
the {\em same} stationary states on the ring of finite size.

\subsection{Case of homogeneous defect}
To grasp the results in Theorem \ref{th:naka} and \ref{th:migi} qualitatively,
it is helpful to consider the homogeneous case 
$d_1 = \cdots = d_s = d$ and 
the limits $\rho \rightarrow 0$ and 
$\rho \rightarrow \infty$.
The behavior for general $\rho$ can then be
inferred as something in between.
In this setting we still have the integer parameters $d, s \in \Z_{\ge 1}$
and the continuous parameters $0<q,\mu < 1$.
Set
\begin{align*}
D = \sum_{k=0}^{d-1}\frac{1}{1-\mu q^k},\qquad
\nu = \frac{1-\mu q^d}{1-\mu}.
\end{align*}
They satisfy $D>d$  and $\nu >1$ obviously.
The results on the local density 
(\ref{rir}), (\ref{rir0}), (\ref{msw}) and (\ref{N3}) are summarized as
\begin{align}\label{asym}
\lim_{\rho\rightarrow 0}
\rho(r)/\rho = \begin{cases}
\nu q^{(r-1)d} & 1 \le r \le s,\\
q^{sd} & r>s,\\
1 & \text{otherwise},
\end{cases}
\qquad
\lim_{\rho \rightarrow \infty}
(\rho(r)-\rho) = \begin{cases}
D-d & r=1,\\
-d & 2 \le r \le s, \\
-D & r=s+1,\\
0 & \text{otherwise}.
\end{cases}
\end{align}
Schematic plots of them look as Figure \ref{sch} below.

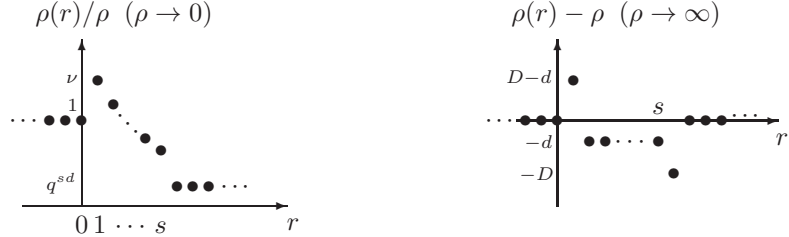
\begin{figure}[H]
\begin{tabular}{cc}
\begin{minipage}[t]{0.45\hsize}
\begin{center}

\begin{picture}(270,95)(0,-12)

\put(-15,70){$\rho(r)/\rho\; \;(\rho \rightarrow 0)$}
\put(2.4,0){\vector(0,1){63}}
\put(-20,0){\vector(1,0){100}}
\put(80,-9){$r$}

\multiput(-25,30.1)(4,0){3}{$\cdot$}

\put(-3.5,46.7){$\scriptstyle{\nu}$}
\put(-3,36.2){$\scriptstyle{1}$}
\multiput(-12.2,30)(6,0){3}{$\bullet$}

\put(6,45){$\bullet$}
\put(12,36){$\bullet$}
\put(15.6,31.8){$\cdot$}\put(18.5,28.8){$\cdot$}\put(21.6,25.8){$\cdot$}
\put(24,23.04){$\bullet$}
\put(30,18.4){$\bullet$}
\put(36,5){$\bullet$}
\put(42,5){$\bullet$}
\put(48,5){$\bullet$}
\multiput(55,5.1)(4,0){3}{$\cdot$}

\put(-11,6){$\scriptstyle{q^{sd}}$}
\put(0,-10){$0\, 1\, \cdots\, s$}

\put(180,0){

\put(-15,70){$\rho(r)- \rho\;\; (\rho \rightarrow \infty)$}
\put(2.4,0){\vector(0,1){63}}
\put(-13,32.4){\vector(1,0){99}}
\put(85,23){$r$}

\multiput(-25,30.1)(4,0){3}{$\cdot$}

\put(-17,46.7){$\scriptstyle{D-d}$}
\put(-10,22){$\scriptstyle{-d}$}
\multiput(-12.2,30)(6,0){3}{$\bullet$}

\put(6,45){$\bullet$}

\put(6,35){$\phantom{1\; \;\cdots\;\; \;}s$}
\put(12,22){$\bullet$}\put(18,22){$\bullet$}
\put(24,22){$\cdots$}
\put(38,22){$\bullet$}

\put(44,10){$\bullet$}

\multiput(50,30)(6,0){3}{$\bullet$}

\multiput(68,32)(4,0){3}{$\cdot$}

\put(-12,10){$\scriptstyle{-D}$}

}

\end{picture}

\end{center}
\end{minipage}
\end{tabular}
\caption{Density profiles under the homogeneous defect 
$d_1= \cdots = d_s = d$ in 
the two limits $\rho \rightarrow 0$ and 
$\rho \rightarrow \infty$ according to (\ref{asym}).}
\label{sch}
\end{figure}

Actually in the limit $\rho\rightarrow 0$, either 
$\rho(s)/\rho \gtrless 1$ can happen
according to $\nu q^{(s-1)d} \gtrless 1$.
In the other limit $\rho \rightarrow \infty$,
the density completely resumes the average value for $r \ge s+2$.
For general $0 < \rho < \infty$,  
the local density $\rho(r)$ gets larger than $\rho$ 
at the left boundary $r=1$ of the defect cluster forming a peak.
It then decreases until the site $r=s+1$ 
forming a valley at the left boundary of the region II.
Then in the region II, it recovers toward the average value $\rho$ 
exponentially as mentioned after (\ref{rinf}).
Such behavior and sensitivity to the defects are sharper in higher density case.

As for the local currents we are to compare 
$J(r)_+$ and $J(r+1)_-$ in view of (\ref{fk}).
The results given in 
(\ref{irir}) -- (\ref{ktn}) and (\ref{jjq}) --(\ref{N3}) are summarized as
\begin{align*}
&\lim_{\rho \rightarrow 0}J_+(r)/J_+
=\lim_{\rho \rightarrow 0}J_-(r+1)/J_-
= \begin{cases}
q^{\min(r,s)d} & r \ge 0,\\
1 & \text{otherwise},
\end{cases}
\\
&\lim_{\rho \rightarrow \infty}
(J_+(r)-J_+) = \begin{cases}
\mu^{-1}(d-D) & 1 \le r \le s,\\
0 & \text{otherwise},
\end{cases}
\qquad
\lim_{\rho \rightarrow \infty}
(J_-(r+1)-J_-)/\rho = \begin{cases}
-D & 1 \le r \le s,\\
0 & \text{otherwise}.
\end{cases}
\end{align*}
Thus we see that $J_+(r)$ and $J_-(r+1)$ have similar asymptotics 
aside from the overall normalization for large $\rho$.
As mentioned under (\ref{jhi}) about $J_\pm$,  
the former is convergent whereas the latter is divergent
as $\rho$ tends to infinity.
They do not exhibit a peak at $r=1$ 
like the density $\rho(r)$, but 
other behavior is more or less similar.
In particular when $\rho$ is sufficiently large, 
the both currents get smaller inside the defect cluster than 
the defect-free case $J_\pm$. 

\subsection{Comparison with numerical 
evaluation in canonical ensemble}\label{ss:ce}

One can evaluate
the density numerically 
by the matrix product formula (\ref{sin})
in a {\em fixed} sector with system size $L$.
It formally corresponds to the {\em canonical} ensemble average
which will be denoted by $\rho^{\rm c}_L(r)$.
In the actual calculation of $\rho^{\rm c}_L(r)$, 
we have truncated the operators (\ref{cie}) and (\ref{Adef})
to the matrices acting on the finite dimensional subspace of $F$ of the form
$\bigoplus_{m=0}^{d_1+\cdots+ d_s+t}\R |m\rangle$ and 
checked the convergence has been achieved sufficiently already for $t=1,2$, etc. 
Figure \ref{nc12} compares $\rho(r)$ and 
$\rho^{\rm c}_L(r)$ for $1 \le r \le L$.

\begin{figure}[H]
\begin{tabular}{cc}
\begin{minipage}[t]{0.45\hsize}
\begin{center}
\includegraphics[scale=0.74]{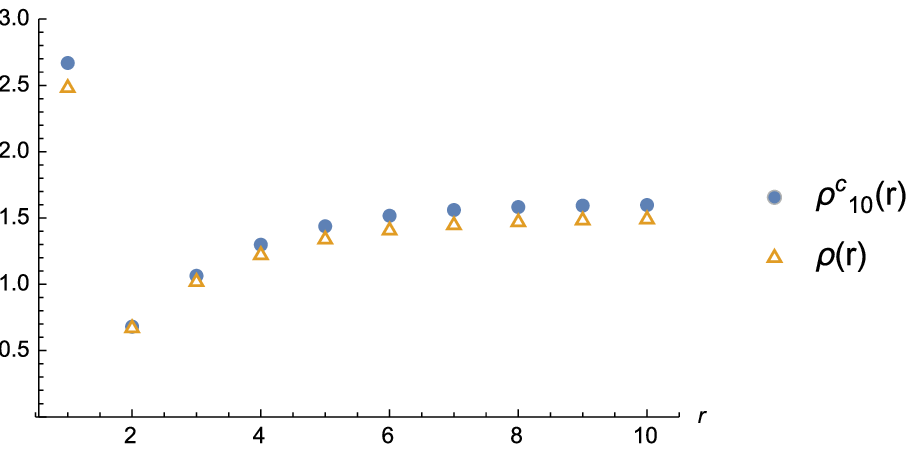}
\end{center}
\end{minipage}
&\;\;
\begin{minipage}[t]{0.45\hsize}
\begin{center}
\includegraphics[scale=0.74]{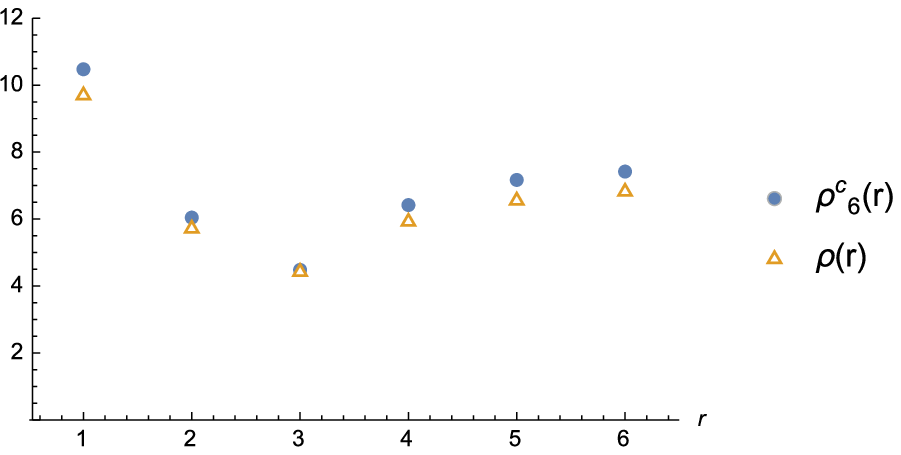}
\end{center}
\end{minipage}
\end{tabular}
\caption{Left: Single defect case $d_1=1$ with 
$(\rho,q,\mu)=(1.5, 0.2,0.7)$. 
One has 
$\rho^{\rm c}_L(1)=2.71394, 2.66876$ for $L=8,10$ approaching 
$\rho(1) = 2.4801$.
The sector for computing $\rho^{\rm c}_{10}(r)$ consists of 
$\binom{24}{15}=1307504$ states.
Right: The system with three defect particles $(d_1,d_2) = (2,1)$ with 
$(\rho, q,\mu)=(7,0.2,0.8)$. 
The sector for computing $\rho^{\rm c}_{6}(r)$ consists of 
$\binom{47}{6}=8145060$ states.}
\label{nc12}
\end{figure}

As remarked after (\ref{rinf}), 
the correlation length of the system 
is smaller for larger average density $\rho$.
Thus the agreement of $\rho(r)$ and $\rho^{\rm c}_L(r)$
is expected to be better for larger $\rho$, therefore is harder 
to observe in numerical calculations. 
Admittedly the agreement in Figure \ref{nc12}
is not quite excellent, but always exhibits the tendency to 
improve as $L$ gets large.
In general fluctuations in the density is of order 
$L^{-\frac{1}{2}}$.
So the grand canonical approach 
should coincide with the canonical one in the large volume limit
for there is no symptom of phase transition in the range $0 < q, \mu < 1$.
See the remarks following (\ref{ry}).

\subsection{Density and current profiles}\label{ss:dc}
Here we present the result of numerical evaluation of 
the formulas in Theorem \ref{th:naka}, \ref{th:migi} and \ref{th:hidari}
in a number of figures.
We first consider the profile of local density $\rho(r)$ 
(\ref{N2}), (\ref{N1}) and (\ref{N3})
in the presence of various defects
in Figure \ref{n4keta}--\ref{n1keta}.
In general the density can possibly break 
the monotonicity inside the defect cluster 
depending on the inhomogeneity of $d_i$'s. 

\begin{figure}[H]
\begin{tabular}{cc}
\begin{minipage}[t]{0.45\hsize}
\begin{center}
\includegraphics[scale=0.7]{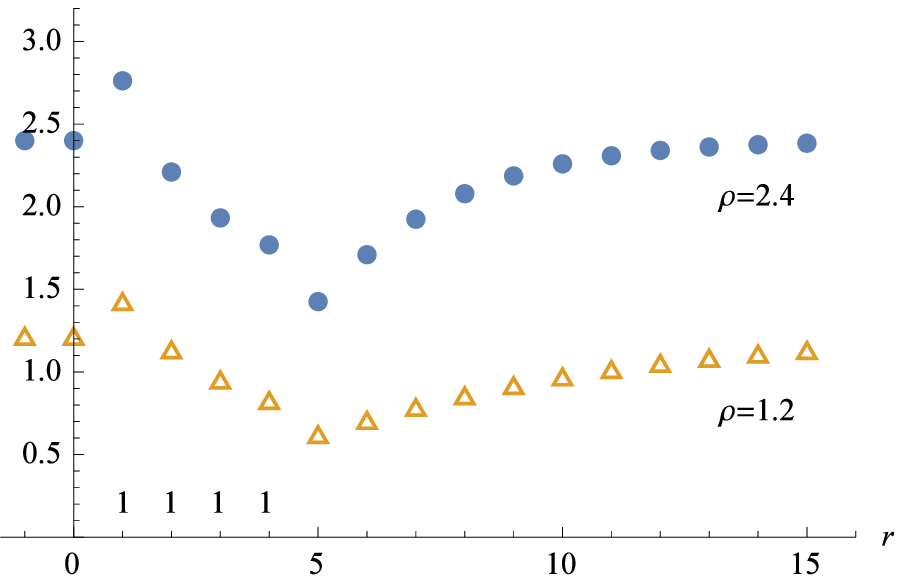}
\end{center}
\end{minipage}
&\;\;
\begin{minipage}[t]{0.45\hsize}
\begin{center}
\includegraphics[scale=0.7]{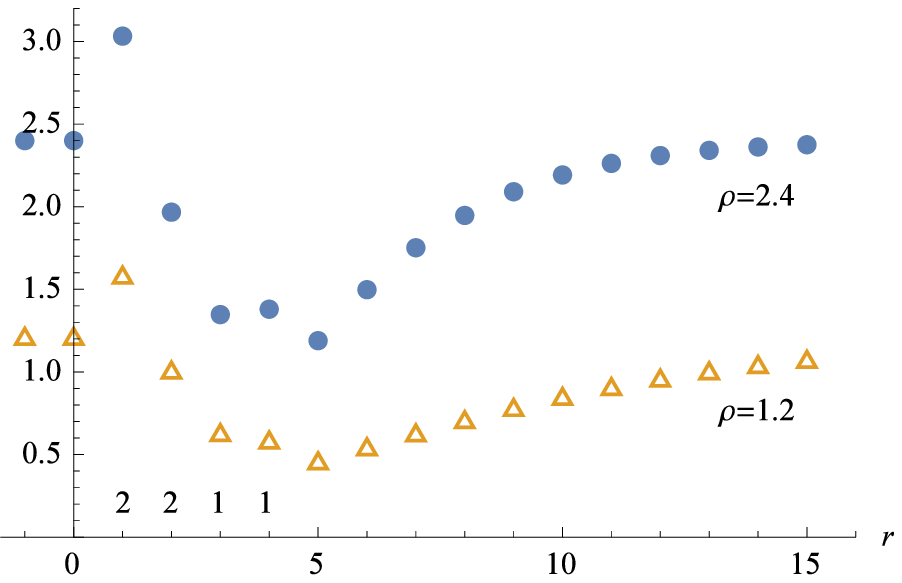}
\end{center}
\end{minipage}
\\
& \\
\begin{minipage}[t]{0.45\hsize}
\begin{center}
\includegraphics[scale=0.7]{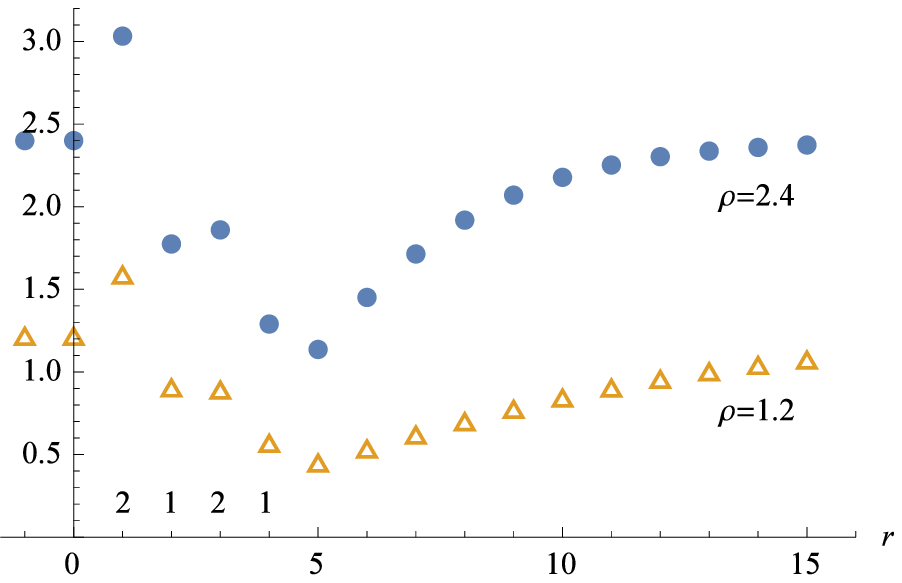}
\end{center}
\end{minipage}
&\;\;
\begin{minipage}[t]{0.45\hsize}
\begin{center}
\includegraphics[scale=0.7]{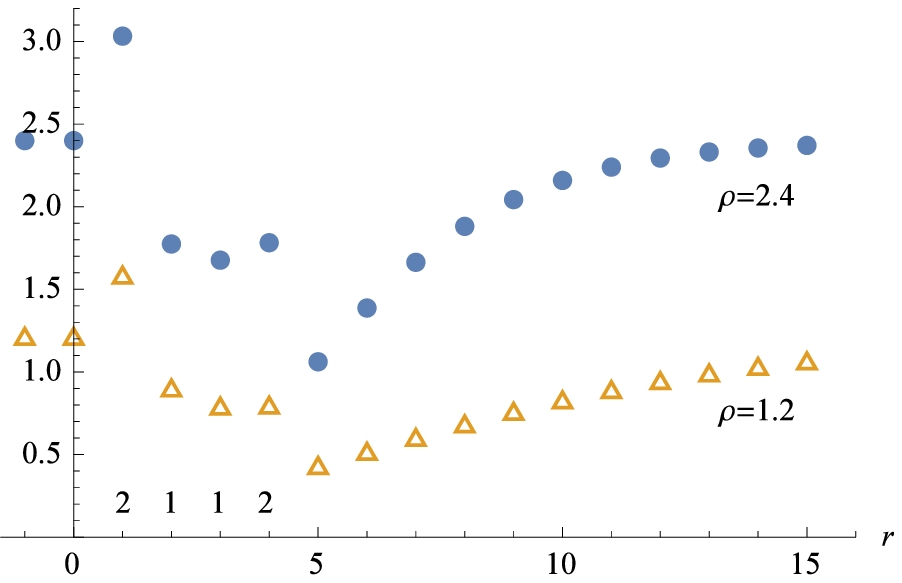}
\end{center}
\end{minipage}
\end{tabular}
\caption{Plots of $\rho(r)$ for the densities
$\rho = 1.2$ and $\rho=2.4$ with the 
presence of 
the defects $(d_1,\ldots, d_4)$ shown at sites $1,\ldots, 4$.
$(q,\mu)=(0.8,0.5)$.}
\label{n4keta}
\end{figure}

Examples of the density profile for a finite range of $\rho$
are given in Figure \ref{n2keta}.
To display the defect region I in the center, 
the lattice coordinate $r$ has been shifted.
A similar convention will be employed in the subsequent figures 
in this subsection except Figure \ref{j4keta}. 

\begin{figure}[H]
\begin{tabular}{cc}
\begin{minipage}[t]{0.45\hsize}
\begin{center}
\includegraphics[scale=0.6]{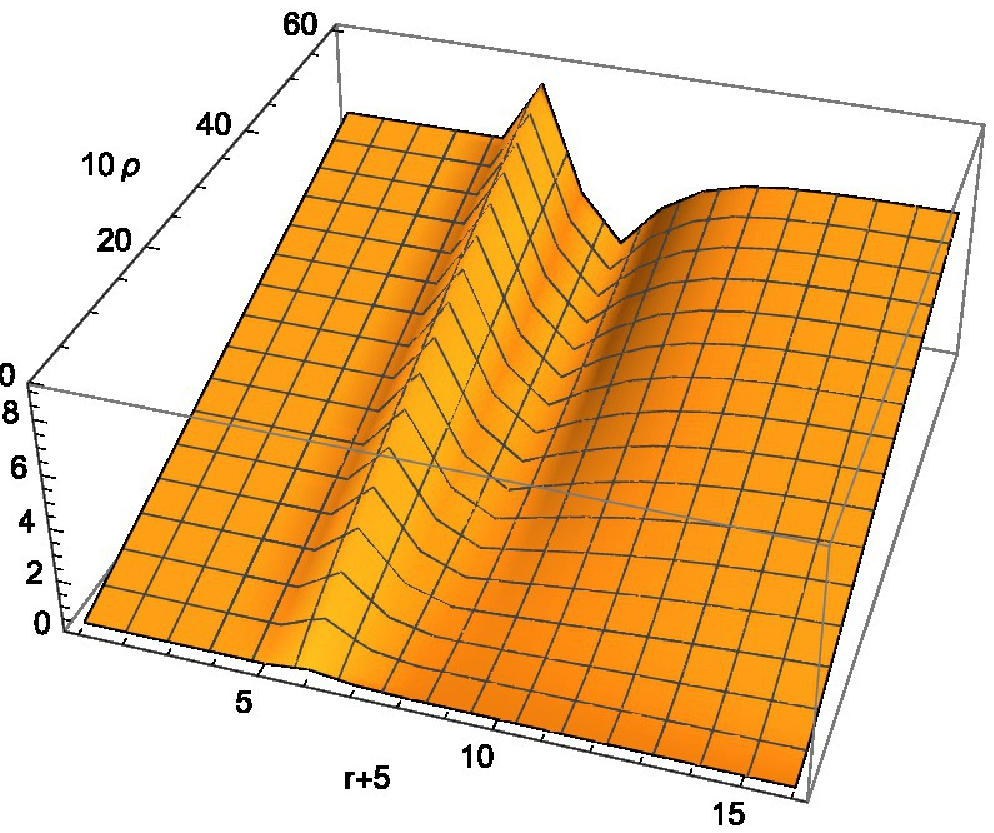}
\end{center}
\end{minipage}
&\;\;
\begin{minipage}[t]{0.45\hsize}
\begin{center}
\includegraphics[scale=0.6]{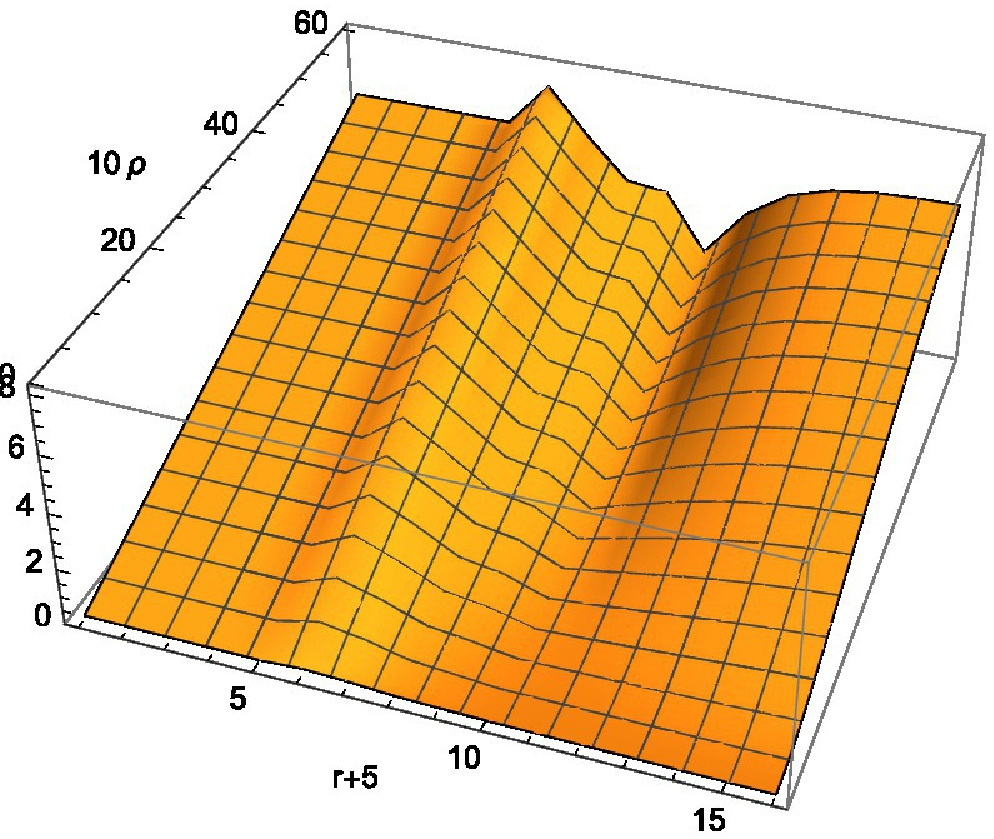}
\end{center}
\end{minipage}
\end{tabular}
\caption{Density profile $\rho(r)$ with $(q,\mu)=(0.6,0.7)$
for $0<\rho<6$.
The defects are $(d_1,d_2) = (3,3)$ (left) and 
$(d_1,\cdots,d_4) = (1,2,2,3)$ (right).}
\label{n2keta}
\end{figure}

For a fixed average density $\rho$,  
dependence of $\rho(r)$ on $q$ and $\mu$ are shown in Figure \ref{n1keta}.

\begin{figure}[H]
\begin{tabular}{cc}
\begin{minipage}[t]{0.45\hsize}
\begin{center}
\includegraphics[scale=0.6]{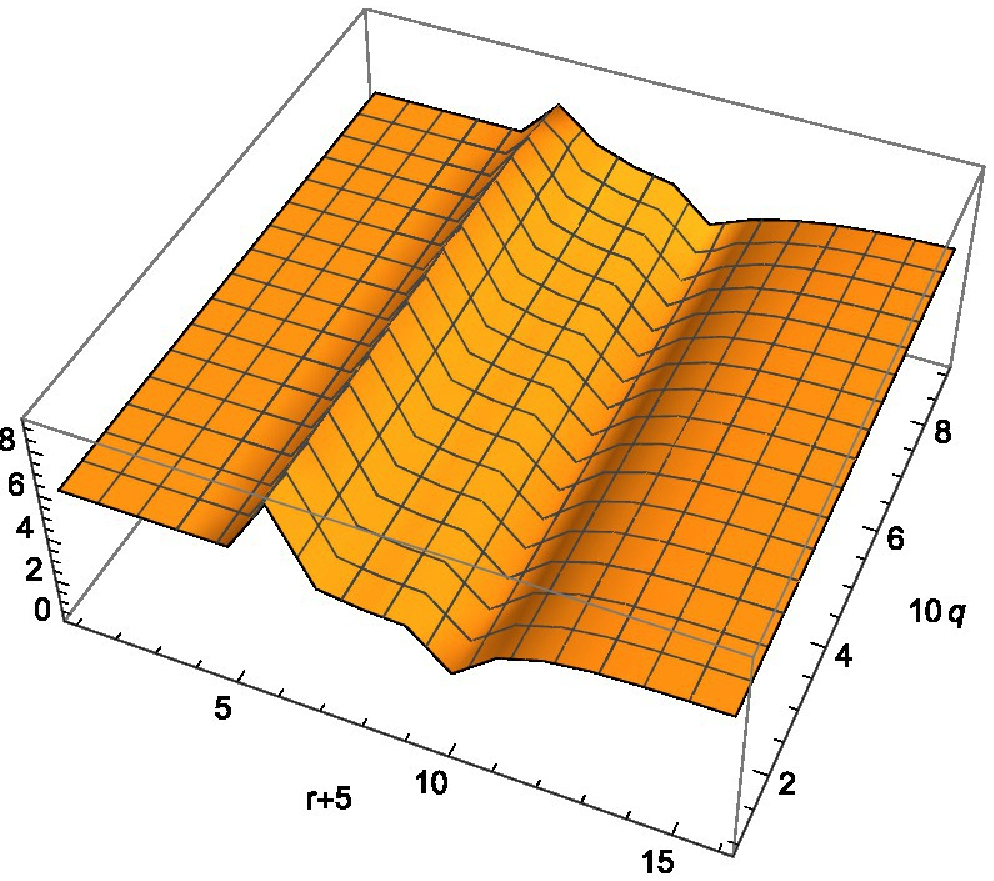}
\end{center}
\end{minipage}
&\;\;
\begin{minipage}[t]{0.45\hsize}
\begin{center}
\includegraphics[scale=0.6]{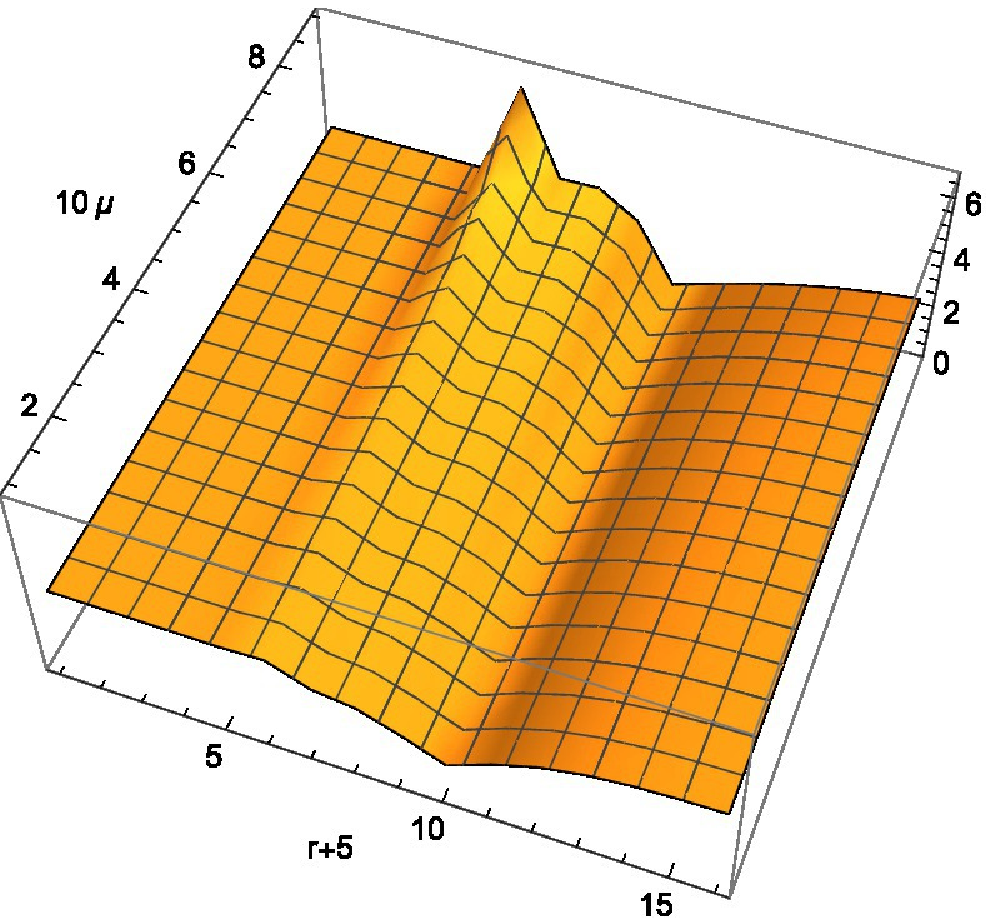}
\end{center}
\end{minipage}
\end{tabular}
\caption{Left:  $\rho(r)$ for $0.1 \le q \le 0.9$ with $(\rho,\mu)=(6,0.8)$
and defects $(d_1,\ldots, d_4) = (1,1,1,1)$.
Right:  $\rho(r)$ for $0.1 \le \mu \le 0.9$ with $(\rho,q)=(3,0.7)$
and defects $(d_1,\ldots, d_4) = (1,1,2,3)$.}
\label{n1keta}
\end{figure}

One sees that the inhomogeneous defects make the 
density profile in the region I irregular, but they  
still tend to produce a peak and a valley at their boundaries
as observed in the homogeneous case in Figure \ref{sch}.

Let us turn to the local currents $J(r)_{\pm}$ (\ref{jr}).
Recall that they have been determined as 
$J(r)_{\pm} = J_{\rm I}(r)_{\pm}$ (\ref{Jf21}), (\ref{Jf22})
for $1 \le r \le s$ and 
$J(r)_{\pm} = J_{\rm II}(r)_{\pm}$ (\ref{Jf1})
for $r >s$ and 
$J(r)_{\pm} = J_\pm$ (\ref{N3}) , (\ref{J0}) 
for $r \le 0$.
In view of (\ref{fk}) we plot $J_+(r)$ and $J_-(r+1)$.

\begin{figure}[H]
\begin{tabular}{cc}
\begin{minipage}[t]{0.45\hsize}
\begin{center}
\includegraphics[scale=0.65]{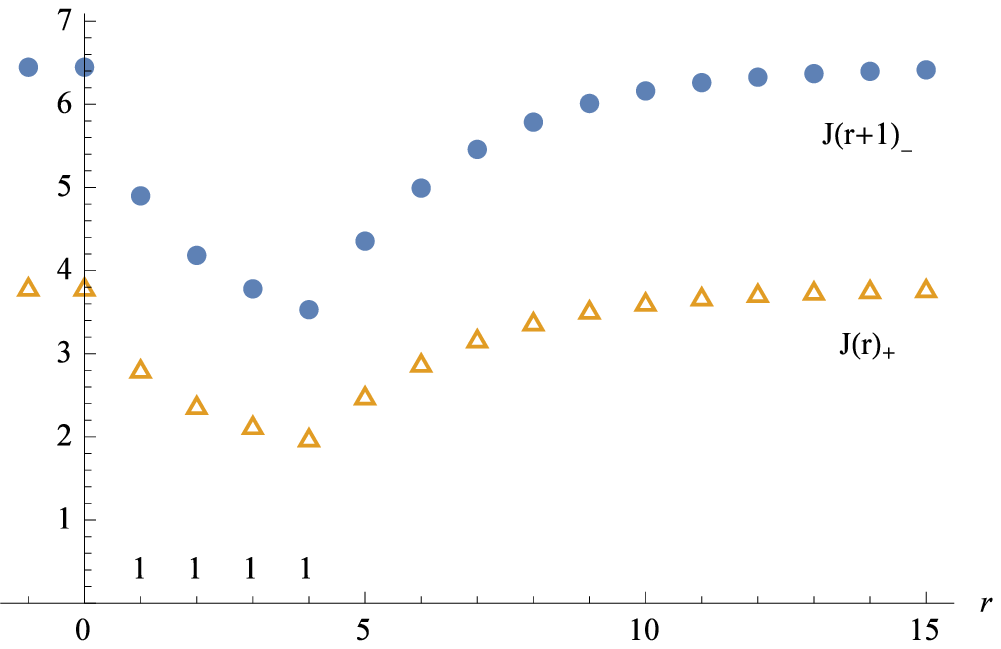}
\end{center}
\end{minipage}
&\;\;
\begin{minipage}[t]{0.45\hsize}
\begin{center}
\includegraphics[scale=0.65]{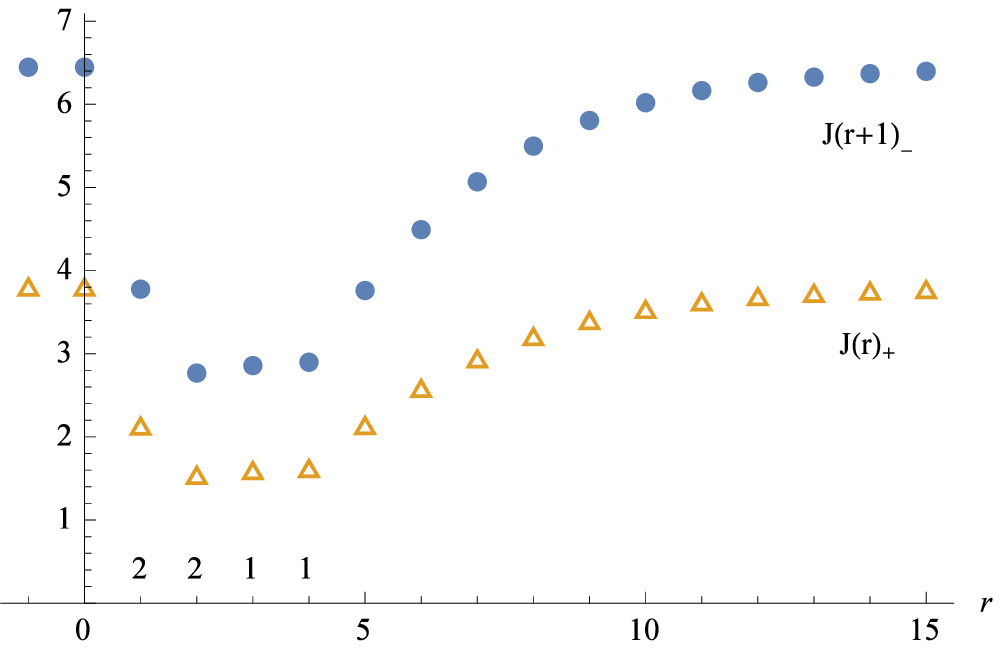}
\end{center}
\end{minipage}
\\
& \\
\begin{minipage}[t]{0.45\hsize}
\begin{center}
\includegraphics[scale=0.65]{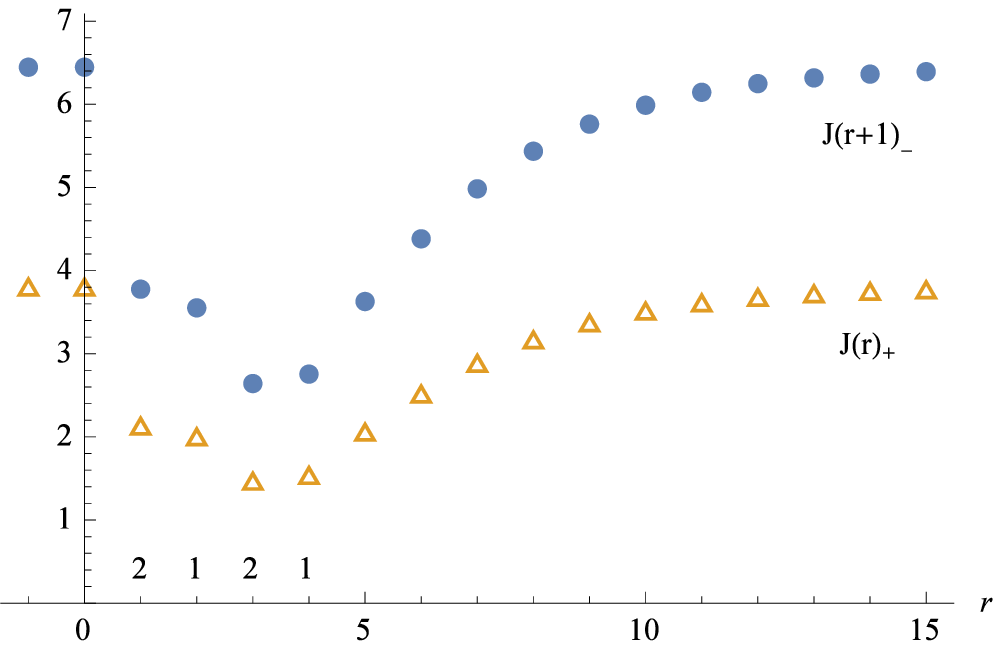}
\end{center}
\end{minipage}
&\;\;
\begin{minipage}[t]{0.45\hsize}
\begin{center}
\includegraphics[scale=0.65]{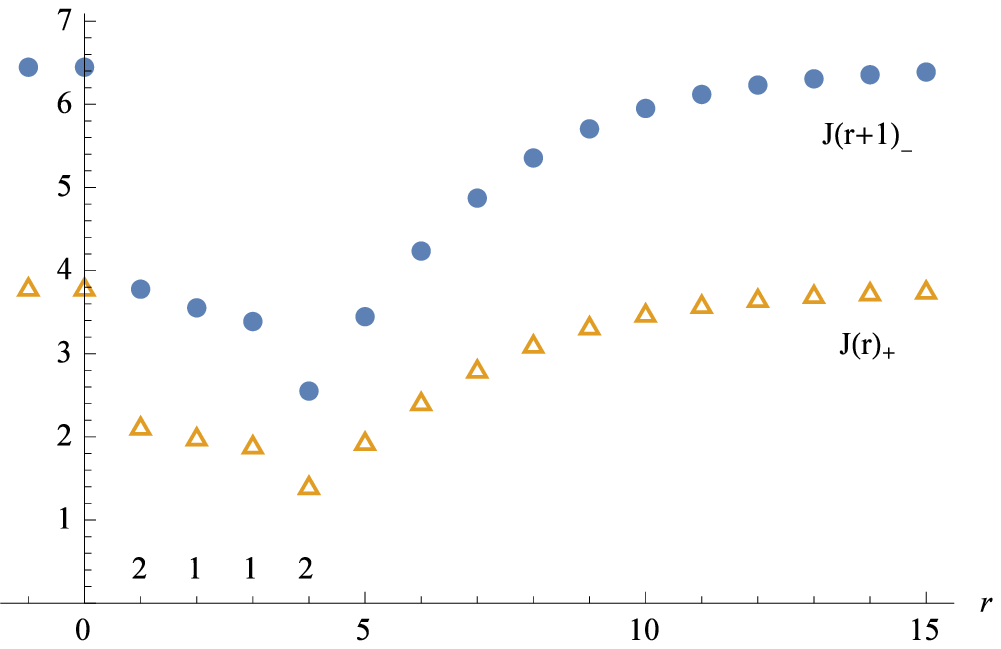}
\end{center}
\end{minipage}
\end{tabular}
\caption{Comparison of $J(r)_+$ and $J(r+1)_-$ in (\ref{fk})
for systems with the same defects as Figure \ref{n4keta}.
$(\rho, q,\mu)=(2.4, 0.8,0.5)$. 
}
\label{j4keta}
\end{figure}

\begin{figure}[H]
\begin{tabular}{cc}
\begin{minipage}[t]{0.45\hsize}
\begin{center}
\includegraphics[scale=0.6]{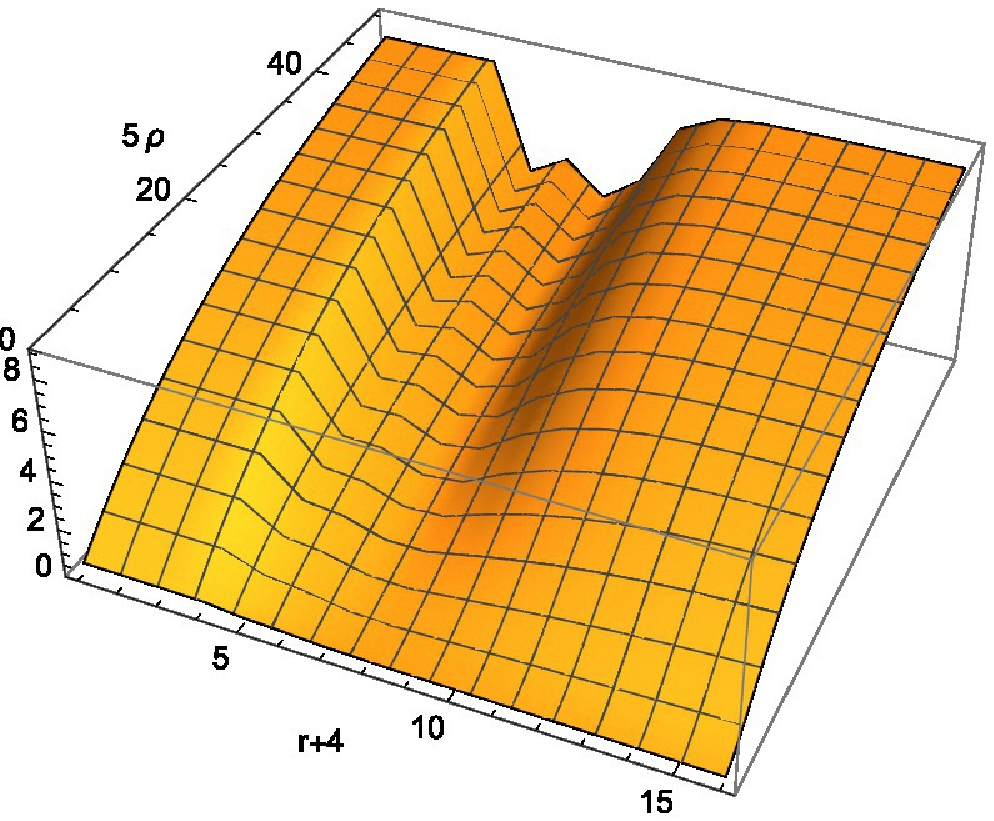}
\end{center}
\end{minipage}
&\;\;
\begin{minipage}[t]{0.45\hsize}
\begin{center}
\includegraphics[scale=0.6]{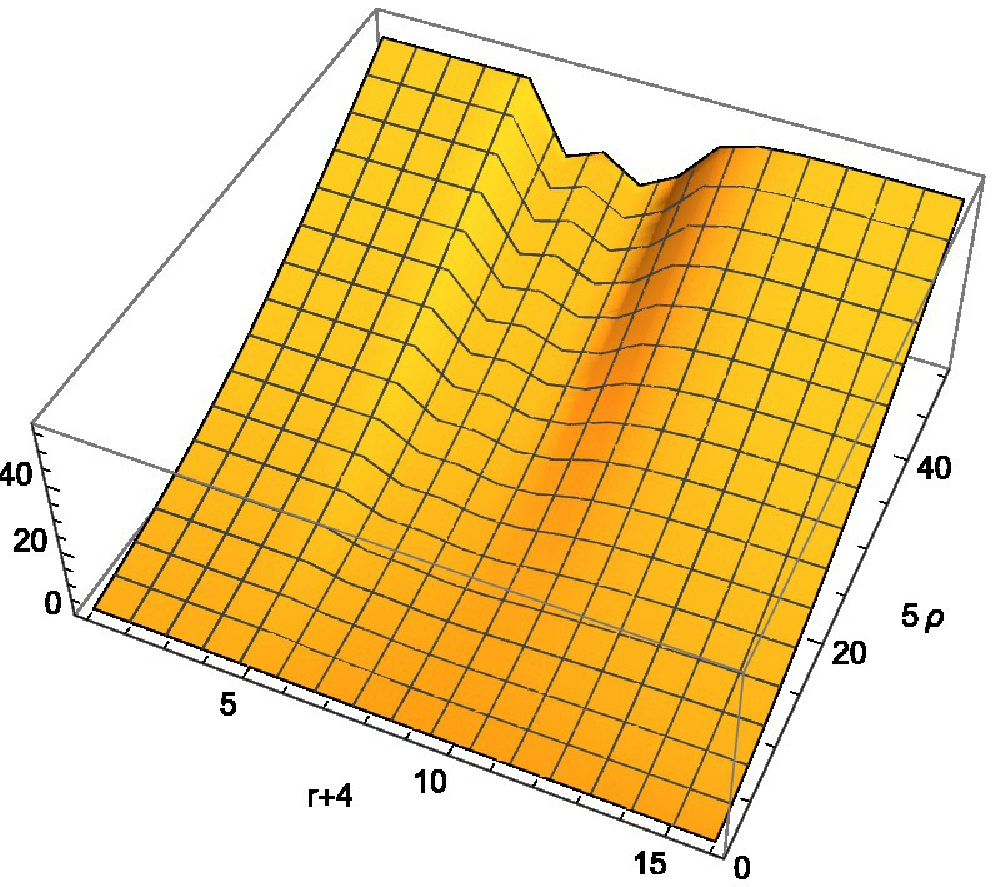}
\end{center}
\end{minipage}
\end{tabular}
\caption{Current profile $J(r)_+$ (left) and 
$J(r)_-$ (right) for $0<\rho < 10$
and $(q,\mu)=(0.8,0.5)$.
The defects are $(d_1,\ldots, d_4)=(2,1,2,1)$, which is the same as
the bottom left case of Figure \ref{n4keta} and \ref{j4keta}.
Apart from the inhomogeneity caused by the defects, 
their dependence on $\rho$ reflects the behavior 
in the defect-free case (\ref{jlow}) and (\ref{jhi}). }
\label{jpm4keta}
\end{figure}

\begin{figure}[H]
\begin{tabular}{cc}
\begin{minipage}[t]{0.45\hsize}
\begin{center}
\includegraphics[scale=0.6]{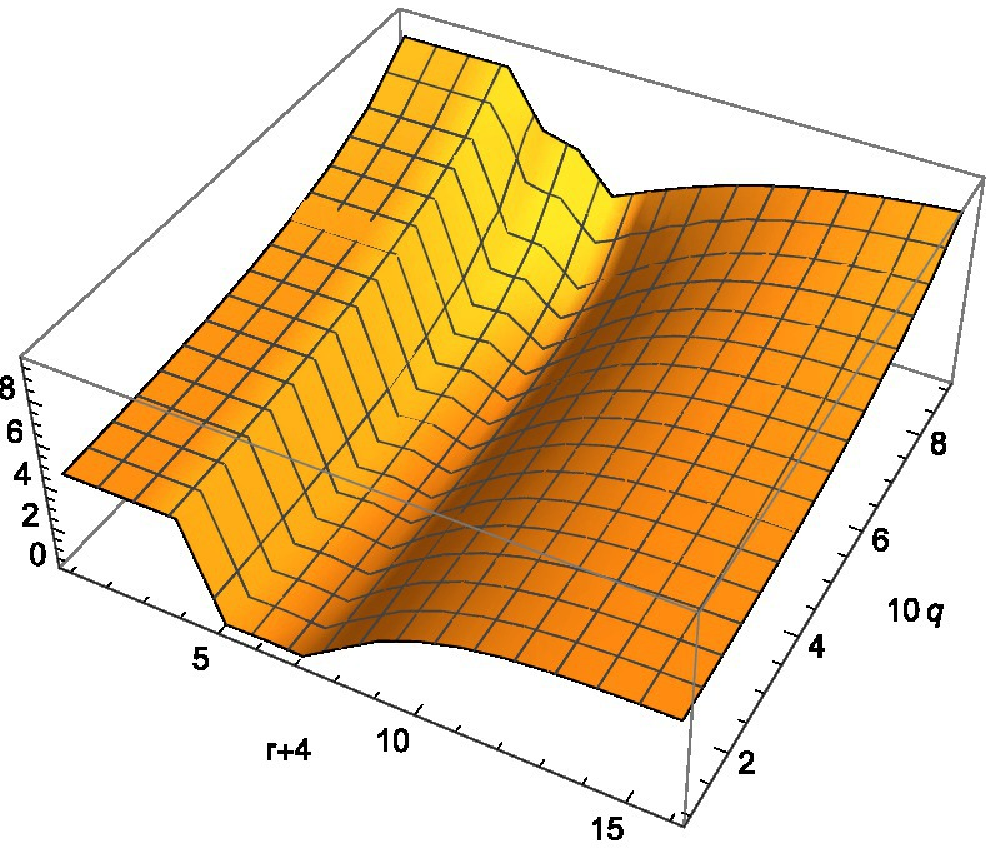}
\end{center}
\end{minipage}
&\;\;
\begin{minipage}[t]{0.45\hsize}
\begin{center}
\includegraphics[scale=0.6]{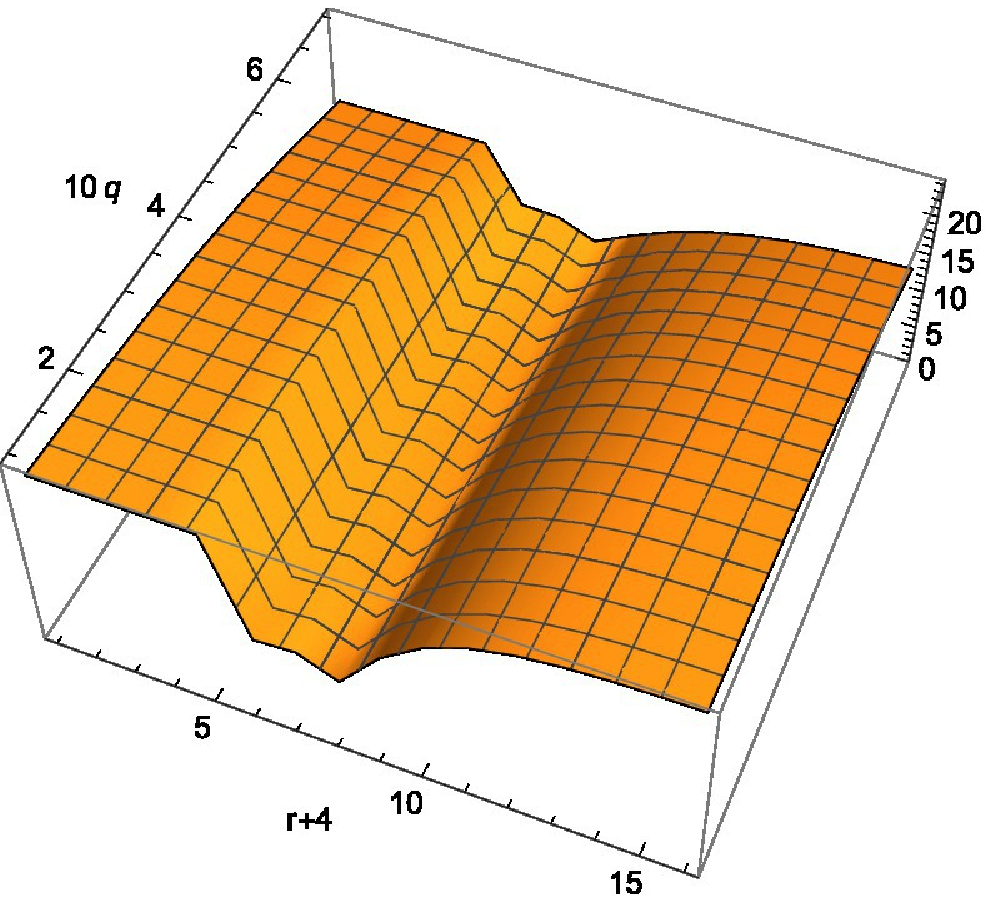}
\end{center}
\end{minipage}
\end{tabular}
\caption{Current profile $J(r)_+$ (left) and 
$J(r)_-$ (right) for $0.1 \le q  \le 0.9$,
$(\rho,\mu)=(3,0.7)$ and the defects 
$(d_1, d_2, d_3)=(2,1,3)$.}
\label{jmp213}
\end{figure}

\begin{figure}[H]
\begin{tabular}{cc}
\begin{minipage}[t]{0.45\hsize}
\begin{center}
\includegraphics[scale=0.6]{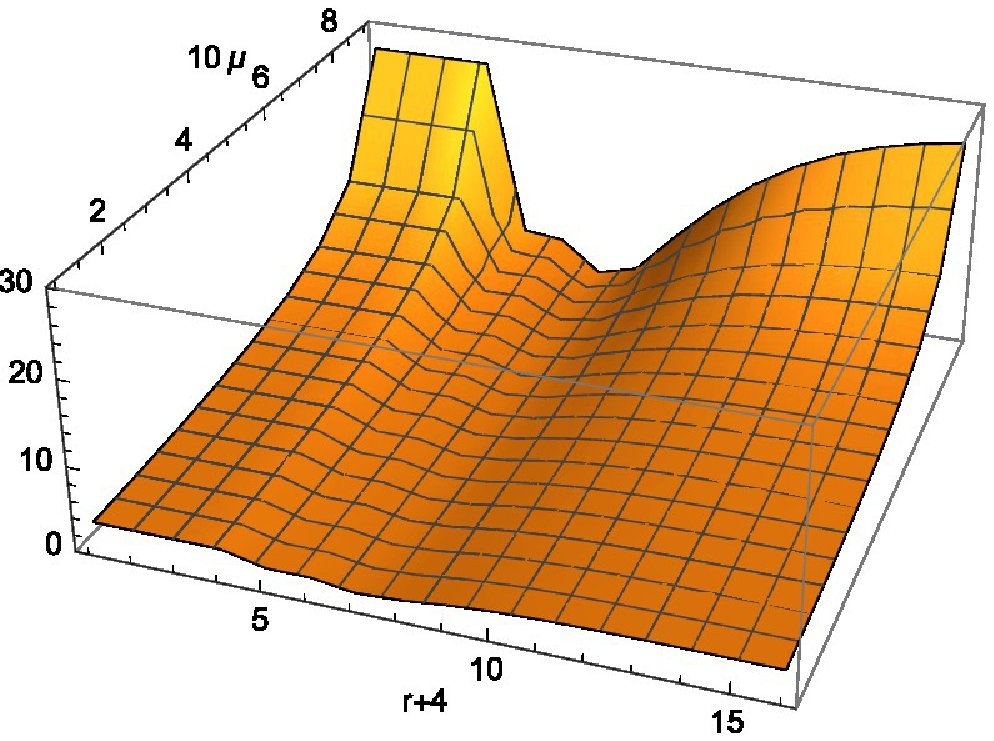}
\end{center}
\end{minipage}
&\;\;
\begin{minipage}[t]{0.45\hsize}
\begin{center}
\includegraphics[scale=0.6]{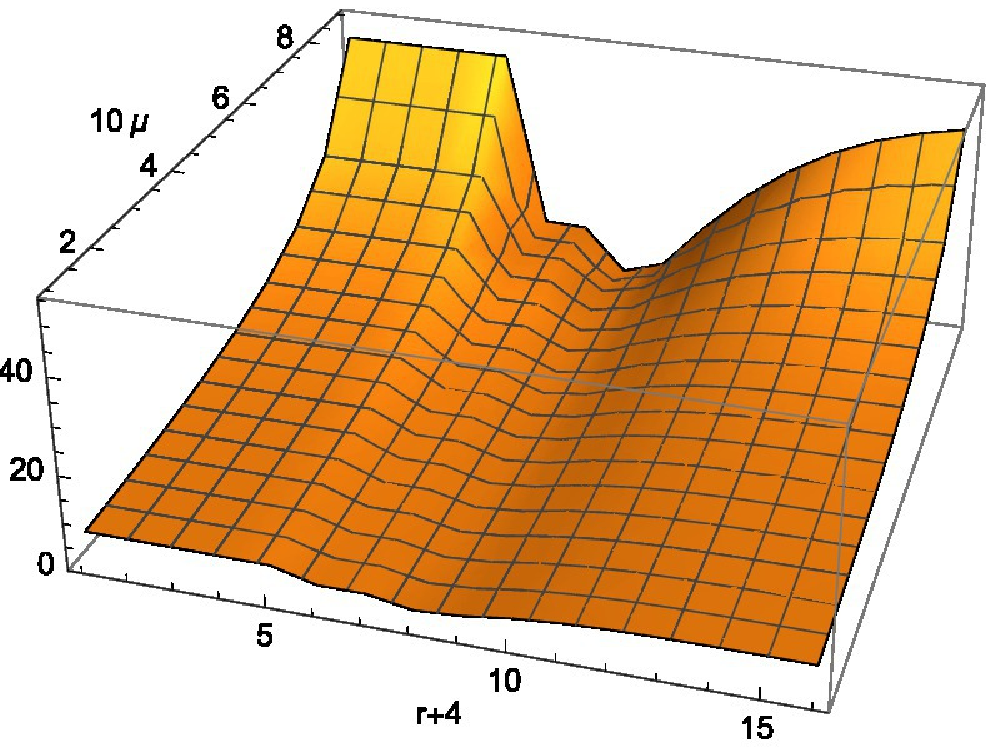}
\end{center}
\end{minipage}
\end{tabular}
\caption{Current profile $J(r)_+$ (left) and 
$J(r)_-$ (right) for $0.1 \le \mu  \le 0.9$,
$(\rho,q)=(4,0.8)$ and the defects 
$(d_1, \ldots, d_4)=(2,1,3,1)$.}
\label{jmp2131}
\end{figure}

In Figure \ref{n4keta} and \ref{j4keta},
one observes that 
$\rho(r)$ reaches its bottom at $r=s+1$ whereas 
$J_+(r)$ and $J_-(r+1)$ do not.

\section{Derivation of main results}\label{sec6}

In this section we derive the main results given in
Theorem \ref{th:naka}, \ref{th:migi}, \ref{th:hidari}
and Proposition \ref{pr:excess}.

\subsection{Preliminary}
Our first step is to reduce the trace 
$\mathrm{Tr}(\cdots)'$ over the Fock space 
to the ``vacuum expectation value" $\langle 0 | (\cdots) | 0\rangle$ 
in the infinite volume limit.  

\begin{proposition}\label{pr:red} 
The traces in the probabilities (\ref{Pmiddle})--(\ref{Pleft}) 
are reduced to the following:
\begin{align}
P_{\rm I}(r,n)&= \lim_{L \rightarrow \infty}
\frac{y^n\langle 0|A_{d_1}\cdots A_{d_{r-1}}X_{d_r,n}
A_{d_{r+1}}\cdots A_{d_s}
A_0^{L-s}|0\rangle}
{\langle 0|A_{d_1}\cdots A_{d_s} 
A_0^{L-s} |0\rangle},
\label{Pm}\\
P_{\rm II}(r,n) &= \lim_{L \rightarrow \infty}
\frac{y^n\langle 0|A_{d_1}\cdots A_{d_s} 
A_0^{r-s-1}X_{0,n}A_0^{L-r}|0\rangle}
{\langle 0|A_{d_1}\cdots A_{d_s} 
A_0^{L-s} |0\rangle},
\label{Pr}\\
P_{\rm III}(r,n)&= \lim_{L \rightarrow \infty}
\frac{y^n\langle 0|X_{0,n}A_0^{|r|} A_{d_1}\cdots A_{d_s} 
A_0^{L-|r|-s-1}|0\rangle}
{\langle 0|A_{d_1}\cdots A_{d_s} 
A_0^{L-s}|0\rangle}.
\label{Pl}
\end{align}
\end{proposition}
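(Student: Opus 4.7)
The plan is to show that, in the $L\to\infty$ limit, $A_0^N$ is dominated by a single rank-one spectral projection, so that both $\mathrm{Tr}(Q A_0^N)'$ and $\langle 0|Q A_0^N|0\rangle$ share the same leading asymptotics. The common factor then cancels between the numerator and denominator of each ratio in (\ref{Pmiddle})--(\ref{Pleft}), yielding exactly (\ref{Pm})--(\ref{Pl}).

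The first step is to establish the spectral picture of $A_0=\frac{(\mu\bb)_\infty}{(\bb)_\infty}\frac{(\mu y\bk)_\infty}{(y\bk)_\infty}$. From $\langle 0|\bb=0$ and $\bk|0\rangle=|0\rangle$ one reads off $\langle 0|A_0=\Lambda(y)\langle 0|$. In the Fock basis $\langle a|A_0|b\rangle\neq 0$ only for $a\ge b$, so $A_0$ is triangular with diagonal entries $(q)_m\Lambda(q^m y)$, and by (\ref{lin}) the eigenvalue $\Lambda(y)$ is strictly dominant with spectral gap $\Lambda(y)(1-\eta_1)>0$. Let $|v_0\rangle=\sum_m (v_0)_m |m\rangle$ denote the corresponding right eigenvector normalized by $(v_0)_0=1$, determined recursively from $A_0|v_0\rangle=\Lambda(y)|v_0\rangle$. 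The spectral projector onto the $\Lambda(y)$-eigenspace is then $P_0=|v_0\rangle\langle 0|$, and one expects
\begin{align*}
\Lambda(y)^{-N}A_0^N\;\longrightarrow\;P_0 \qquad (N\to\infty),
\end{align*}
with the correction controlled by the subdominant eigenvalues $\Lambda(q^m y)=\Lambda(y)\eta_m$.

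The second step is to substitute this asymptotic into every expression in (\ref{Pmiddle})--(\ref{Pleft}). For any fixed operator $Q$ built from finitely many factors $A_{d_i}$ or $X_{d_r,n}$, which is the form appearing on the right-hand sides, one then obtains
\begin{align*}
\Lambda(y)^{-N}\mathrm{Tr}(QA_0^N)'\;\longrightarrow\;\mathrm{Tr}(QP_0)'=\langle 0|Q|v_0\rangle,\quad
\Lambda(y)^{-N}\langle 0|QA_0^N|0\rangle\;\longrightarrow\;\langle 0|QP_0|0\rangle=\langle 0|Q|v_0\rangle,
\end{align*}
the trace identity using $\mathrm{Tr}(Q|v_0\rangle\langle 0|)=\sum_m(q)_m^{-1}\langle m|Q|v_0\rangle\langle 0|m\rangle=\langle 0|Q|v_0\rangle$ and the vacuum identity using $\langle 0|v_0\rangle=\langle 0|0\rangle=1$. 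Applying this with $N=L-s$, $L-r$, $L-|r|-s-1$ in the three cases of (\ref{Pmiddle})--(\ref{Pleft}), the common factor $\Lambda(y)^N\langle 0|\cdots|v_0\rangle$ cancels in each ratio, which is the content of (\ref{Pm})--(\ref{Pl}).

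The main obstacle is making the spectral limit $\Lambda(y)^{-N}A_0^N\to P_0$ precise: $A_0$ is unbounded, its subdominant eigenvalues $\Lambda(q^m y)$ accumulate at $1$ rather than at $0$, and both the trace and the ``partition function'' $\mathrm{Tr}(A_0^L)'$ diverge without the prime regularization. The cleanest route is to follow the model computation (\ref{P01}), where $\mathrm{Tr}(A_0^L)'$ is rewritten termwise as $\sum_{m\ge 0}(\Lambda(q^m y)^L-1)$: the $m=0$ term gives $\Lambda(y)^L-1\sim\Lambda(y)^L$, and the remainder, though divergent without the $-1$ subtraction, is by the type of argument used in Appendix \ref{app:ls} uniformly controlled and negligible compared to $\Lambda(y)^L$. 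Expanding the $\bb$, $\bc$, $\bk$ factors in (\ref{Adef}), (\ref{cie}) and using the prime constraint to balance $\bb$- and $\bc$-degrees reduces every trace in (\ref{Pmiddle})--(\ref{Pleft}) to an explicit sum over $m$ whose $m=0$ term coincides with the corresponding vacuum matrix element and whose tail vanishes in the limit by the same interchange-of-limits argument as in (\ref{ex1}).
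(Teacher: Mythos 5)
Your overall strategy is the right one and in fact lands on exactly what the paper does: the content of the proposition is the paper's Lemma \ref{le:zer}, which asserts that $\Lambda(y)^{-L}\mathrm{Tr}(QA_0^L)'$ and $\Lambda(y)^{-L}\langle 0|QA_0^L|0\rangle$ have the same finite limit for any word $Q$ of the form (\ref{Qf}); your Perron eigenvector $|v_0\rangle$ with $(v_0)_m=y^{-m}(\mu y)_m/(q)_m$ is precisely what emerges in the paper's evaluation of that common limit as $\sum_{m}y^{-m}(\mu y)_m\langle 0|Q|m\rangle/(q)_m$ (compare (\ref{ma0})). The vacuum side of your argument is essentially complete, because $\langle 0|Q|m\rangle$ vanishes for all but finitely many $m$, so the interchange of limit and sum there is trivial and the limit is computed by Lemma \ref{le:GFred}.

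The gap is on the trace side, where you dismiss the tail with ``the same interchange-of-limits argument as in (\ref{ex1})''. That is too quick. First, $\mathrm{Tr}(QP_0)'$ is not literally defined: $P_0=|v_0\rangle\langle 0|$ is not an element of $\mathcal{B}$ on which the projection $(\cdot)'$ acts, so the spectral shortcut must be abandoned exactly where the work begins. Second, after normal-ordering a general word $Q$ one obtains two kinds of contributions: those carrying a factor $\bk^n$ with $n\ge 1$, for which $q^{ln}$ supplies a summable dominating sequence and Lemma \ref{le:s1}(i) applies, and those carrying no positive power of $\bk$, for which domination fails outright because the subdominant eigenvalues $\eta_l\Lambda(y)$ accumulate at $\eta_\infty\Lambda(y)>0$ rather than at $0$ --- the very difficulty you flag but do not resolve. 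For the latter terms the prime regularization forces a subtraction of $\eta_\infty^L$ in every summand, and one must prove monotonicity in $L$ together with a uniform bound $\sum_l V_{L,l}<B$; in the paper this requires the asymptotics $v_{L,l}\simeq c_3\eta_l^L-c_4L^{\lambda}\eta_\infty^L$, a Stirling estimate on the multiplicity $\binom{L-1+\lambda}{\lambda}$, and convergence of the auxiliary series $\sum_{j\ge 1}f_jg_jy^j$ (Lemma \ref{le:krt}(b)). None of this is a routine repetition of the defect-free computation (\ref{ex1}); it is the entire technical content of Lemma \ref{le:zer}, and without it your proof of the proposition is a plan rather than an argument.
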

This is a corollary of Lemma \ref{le:zer}
in Appendix \ref{app:ls}.

Let us prepare the functions 
that will serve as building blocks to express 
(\ref{Pm}) and (\ref{Pr}).
For $l,m \in \Z_{\ge 0}$ set
\begin{align}
\phi(l|m) &= y^l \frac{(\mu)_l(y)_{m-l}}{(\mu y)_m}\binom{m}{l}_q
= \Phi_q(l|m;\mu, \mu y)|_{n=1},
\label{phidef}
\end{align}
where $\Phi_q$ was defined in (\ref{mho}).
Although the dependence on $\mu$ and $y$ is suppressed in this notation, 
it deserves attention that the fugacity $y$ 
plays the role of a spectral parameter here. 
From (\ref{syk}) we know
\begin{align}
&\sum_{l \ge 0}\phi(l|m)=1,
\label{utk}
\end{align}
where the summand is nonzero only for $0 \le l \le m$.

\begin{lemma}\label{le:ma}
For any $m \in \Z_{\ge 0}$ the following equality is valid:
\begin{align*}
\sum_{j=1}^m\phi(m-j|m)\sum_{i=0}^{j-1}\frac{1}{1-y q^i}
= \sum_{k=0}^{m-1}\frac{1}{1-\mu y q^k}.
\end{align*}
\end{lemma}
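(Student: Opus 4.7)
The plan is to view both sides as byproducts of the sum rule \eqref{utk} under a logarithmic differentiation in the fugacity $y$, which already carries exactly the right partial sums. I would proceed as follows.

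First I would extend the outer sum harmlessly to $j=0$ (the inner sum is empty then) and split $\frac{1}{1-yq^i} = 1 + \frac{yq^i}{1-yq^i}$ and $\frac{1}{1-\mu yq^k}=1+\frac{\mu yq^k}{1-\mu yq^k}$. Writing $\tilde S_j := \sum_{i=0}^{j-1}\frac{yq^i}{1-yq^i}$ and $\tilde T_m := \sum_{k=0}^{m-1}\frac{\mu yq^k}{1-\mu yq^k}$, the claim reduces, after using $\sum_{j=0}^m\phi(m-j|m)=1$ from \eqref{utk} and $\sum_j j\,\phi(m-j|m)+\sum_j(m-j)\phi(m-j|m)=m$, to the equivalent identity
\begin{equation*}
\sum_{j=0}^m \phi(m-j|m)\,\tilde S_j \;=\; \sum_{j=0}^m (m-j)\,\phi(m-j|m) \;+\; \tilde T_m.
\end{equation*}

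The key observation is that a direct computation from the definition \eqref{phidef} of $\phi(l|m)$ gives
\begin{equation*}
y\frac{\partial}{\partial y}\log\phi(m-j|m) \;=\; (m-j) \;-\; \tilde S_j \;+\; \tilde T_m,
\end{equation*}
because differentiating $\log\phi(m-j|m)=(m-j)\log y + \log\bigl[(\mu)_{m-j}\binom{m}{m-j}_q\bigr] + \sum_{i=0}^{j-1}\log(1-yq^i) - \sum_{k=0}^{m-1}\log(1-\mu yq^k)$ produces exactly these three terms.

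Now I would apply $y\,\partial/\partial y$ to the sum rule \eqref{utk}, written as $\sum_{j=0}^m\phi(m-j|m)=1$. Since the right hand side is constant in $y$,
\begin{equation*}
0 \;=\; \sum_{j=0}^m \phi(m-j|m)\,y\frac{\partial}{\partial y}\log\phi(m-j|m) \;=\; \sum_{j=0}^m \phi(m-j|m)\bigl[(m-j) - \tilde S_j + \tilde T_m\bigr],
\end{equation*}
which rearranges to the reduced identity above, completing the proof. The only step requiring care is the correct logarithmic derivative of $\phi(m-j|m)$; everything else is bookkeeping, so I do not expect a genuine obstacle.
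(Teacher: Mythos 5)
Your proof is correct, and it takes a genuinely different route from the paper. The paper proves the lemma by induction on $m$: it multiplies through by $(\mu y)_m$ so that both sides become polynomials of degree $m-1$ in $\mu$, checks equality at the $m$ interpolation points $\mu=q^{-r}$ ($r=0,\dots,m-1$), and shows that each such specialization reduces, after a $q$-binomial manipulation, to the same identity with $(m,\mu,y)$ replaced by $(r,q^{-m},q^m y)$, which is covered by the induction hypothesis. Your argument instead applies $y\,\partial/\partial y$ to the sum rule $\sum_{j=0}^m\phi(m-j|m)=1$ (an identity in $y$, so this is legitimate), computes the logarithmic derivative of $\phi(m-j|m)$ termwise from the definition \eqref{phidef} — the three contributions $(m-j)$, $-\tilde S_j$, $+\tilde T_m$ are exactly right since $(\mu)_{m-j}\binom{m}{j}_q$ is $y$-independent, $(y)_j$ contributes $-\tilde S_j$ and $(\mu y)_m^{-1}$ contributes $+\tilde T_m$ — and the resulting vanishing sum rearranges, via $\sum_j j\,\phi+\sum_j(m-j)\phi=m$ and the split $\frac{1}{1-\zeta}=1+\frac{\zeta}{1-\zeta}$, precisely into the claimed identity. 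I verified the bookkeeping; it closes. What your approach buys is brevity and a conceptual explanation (the identity is the derivative of the normalization condition), with no induction or interpolation needed; what the paper's approach buys is that it stays entirely within $q$-series manipulations of the type used elsewhere in the paper and exhibits the self-similar structure of $\phi$ under the substitution $(m,\mu,y)\mapsto(r,q^{-m},q^m y)$.
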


\begin{proof}
Explicitly it reads as
\begin{align}\label{mor1}
\sum_{j=0}^{m-1}
y^{j}\frac{(\mu)_{j}(y)_{m-j}}{(\mu y)_m}\binom{m}{j}_q
\sum_{i=0}^{m-j-1}\frac{1}{1-y q^i}
= \sum_{k=0}^{m-1}\frac{1}{1-\mu y q^k},
\end{align}
where we have replaced $j$ by $m-j$.
We prove (\ref{mor1}) by induction on $m$.
The case $m=0$ is obvious.
In what follows we assume (\ref{mor1}) is valid when $m$ is replaced by 
$0,1,\ldots, m-1$.

Upon multiplication by $(\mu y)_m$, 
the both sides of (\ref{mor1}) become polynomials in $\mu$ of order $m-1$.
Thus it suffices to check the equality at $m$ points, say $\mu=q^{-r}$ with 
$r=0,1,\ldots, m-1$.
So we set $\mu=q^{-r}$ in (\ref{mor1}) 
and further replace $y$ by $q^ry$ to simplify the formula slightly.
The result reads
\begin{align}\label{mor2}
\sum_{j=0}^r q^{jr} y^j \frac{(q^{-r})_j(q^ry)_{m-j}}{(y)_m}\binom{m}{j}_q
\Bigl(A + 
\sum_{i=0}^{r-j-1}
\frac{1}{1-y q^{m+i}}\Bigr)
= \sum_{k=0}^{m-1}\frac{1}{1- y q^k},
\end{align}
where $A = \sum_{i=r}^{m-1}\frac{1}{1-y q^i}$ and 
the upper bound of the $j$ sum has been reduced from $m-1$ to $r$ 
owing to the factor $(q^{-r})_j$. 
The coefficient of $A$ is $1$ due to (\ref{utk})
with $(\mu,y)$ replaced by $(q^{-r}, q^r y)$.
Let us subtract $A$ from (\ref{mor2}). 
The upper bound of the $k$ sum in the RHS becomes $r-1$. 
The second sum in the LHS restricts the upper bound of $j$ to $r-1$.
Further applying 
\begin{align*}
\frac{(q^r y)_{m-j}}{(y)_m} = \frac{(q^m y)_{r-j}}{(y)_r},\qquad
(q^{-r})_j\binom{m}{j}_q = q^{(m-r)j}(q^{-m})_j\binom{r}{j}_q
\end{align*}
in the process, we find that the result is equivalent to
\begin{align*}
\sum_{j=0}^{r-1} q^{mj}y^j
\frac{(q^{-m})_j(q^my)_{r-j}}{(y)_r}\binom{r}{j}_q
\sum_{i=0}^{r-j-1}
\frac{1}{1-y q^{m+i}} = \sum_{k=0}^{r-1}\frac{1}{1-y q^k}.
\end{align*}
This coincides with (\ref{mor1}) with $(m,\mu,y)$ replaced by 
$(r,q^{-m}, q^m y)$.
Since $r \le m-1$, its validity is assured by the induction hypothesis. 
\end{proof}

We note that the limit $y\rightarrow 1$ in Lemma \ref{le:ma} leads to 
the identity  
\begin{align}\label{kzn}
\frac{(q)_{d}}{(\mu)_{d}}
\sum_{k=1}^d \frac{1}{1-q^k}\frac{(\mu)_{d-k}}{(q)_{d-k}}
= \sum_{k=0}^{d-1}\frac{1}{1-\mu q^k}\quad (d \ge 0).
\end{align}

The following function plays the basic role in our working.
\begin{align}
G_{m,l}(d_1,\ldots, d_s) &=
\sum_{l_1+ \cdots + l_s + l\atop
= d_1+\cdots +d_s+m}\prod_{i=1}^s
\phi(l_i| m+d_1+\cdots + d_{i-1}-l_1-\cdots - l_{i-1})
\label{Gdef},
\end{align}
where $l,m \in \Z_{\ge 0}$.
The sum in (\ref{Gdef}) extends over 
$l_1, \ldots, l_s \in \Z_{\ge 0}$ obeying the specified condition. 
Since $\phi(l|m)=0$ unless $0 \le l \le m$, 
nonzero summands in (\ref{Gdef}) are only those satisfying
$l_1+\cdots + l_i \le m+d_1+\cdots d_{i-1}$ for 
$i=1,\ldots, s$.
The definition (\ref{Gdef}) is depicted as
\begin{equation}\label{Gm}
\begin{picture}(180,85)(-130,-7)

\put(-140,35){$G_{m,l}(d_1,\ldots, d_s)
= {\displaystyle \sum_{l_1,\ldots, l_s}}$}

\put(0,2.5){
\put(0,44){\vector(0,1){20}}
\put(-1.8,68){$l$}
\put(-21.5,49.4){$d_s$}\put(14,49){$l_s$}}

\multiput(-1.2,37.7)(0,3){3}{\put(0,0){.}}

\put(0,0){\line(0,1){35}}
\put(-10,55){\vector(1,0){20}}
\put(-10,25){\vector(1,0){20}}
\put(-10,10){\vector(1,0){20}}

\put(-21.5,21.4){$d_2$}\put(14,21){$l_2$}
\put(-21.5,6.4){$d_1$}\put(14,6){$l_1$}
\put(-3.8,-8.5){$m$}
\end{picture}
\end{equation}
if each vertex is interpreted 
as an element of the $n=1$ stochastic $R$ matrix
according to (\ref{phidef}), (\ref{ask2}) and (\ref{vertex}).
According to this diagram, the
$G_{m,l}(d_1,\ldots, d_s)$ may be regarded a sum of elements of 
a {\em column monodromy matrix} 
of the $U_q(A^{(1)}_1)$ ZRP containing the fugacity $y$ 
as a spectral parameter via (\ref{phidef}).
Some special cases which will be of frequent use are
\begin{align}
&G_{m,l}(\emptyset) = \delta_{m,l},\quad
G_{m,l}(d_1) = \phi(m+d_1-l|m),
\quad G_{0,l}(d_1)= \delta_{l, d_1},
\label{Gex1}\\
&G_{0,l}(d_1,\ldots, d_s) =
\sum_{l_1+ \cdots + l_{s-1} + l\atop
= d_1+\cdots +d_s}\prod_{i=1}^{s-1}
\phi(l_i|d_1+\cdots + d_i-l_1-\cdots - l_{i-1})\quad (s\ge 2),
\label{Gex2}
\end{align}
where the sum in (\ref{Gex2}) extends over 
$l_1, \ldots, l_{s-1} \in \Z_{\ge 0}$ under  the specified condition. 
It is easy to see 
\begin{align}
&G_{m,l}(d_1,\ldots, d_s) = 0  \;\;\text{unless}\;\;
d_s \le l \le d_1+\cdots + d_s + m,
\label{Gz}\\
&\sum_{l \ge 0}G_{m,l}(d_1,\ldots, d_s)=1,
\label{G1}\\
&G_{m,l}(d_1,\ldots, d_s) = 
\sum_{k \ge 0}
G_{m,k}(d_1,\ldots, d_t)G_{k,l}(d_{t+1},\ldots, d_s)\quad
(1 \le t <s),
\label{Grec}\\
&G_{m,l}(\,\overset{u}{\overbrace{0,\ldots,0}}\,)
= \eta_l^u q^{l(l-m)}\binom{m}{l}_q 
\Bigl(G_{m-l,0}(\,\overset{u}{\overbrace{0,\ldots,0}}\,)
\left|_{y \rightarrow q^l y}\right.\Bigr),
\label{G0}\\
&\lim_{y \rightarrow 0}G_{m,l}(d_1,\ldots, d_s)
= \delta_{l,m+d_1+\cdots + d_s}, 
\qquad
\lim_{y \rightarrow 1}
G_{m,l}(d_1,\ldots, d_s) = \delta_{l, d_s},
\label{Gy}
\\
&\lim_{y \rightarrow 1}\frac{G_{m,d_s+j}(d_1,\ldots, d_s)}{1-y}
= \frac{1}{1-q^j}\frac{(q)_{d_{s-1}}(\mu)_{d_{s-1}-j}}
{(\mu)_{d_{s-1}}(q)_{d_{s-1}-j}}
\quad
\text{if }j \ge 1. 
\label{Gyy}
\end{align}
In our working there appear many sums involving 
$G_{0,m}(d_1,\ldots, d_s)$.
They always range over those $m$'s that satisfy the non-vanishing condition 
implied by (\ref{Gz}).
The sum rule (\ref{G1}) is derived by successive use of (\ref{utk}).
The relation (\ref{Grec}) follows directly from the 
diagrammatic representation (\ref{Gm}).
The sum over $k$ in it is finite due to (\ref{Gz}).
In particular when $t=1$ it implies 
$G_{0,l}(d_1,\ldots, d_s) = G_{d_1,l}(d_2,\ldots, d_s)$ 
due to (\ref{Gex1}).
The relation (\ref{G0}) is derived by applying 
$\phi(\gamma|\beta)|_{y\rightarrow q^i y}
= \phi(\gamma|\beta+i)
\frac{(q)_{\beta-\gamma+i}(q)_\beta}
{(q)_{\beta+i}(q)_{\beta-\gamma}}q^{i\gamma}\eta_i^{-1}$
to the representation (\ref{Gm}).
Combining (\ref{Gyy}) with (\ref{kzn}), we get 
\begin{align}\label{ayk}
\lim_{y \rightarrow 1} \;\sum_{j\ge 1} 
\frac{G_{0,d_s+j}(d_1,\ldots, d_s)}{1-y}
= \sum_{k=0}^{d_{s-1}-1}
\frac{1}{1- \mu q^k}.
\end{align}
This relation will be utilized to extract the large $\rho$ limits 
(\ref{rir0}),  (\ref{ktn}), (\ref{msw}) and (\ref{rna}).
We note that neither $G_{m,l}(0,d_1,\ldots,d_{s-1})$ nor
$G_{m,l}(d_1,\ldots,d_{s-1},0)$ are equal to 
$G_{m,l}(d_1,\ldots,d_{s-1})$.
The function $G_{m,l}(d_1,\ldots, d_s)$ originates 
in the following quantity representing the effect of defects.
\begin{lemma}\label{le:GF}
\begin{align*}
&\langle m | A_{d_1}\cdots A_{d_s}|l\rangle 
=y^{l-m-d_1-\cdots -d_s}g_{d_1} \cdots g_{d_s}
\Lambda(y)^s\frac{(q)_l(\mu y)_m}{(\mu y)_l}
G_{m,l}(d_1,\ldots, d_s).
\end{align*}
\end{lemma}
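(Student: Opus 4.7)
The plan is to prove the lemma by induction on $s$, the number of defect operators. The key ingredients are the explicit formula for $A_d$ from (\ref{Adef}), the expansion (\ref{air}) for the infinite $q$-boson product, the bilinear pairing $\langle m|m'\rangle = \delta_{m,m'}(q)_m$, and the multiplicative recursion (\ref{Grec}) for $G_{m,l}$.

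First I would establish the base case $s=1$ by direct computation. Starting from $|l\rangle$, apply $\bc^d|l\rangle = \frac{(q)_l}{(q)_{l-d}}|l-d\rangle$, then use that $\frac{(q^d\mu y\bk)_\infty}{(y\bk)_\infty}$ acts diagonally on $|l-d\rangle$ via $\bk|l-d\rangle=q^{l-d}|l-d\rangle$, producing the scalar $\frac{(\mu yq^l)_\infty}{(yq^{l-d})_\infty}$. Finally, expand $\frac{(\mu\bb)_\infty}{(\bb)_\infty} = \sum_{j\ge 0}\frac{(\mu)_j}{(q)_j}\bb^j$ using (\ref{air}) and take the pairing with $\langle m|$, which picks out $j=m-l+d$. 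Rewriting the scalar $\frac{(\mu yq^l)_\infty}{(yq^{l-d})_\infty}$ using the identities $(y)_\infty=(y)_{l-d}(yq^{l-d})_\infty$ and $(\mu y)_\infty=(\mu y)_l(\mu yq^l)_\infty$ gives the factor $\Lambda(y)\frac{(y)_{l-d}}{(\mu y)_l}$; combining with the remaining factors reproduces $y^{l-m-d}g_d\Lambda(y)\frac{(q)_l(\mu y)_m}{(\mu y)_l}\phi(m+d-l|m)$, which equals the claimed RHS via $G_{m,l}(d)=\phi(m+d-l|m)$ from (\ref{Gex1}).

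For the inductive step, assume the formula holds for $s$ defects and consider $s+1$. Insert the completeness relation $\mathbf{1}=\sum_{k\ge 0}\frac{|k\rangle\langle k|}{(q)_k}$ between $A_{d_1}\cdots A_{d_s}$ and $A_{d_{s+1}}$:
\begin{align*}
\langle m|A_{d_1}\cdots A_{d_{s+1}}|l\rangle = \sum_{k\ge 0}\frac{1}{(q)_k}\langle m|A_{d_1}\cdots A_{d_s}|k\rangle\,\langle k|A_{d_{s+1}}|l\rangle.
\end{align*}
Apply the induction hypothesis to the first factor and the $s=1$ case to the second. The $(\mu y)_k$ in the numerator of one factor cancels the $(\mu y)_k$ in the denominator of the other, the $(q)_k$'s cancel, and the powers of $y$ and $\Lambda(y)$ combine as expected, leaving
\begin{align*}
y^{l-m-d_1-\cdots-d_{s+1}}g_{d_1}\cdots g_{d_{s+1}}\Lambda(y)^{s+1}\frac{(q)_l(\mu y)_m}{(\mu y)_l}\sum_{k\ge 0}G_{m,k}(d_1,\ldots,d_s)G_{k,l}(d_{s+1}).
\end{align*}
The sum over $k$ collapses to $G_{m,l}(d_1,\ldots,d_{s+1})$ by the recursion (\ref{Grec}) with $t=s$, completing the induction.

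The main technical obstacle is the base case: matching the mess of $q$-shifted factorials produced by acting $A_d$ on $|l\rangle$ against the expression $\phi(m+d-l|m)$ appearing in $G_{m,l}(d)$. Once this bookkeeping is done cleanly, the inductive step is essentially automatic because the structure of the lemma (a product $g_{d_1}\cdots g_{d_s}\Lambda(y)^s$ with a single factor $\frac{(q)_l(\mu y)_m}{(\mu y)_l}$ and the building block $G$) is precisely designed so that the cross-factors $(\mu y)_k$, $(q)_k$ telescope under the convolution of two consecutive $G$'s.
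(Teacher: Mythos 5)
Your proof is correct, and it takes a genuinely different route from the paper's. The paper does the computation in one shot: it expands all $s$ factors simultaneously as $A_{d_i}=g_{d_i}\sum_{l_i\ge 0}g_{l_i}\bb^{l_i}\frac{(q^{d_i}\mu y\bk)_\infty}{(y\bk)_\infty}\bc^{d_i}$, pushes every diagonal factor $\frac{(q^{d_i}\mu y\bk)_\infty}{(y\bk)_\infty}$ to the left so that it evaluates to a scalar via $\frac{(q^{M+d_i}\mu y)_\infty}{(q^{M}y)_\infty}=\Lambda(y)\frac{(y)_M}{(\mu y)_{M+d_i}}$, and then computes the single matrix element $\langle m|\bb^{l_1}\bc^{d_1}\cdots\bb^{l_s}\bc^{d_s}|l\rangle$ in closed form; the surviving multiple sum over $l_1,\ldots,l_s$ is then \emph{literally} the definition (\ref{Gdef}) of $G_{m,l}(d_1,\ldots,d_s)$, so no recursion for $G$ is needed. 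You instead prove only the $s=1$ case directly (your evaluation of $\frac{(\mu yq^{l})_\infty}{(yq^{l-d})_\infty}=\Lambda(y)\frac{(y)_{l-d}}{(\mu y)_l}$ and the matching against $\phi(m+d-l|m)$ both check out), insert a resolution of the identity, and appeal to the convolution property (\ref{Grec}). This is legitimate — (\ref{Grec}) is derived in the paper straight from (\ref{Gdef}) and the diagram (\ref{Gm}), so there is no circularity, and the sum over the intermediate index is finite by (\ref{Gz}). What your version buys is transparency: the telescoping of $(q)_k$ and $(\mu y)_k$ makes it evident why the normalization $\frac{(q)_l(\mu y)_m}{(\mu y)_l}$ is the canonical one. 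What the paper's version buys is that it exhibits $G_{m,l}$ directly as the column monodromy sum without presupposing its factorization.
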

\begin{proof}
Substitute the expansion 
$A_{d_i} = g_{d_i}
\sum_{l_i \ge 0} g_{l_i}\bb^{l_i}
\frac{(q^{d_i}\mu y \bk)_\infty}{(y \bk)_\infty}\bc^{d_i}$ 
of (\ref{Adef}) into the LHS.
By sending all the factors 
$\frac{(q^{d_i}\mu y \bk)_\infty}{(y \bk)_\infty}$
in the bracket to the left by (\ref{akn}), we find that 
$\frac{\langle m | A_{d_1}\cdots A_{d_s}|l\rangle}
{g_{d_1}\cdots g_{d_s}}$
is expressed as
\begin{align*}
\sum_{l_1+ \cdots + l_s + l\atop
= d_1+\cdots +d_s+m} g_{l_1}\cdots g_{l_s}\prod_{i=1}^s
\frac{(q^{m+d_1+\cdots +d_i-l_1-\cdots - l_i}\mu y)_\infty}
{(q^{m+d_1+\cdots +d_{i-1}-l_1-\cdots - l_i}y)_\infty}
\langle m | \bb^{l_1}\bc^{d_1}\cdots \bb^{l_s}\bc^{d_s}|l\rangle,
\end{align*}
where the sum is over $l_1, \ldots, \l_s \in \Z_{\ge 0}$ 
and the constraint is imposed to pick the non-vanishing bracket.
From 
$\frac{(q^{M+d_i} \mu y)_\infty}{(q^M y)_\infty}
= \Lambda(y) \frac{(y)_M}{(\mu y)_{M+d_i}}$
and
\begin{align*}
\langle m | \bb^{l_1}\bc^{d_1}\cdots \bb^{l_s}\bc^{d_s}|l\rangle
=\delta^{l_1+ \cdots + l_s + l}_{d_1+\cdots +d_s+m}
\prod_{i=1}^s 
\binom{m+d_1+\cdots + d_{i-1}-l_1-\cdots - l_{i-1}}{l_i}_q
(q)_{l_i},
\end{align*}
the claimed formula follows.
\end{proof}

\begin{lemma}
\begin{align}
\lim_{L \rightarrow \infty}
\Lambda(y)^{-L} 
\langle m |A_{d_1}\cdots A_{d_s}A_0^{L-s} | 0\rangle 
= y^{-m-d_1-\cdots - d_s}g_{d_1}\cdots g_{d_s}(\mu y)_m.
\label{Ad0}
\end{align} 
\end{lemma}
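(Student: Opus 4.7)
The plan is to reduce to Lemma~\ref{le:GF} by inserting a resolution of the identity between the defect block $A_{d_1}\cdots A_{d_s}$ and the defect-free tail $A_0^{L-s}$, and then to identify the resulting large-$L$ limit via the composition rule (\ref{Grec}).

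Using $\mathbf{1}_F=\sum_{l\ge 0}|l\rangle\langle l|/(q)_l$, I would write
\begin{align*}
\langle m|A_{d_1}\cdots A_{d_s}A_0^{L-s}|0\rangle
=\sum_{l\ge 0}\frac{1}{(q)_l}\,\langle m|A_{d_1}\cdots A_{d_s}|l\rangle\,\langle l|A_0^{L-s}|0\rangle
\end{align*}
and apply Lemma~\ref{le:GF} to each factor (to the second one with all $d_i=0$ and with the roles $m\to l$, $l\to 0$, using $g_0=1$). The $l$-sum is automatically finite by (\ref{Gz}); after the $(q)_l$ and $(\mu y)_l$ factors cancel, the two copies of $G$ collapse via (\ref{Grec}) into a single function, yielding
\begin{align*}
\Lambda(y)^{-L}\langle m|A_{d_1}\cdots A_{d_s}A_0^{L-s}|0\rangle
=y^{-m-d_1-\cdots-d_s}g_{d_1}\cdots g_{d_s}(\mu y)_m\,G_{m,0}(d_1,\ldots,d_s,\overset{L-s}{\overbrace{0,\ldots,0}}).
\end{align*}
So the claim reduces to $\lim_{L\to\infty}G_{m,0}(d_1,\ldots,d_s,\overset{L-s}{\overbrace{0,\ldots,0}})=1$.

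I would apply (\ref{Grec}) once more to peel off the defect block,
\begin{align*}
G_{m,0}(d_1,\ldots,d_s,\overset{L-s}{\overbrace{0,\ldots,0}})=\sum_k G_{m,k}(d_1,\ldots,d_s)\,G_{k,0}(\overset{L-s}{\overbrace{0,\ldots,0}}),
\end{align*}
a finite sum by (\ref{Gz}). Since $\sum_k G_{m,k}(d_1,\ldots,d_s)=1$ by (\ref{G1}), it suffices to prove that $\lim_{N\to\infty}G_{k,0}(\overset{N}{\overbrace{0,\ldots,0}})=1$ for every $k\ge 0$ and every $y\in(0,1)$. I would argue this by strong induction on $k$. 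The base case $k=0$ is trivial (each factor is $\phi(0|0)=1$), and for $k\ge 1$ combining (\ref{G1}) with formula (\ref{G0}) gives
\begin{align*}
G_{k,0}(\overset{N}{\overbrace{0,\ldots,0}})
=1-\sum_{l=1}^{k}\eta_l(y)^{N}q^{l(l-k)}\binom{k}{l}_{\!q}\,G_{k-l,0}(\overset{N}{\overbrace{0,\ldots,0}})\big|_{y\to q^{l}y}.
\end{align*}
By (\ref{lin}) we have $0<\eta_l(y)<1$ for $l\ge 1$, so $\eta_l(y)^{N}\to 0$ geometrically, while the induction hypothesis applied at $y'=q^{l}y\in(0,1)$ makes each remaining $G$ factor bounded; thus every term in the finite sum vanishes in the limit.

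The main obstacle is this last step, since (\ref{G0}) feeds the recursion back at the shifted fugacity $q^{l}y$ rather than at $y$, so the induction must be formulated in a manner that tolerates this shift. The saving observation is that $q^{l}y$ still lies in $(0,1)$ whenever $y$ does, so the induction hypothesis applies verbatim.
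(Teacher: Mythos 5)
Your proposal is correct, and it is non-circular: each of the facts you invoke --- Lemma \ref{le:GF}, (\ref{Gz}), (\ref{G1}), (\ref{Grec}), (\ref{G0}), (\ref{lin}) --- is established in the paper independently of (\ref{Ad0}). The overall skeleton (insert $\sum_l |l\rangle\langle l|/(q)_l$, apply Lemma \ref{le:GF}, and reduce everything to the defect-free tail) coincides with the paper's, but the key step is handled by a genuinely different induction. The paper first proves the $s=0$ statement $\lim_L\Lambda(y)^{-L}\langle m|A_0^L|0\rangle=y^{-m}(\mu y)_m$ by induction on $m$: it peels off a single $A_0$, uses the induction hypothesis for the terms with $l<m$, and is left with a linear fixed-point equation $U_m=(\text{known})+\eta_m U_m$ that it solves using $\sum_l\phi(l|m)=1$; the general $s$ then follows from Lemma \ref{le:GF} and (\ref{G1}) exactly as in your first display. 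You instead convert the tail itself into $G$-functions via Lemma \ref{le:GF} (equivalently, via (\ref{gfq}) the two formulations of the key limit, $\lim_L \Lambda(y)^{-L}\langle m|A_0^L|0\rangle = y^{-m}(\mu y)_m$ and $\lim_N G_{m,0}(0,\dots,0)=1$, are the same statement) and run a strong induction on the occupation index $k$, using the fugacity-shift identity (\ref{G0}) together with the strict inequalities (\ref{lin}). Your route has the advantage of making the convergence rate visible --- the correction to $1$ is manifestly $O(\eta_1^N)$ --- and of needing only boundedness of the lower-order $G$'s (which in fact follows for free from $0\le G_{m,l}\le 1$), at the cost of relying on the identity (\ref{G0}); the paper's route avoids (\ref{G0}) entirely but requires solving the self-consistency equation and tacitly arguing that the limit $U_m$ exists. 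Your observation that the shifted fugacity $q^l y$ stays in $(0,1)$, so the induction hypothesis must be (and is) uniform in the fugacity, is exactly the point that makes the induction close.
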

\begin{proof}
First we show the $s=0$ case, i.e.,
\begin{align}
&\lim_{L \rightarrow \infty}
\Lambda(y)^{-L}\langle m | A_0^L | 0\rangle = 
y^{-m}(\mu y)_m
\label{ma0}
\end{align}
by induction on $m$.
At $m=0$, it is valid since 
$\langle 0 | A_0^L | 0\rangle = \Lambda(y)^L$ for any finite $L$.
Assume that it is valid for $0,1,\ldots, m-1$ and let $U_m$ denote the 
LHS of (\ref{ma0}) to be determined.
Then we have
\begin{align*}
U_m &= \Lambda(y)^{-1}
\sum_{l=0}^m \frac{ \langle m | A_0 | l\rangle}{(q)_l}
\lim_{L \rightarrow \infty}
\Lambda(y)^{-L+1}\langle l | A_0^{L-1} | 0\rangle
\nonumber\\
&= \frac{(y)_\infty}{(\mu y)_\infty}\sum_{l=0}^m
\frac{(\mu)_{m-l}(q)_m}{(q)_{m-l}(q)_l}
\frac{(q^l\mu y)_\infty}{(q^l y)_\infty}
\bigl(y^{-l}(\mu y)_l+ \delta_{l,m}(U_m- y^{-m}(\mu y)_m)\bigr)
\nonumber\\
&= y^{-m}(\mu y)_m\sum_{l=0}^m\phi(l|m)-y^{-m}(y)_m + 
\frac{(y)_m}{(\mu y)_m}U_m.
\end{align*}
Solving this taking (\ref{utk}) into account we find 
$U_m = y^{-m}(\mu y)_m$ establishing (\ref{ma0}).
Now we prove (\ref{Ad0}) for general $s$.
By inserting $1= \sum_{l\ge 0}\frac{|l\rangle\langle l|}{(q)_l}$
into the LHS, it becomes
\begin{align*}
\Lambda(y)^{-s}\sum_{l \ge 0}
\frac{1}{(q)_l}
\langle m | A_{d_1}\cdots A_{d_s} | l\rangle 
\lim_{L \rightarrow \infty}\Lambda(y)^{-L+s}
\langle l| A_0^{L-s} | 0\rangle.
\end{align*}
Due to Lemma \ref{le:GF} and (\ref{ma0}),
this is equal to
$y^{-m-d_1-\cdots - d_s}g_{d_1}\cdots g_{d_s}(\mu y)_m
\sum_{l \ge 0}G_{m,l}(d_1,\ldots, d_s)$.
Thus the proof is completed by (\ref{G1}).
\end{proof}

\begin{remark}\label{re:hrm}
By Lemma \ref{le:zer}, 
the result (\ref{Ad0}) with $m=0$ is equivalent to 
\begin{align}
\lim_{L \rightarrow \infty}
\Lambda(y)^{-L} 
\mathrm{Tr}\bigl(A_{d_1}\cdots A_{d_s} A_0^{L-s}\bigr)'
= y^{-d_1-\cdots - d_s}g_{d_1}\cdots g_{d_s}.
\label{Ad1}
\end{align} 
The factor $g_{d_1}\cdots g_{d_s}$ here
is proportional to the stationary probability
of the configuration $(d_1,\ldots, d_s)$ of 
the single species model \cite{P} of the first class particles on 
the length $s$ ring.
In this sense 
the effect of the second class particles in the grand canonical picture
is ``renormalized" into the extra factor $y^{-1}$ 
for the first class particles' fugacity,
reducing the 
$U_q(A^{(1)}_2)$ ZRP effectively to the 
$U_q(A^{(1)}_1)$ ZRP. 
We expect a similar recursive feature in 
the $U_q(A^{(1)}_n)$ ZRP with general $n$ \cite{KMMO}, which may be viewed
as a reminiscent of the {\em nested Bethe ansatz}.
The result (\ref{Ad1}) is valid including $d_i=0$.
It implies that separated clusters of the first class particles 
do not attract nor repel depending on their distance 
under the grand canonical background of 
the second class particles.
They feel each other only when they merge together at the same site or 
split from a common site.
\end{remark} 

The following result reduces the multiple sum 
in $G_{m,l}(d_1,\ldots, d_s)$ (\ref{Gdef})
into a {\em single} sum when $d_1=\cdots = d_s=0$ and 
elucidates the $s$-dependence explicitly.
It will be utilized in the region II to extract the large distance behavior of
the density and currents away from the defects.

\begin{lemma}\label{le:GFred}
For $0 \le i \le m$, the following formula is valid:
\begin{align*}
 G_{m,i}(\,\overset{r}{\overbrace{0,\ldots, 0}}\,)
& = \sum_{i\le j \le m}(-1)^{i+j}
q^{\frac{1}{2}i(i-1)+\frac{1}{2}j(j+1-2m)}
\binom{m}{j}_{\!\!q}\binom{j}{i}_{\!\!q}\eta_j^r.
\end{align*}
\end{lemma}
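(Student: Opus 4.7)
The plan is to reinterpret $G_{m,l}(\overset{r}{\overbrace{0,\ldots,0}})$ as the $(m,l)$ entry of the $r$-th power of a single lower-triangular matrix, and then diagonalize that matrix. Iterating the concatenation rule (\ref{Grec}) with $t=1$, together with the base case $G_{m,k}(0)=\phi(m-k|m)$ from (\ref{Gex1}), gives
\begin{align*}
G_{m,l}(\overset{r}{\overbrace{0,\ldots,0}}) = (M^r)_{m,l}, \qquad M_{m,k} := \phi(m-k|m).
\end{align*}
Since $\phi(a|b)$ vanishes unless $0\le a\le b$, $M$ is lower triangular (rows indexed by $m$, columns by $k$) with diagonal entries $M_{m,m}=\phi(0|m)=\eta_m$. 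These are pairwise distinct for generic $y$, so $M$ is diagonalizable with spectrum $\{\eta_j\}_{j\ge 0}$.

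The central step is to exhibit an explicit right eigenvector for each $\eta_j$. I claim $v^{(j)}_m := q^{-jm}\binom{m}{j}_q$ (with $v^{(j)}_m=0$ for $m<j$) satisfies $Mv^{(j)}=\eta_j v^{(j)}$. Inserting the explicit form of $\phi$ from (\ref{phidef}), applying $\binom{m}{k}_q\binom{k}{j}_q=\binom{m}{j}_q\binom{m-j}{k-j}_q$, and changing variables to $l:=k-j$, $N:=m-j$, $z:=q^j y$, the eigenvalue equation collapses to
\begin{align*}
\sum_{l=0}^{N} z^l (\mu)_l (z)_{N-l}\binom{N}{l}_q = (\mu z)_N,
\end{align*}
which is exactly the $n=1$ case of the sum rule (\ref{syk}) under the parameter substitution $(\lambda,\mu)\to(\mu,z\mu)$.

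To finish, set $P_{m,j}:=(-1)^j q^{\binom{j+1}{2}-jm}\binom{m}{j}_q$ (so column $j$ of $P$ is a scalar multiple of $v^{(j)}$) and $Q_{j,i}:=(-1)^i q^{\binom{i}{2}}\binom{j}{i}_q$. Then $MP=PD$ with $D=\mathrm{diag}(\eta_0,\eta_1,\ldots)$, and the inversion $PQ=I$ is a standard $q$-binomial identity: using the same product-of-binomials move and then the $q$-binomial theorem $\sum_k (-1)^k q^{\binom{k}{2}}\binom{N}{k}_q w^k=(w)_N$ specialized at $w=q^{1+i-m}$, the $(m,i)$-entry of $PQ$ reduces to $q^{i^2-im}\binom{m}{i}_q(q^{1+i-m})_{m-i}$, which vanishes for $m>i$ (one factor is $1-q^0$) and equals $1$ for $m=i$. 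Combining $M^r=PD^rQ$ with these formulas produces the claimed expression after matching the exponent $\binom{j+1}{2}-jm+\binom{i}{2}=\tfrac12 j(j+1-2m)+\tfrac12 i(i-1)$. The main obstacle is recognizing the eigenvalue equation as a disguise of the $n=1$ sum rule (\ref{syk}); once the substitutions $z=q^j y$, $N=m-j$ are spotted, the $q$-algebra is mechanical. Although diagonalizability requires the $\eta_m$ to be distinct, both sides of the target identity are rational functions of $y$, so the general case follows by analytic continuation.
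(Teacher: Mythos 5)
Your proof is correct, and it takes a genuinely different route from the paper's. The paper first strips off the $i$-dependence via the relation (\ref{gfq}), reducing everything to the single-variable quantity $F_{m,r}(y)$ of (\ref{Fdef}), and then proves the closed form (\ref{manga2}) by induction on $r$, with the inductive step resting on the $q$-Chu--Vandermonde evaluation (\ref{manga3}) of a terminating ${}_2\phi_1$. You instead diagonalize the one-step transfer matrix $M_{m,k}=\phi(m-k|m)$ directly: the eigenvalue equation for $v^{(j)}_m=q^{-jm}\binom{m}{j}_q$ collapses, after the product-of-binomials identity and the shift $z=q^jy$, to the $n=1$ sum rule (\ref{syk}), and the inverse $Q=P^{-1}$ is the standard $q$-binomial inversion. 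Both verifications are complete and the lower-triangularity keeps all sums finite, so there is no convergence issue; moreover, since you verify $MP=PD$ and $PQ=I$ directly, the identity $M^r=PD^rQ$ follows without actually needing the $\eta_m$ to be distinct, so your closing appeal to analytic continuation is superfluous (though harmless). What your approach buys is a structural explanation of the answer: the $r$-dependence enters only through $\eta_j^r$ because the $\eta_j$ of (\ref{etadef}) are precisely the eigenvalues of $M$, which ties the decay modes discussed after (\ref{rinf}) directly to the spectrum of the defect-free transfer step. What the paper's route buys is that the intermediate formula (\ref{bx}) for $F_{m,r}(y)$ is itself reused elsewhere (e.g.\ in the proof of Lemma \ref{le:zer} and Lemma \ref{le:krt}), so the induction is not wasted effort in the larger scheme of the paper.
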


\begin{proof}
Introduce the function 
\begin{align}
F_{m,r}(y) 
= \sum_{l_1+ \cdots + l_r = m}
g_{l_1}\cdots g_{l_r}\prod_{j=1}^{r-1}
\eta_{l_{j+1}+l_{j+2}+\cdots+ l_r}\qquad
(m\ge 0, r \ge 1),
\label{Fdef}
\end{align}
where the sum is over 
$l_1, \ldots, l_r \in \Z_{\ge 0}$ under the 
specified condition.
It is related to the LHS by 
\begin{align}\label{gfq}
G_{m,i}(\,\overset{r}{\overbrace{0,\ldots, 0}}\,) =
 y^{m-i}\frac{(q)_m(\mu y)_i}{(q)_i(\mu y)_m}\eta_i^{r}F_{m-i,r}(q^i y),
\end{align}
where $F_{m-i,r}(q^i y)$ is given by (\ref{Fdef})
with $\eta_j=\eta_j(y)$ (\ref{etadef})
replaced by $\eta_j(q^i y)$.
The relation (\ref{gfq}) is easily checked by means of (\ref{G0}) and 
$\eta_{i+j}=\eta_{i+j}(y) = \eta_i\eta_j(q^i y)$.

Replacing $y,m$ and the summation variable $j$ with $q^{-i}y,m+i,j+i$ and using
\begin{equation} \label{manga1}
(y)_{m-l}=(-q^my)^{-l}q^{l(l+1)/2}\frac{(y)_m}{(q^{1-m}y^{-1})_l}
\end{equation}
with $y=q$, the formula to show becomes 
\begin{equation}\label{manga2}
F_{m,r}(y)=y^{-m}\frac{(\mu y)_m}{(q)_m}\sum_{0\le j\le m}
\frac{(q^{-m})_j}{(q)_j}\Bigl(\frac{(y)_j}{(\mu y)_j}\Bigr)^r q^j.
\end{equation}
We prove this by induction on $r$. When $r=1$, the RHS is expressed 
in terms of the $q$-hypergeometric \cite[eq.(1.2.14)]{GR} as 
$y^{-m}\frac{(\mu y)_m}{(q)_m}{}_2\phi_1
\left({q^{-m},y\atop\mu y};q, q\right)$.
Due to the formula \cite[p354(II.6)]{GR}
\begin{equation}\label{manga3}
{}_2\phi_1\left({q^{-n},a\atop c}; q, q\right)=a^n\frac{(c/a)_n}{(c)_n},
\end{equation}
we obtain $\frac{(\mu)_m}{(q)_m}$, 
which agrees with (\ref{Fdef}) with $r=1$.
We now prove the $r+1$ case assuming the $r$ case holds. 
From (\ref{Fdef})
we have
\[
F_{m,r+1}(y)=\sum_{l=0}^m
\frac{(\mu)_l(y)_{m-l}}{(q)_l(\mu y)_{m-l}}F_{m-l,r}(y).
\]
Using the induction hypothesis and \eqref{manga1}, exchanging the orders
of the summations with respect to $l$ and $j$, it can be written as
\[
F_{m,r+1}(y)=y^{-m}\frac{(y)_m}{(q)_m}\sum_{j=0}^m
\frac{(q^{-m})_j}{(q)_j}\left(\frac{(y)_j}{(\mu y)_j}\right)^rq^j
{}_2\phi_1\left({q^{-m+j},\mu\atop q^{1-m}y^{-1}};q, q\right).
\]
Using \eqref{manga3} and 
$(q^{1-m}y^{-1})_m=(-y)^{-m}q^{-m(m-1)/2}(y)_m$,
we arrive at the expression \eqref{manga2} with $r$ replaced by $r+1$, 
which finishes the proof.
\end{proof}

Lemma \ref{le:GFred} and (\ref{gfq}) with $i=0$ yield the formula
\begin{align}\label{bx}
F_{m,r}(y) = y^{-m}\frac{(\mu y)_m}{(y)_m}
\sum_{j=0}^m(-1)^{j}
q^{\frac{1}{2}j(j+1-2m)}\binom{m}{j}_{\!\!q}
\eta_j^r.
\end{align}

We remark that by the definition (\ref{Gdef}) and (\ref{phidef}) 
the LHS in Lemma \ref{le:GFred} is regular at $q=0$ and 
manifestly positive in the range $0<q,\mu,y<1$ under consideration,
whereas such features are highly nontrivial in the RHS.
In particular, individual terms therein can 
become $O(q^{-\frac{1}{2}m^2})$
which needs care in numerical evaluation for small $q$.

\subsection{Proof of Theorem \ref{th:naka}}\label{ss:naka}

Theorem \ref{th:naka} 
is concerned with the region I (\ref{reg}).
Let us derive (\ref{Pk2}) from (\ref{Pm}).
Substituting (\ref{cie}) into it and applying (\ref{Ad0}) to the 
denominator we get 
\begin{align*}
P_{\rm I}(r,n) = y^n \frac{(\mu)_{d_r+n}}{(q)_{d_r}(q)_n}
\Bigl(\prod_{i=1}^s g_{d_i}^{-1}y^{d_i}\Bigr)
\lim_{L \rightarrow \infty}
\Lambda(y)^{-L}
\langle 0 | A_{d_1} \cdots A_{d_{r-1}} 
\frac{(\mu \bb)_\infty}{(\bb)_\infty}\bk^n\bc^{d_r}
A_{d_{r+1}} \cdots A_{d_{s}} A_0^{L-s}|0\rangle.
\end{align*}
Insert $1= \sum_{i\ge 0}\frac{|i\rangle \langle i|}{(q)_i}$
at the two places so as to use 
\[
\langle l |\frac{(\mu \bb)_\infty}{(\bb)_\infty}\bk^n\bc^{d_r}|m \rangle
= g_{l-m+d_r}q^{n(m-d_r)}\frac{(q)_m(q)_l}{(q)_{m-d_r}}.
\]
By Lemma \ref{le:GF} and (\ref{Ad0}) the result reads
\begin{align*}
P_{\rm I}(r,n)  = y^n\frac{(q^{d_r}\mu)_n}{(q)_n}\Lambda(y)^{-1}
\sum_{l,m \ge 0}q^{n(m-d_r)}\frac{(\mu y)_m}{(y)_{m-d_r}}
G_{0,l}(d_1,\ldots, d_{r-1})\phi(l-m+d_r|l),
\end{align*}
where $\phi(l-m+d_r|l)$ is defined in (\ref{phidef}).
Thus (\ref{Pk2}) follows by taking the $l$ sum 
using (\ref{Grec}) with $(s,t)=(r,r-1)$
and (\ref{Gex1}).

Next we show (\ref{N2}) for the local density.
After substituting (\ref{Pk2}) into (\ref{occ}), the sum is taken by  
\begin{align*}
\sum_{n\ge 0} n y^n q^{n(m-d_r)}\frac{(q^{d_r}\mu)_n}{(q)_n} 
\frac{(y)_\infty}{(\mu y)_\infty}
&= \frac{(y)_\infty}{(\mu y)_\infty}
y\frac{\partial }{\partial y}
\frac{(q^m \mu y)_\infty}{(q^{m-d_r} y)_\infty}\\
& = 
\frac{(y)_{m-d_r}}{(\mu y)_m}
\Bigl(\rho  + \sum_{k=0}^{m-1}\frac{\mu y q^k}{1-\mu y q^k}
- \sum_{k=0}^{m-d_r-1}\frac{y q^k}{1-y q^k}\Bigr),
\end{align*}
where $\rho$ is the average density (\ref{rhof}).
The resulting expression for $\rho_{\rm I}(r)$ agrees with (\ref{N2})
except that $\rho$ comes with the coefficient 
$\sum_m G_{0,m}(d_1,\ldots, d_r)$. 
But this equals 1 by (\ref{G1}).

Let us proceed to the currents (\ref{Jf21}) and (\ref{Jf22}).
From (\ref{jr}) they are expressed as
\begin{align*}
J_{\rm I}(r)_{\pm} = \sum_{n \ge l \ge 1}
lw_\pm((0,l)|(d_r,n))P_{\rm I}(r,n).
\end{align*}
Substitution of (\ref{wp}), (\ref{wm}) and 
(\ref{Pk2}) leads to the calculation similar to (\ref{J0m}).
Its essential part is
\begin{align*}
\frac{(q^{m-d_r}y)_\infty}{(q^m\mu y)_{\infty}}
\sum_{n \ge l \ge 1}lw_\pm((0,l)|(d_r,n))
y^n q^{n(m-d_r)}\frac{(q^{d_r}\mu)_n}{(q)_n}
= 
\begin{cases}
\mu^{-1}h(q^m \mu y),\\
h(q^{m-d_r}y),
\end{cases}
\end{align*}
where $h(y)$ is the derivative of the $q$-digamma function 
defined in (\ref{hdef}).
Now we have
\begin{equation}\label{jgh}
J_{\rm I}(r)_\pm = 
\sum_m G_{0,m}(d_1,\ldots, d_r)\times
\begin{cases}
\mu^{-1}h(q^m \mu y), \\
h(q^{m-d_r}y).
\end{cases}
\end{equation}
Expand $h(q^m \mu y)$ and $h(q^{m-d_r}y)$
into the sum of $h(y)$ and the remainder terms by (\ref{hh}).
From (\ref{J0}) the contribution from the $h(y)$ is expressed as 
$\sum_m G_{0,m}(d_1,\ldots, d_r)J_\pm  = J_\pm $ by (\ref{G1}).
The remainder terms give the RHS of
(\ref{Jf21}) and (\ref{Jf22}).

\subsection{Proof of Theorem \ref{th:migi}}\label{ss:migi}
Theorem \ref{th:migi} 
is concerned with the region II (\ref{reg}).
The conditional probability 
$P_{\rm II}(r,n)$ for $r>s$ with the defects $(d_1,\ldots, d_s)$ 
is deduced from
the region I result $P_{\rm I}(r,n)$ (\ref{Pk2}) by setting  
$(d_1,\ldots, d_r) =(d_1,\ldots, d_s, \overset{r-s}{\overbrace{0,\ldots,0}})$.
It yields the expression
\begin{align}\label{oys}
P_{\rm II}(r,n) &=
P(n)
\sum_{i} q^{ni}\eta^{-1}_i
G_{0,i}(d_1,\ldots, d_s, \overset{r-s}{\overbrace{0,\ldots,0}}),
\end{align}
where $P(n)$ is the probability in the defect-free case (\ref{P0}) and 
the sum extends over $0 \le i \le d_1+\cdots + d_s$.
On the other hand 
the decomposition (\ref{Grec}) and Lemma \ref{le:GFred} lead to
\begin{align}
&G_{0,i}(d_1,\ldots, d_s, \overset{r-s}{\overbrace{0,\ldots,0}})
= \sum_m G_{0,m}(d_1,\ldots, d_s)
G_{m,i}(\overset{r-s}{\overbrace{0,\ldots,0}})
\nonumber\\
&= \sum_{i\le j \le m} G_{0,m}(d_1,\ldots, d_s)(-1)^{i+j}
q^{\frac{1}{2}i(i-1)+\frac{1}{2}j(j+1-2m)}
\binom{m}{j}_{\!\!q}\binom{j}{i}_{\!\!q}\eta_j^{r-s}.
\label{Gr}
\end{align}
Substituting 
$\binom{m}{j}_q = (-1)^j q^{mj -j(j-1)/2}\frac{(q^{-m})_j}{(q)_j}$
into this, we get (\ref{Pk1}).
The normalization $\sum_{n \ge 0}P_{\rm II}(r,n)=1$ 
is attributed to that of $P_{\rm I}(r,n)$.
In fact it can also be directly verified from (\ref{oys}) 
by noting $\sum_{n \ge 0}P(n)q^{ni} = \eta_i$ and (\ref{G1}).

To derive the density (\ref{N1}), we again set 
$(d_1,\ldots, d_r) =(d_1,\ldots, d_s, \overset{r-s}{\overbrace{0,\ldots,0}})$
in the region I result (\ref{N2}) and use (\ref{Gr}) to get
\begin{equation}\label{snn}
\begin{split}
\rho_{\rm II}(r)-\rho
&= \sum_{i,j,m}\!G_{0,m}(d_1,\ldots, d_s)
(-1)^{i+j}q^{\frac{1}{2}i(i-1)+\frac{1}{2}j(j+1-2m)}
\binom{m}{j}_{\!\!q}\binom{j}{i}_{\!\!q}\eta_j^{r-s}\\
& \quad \times
\sum_{k=0}^{i-1}\Bigl(
\frac{1}{1-\mu yq^k}-
\frac{1}{1-yq^k}\Bigr),
\end{split}
\end{equation}
where the sum extends over $i,j,m \in \Z_{\ge 0}$ such that
$0 \le i \le j \le m$ and 
$d_s \le m  \le d_1+\cdots + d_s$.
We may restrict $j$ to $1 \le j \le m$. 
Then the sums over $i$ and $k$ are taken by applying the identity
\begin{equation}\label{hgi}
\begin{split}
\sum_{i=0}^j \sum_{k=0}^{i-1}(-1)^i q^{\frac{1}{2}i(i-1)}
\binom{j}{i}_q\frac{1}{1-\zeta q^k}
&= \sum_{l \ge 0}
\sum_{i=0}^j(-1)^i q^{\frac{1}{2}i(i-1)}\binom{j}{i}_q 
\frac{1-q^{li}}{1-q^l}\zeta^l\\
&=\sum_{l \ge 0}\frac{(1)_j-(q^l)_j}{1-q^l}\zeta^l
= - \sum_{l \ge 0}\frac{(q)_{j+l-1}}{(q)_l}\zeta^l 
= - \frac{(q)_{j-1}}{(\zeta)_j},
\end{split}
\end{equation}
where the $q$-binomial theorem and (\ref{air}) are used.
After some simplification 
the result becomes (\ref{N1}).

To derive the current (\ref{Jf1}), we set 
$(d_1,\ldots, d_r) =(d_1,\ldots, d_s, \overset{r-s}{\overbrace{0,\ldots,0}})$
in the region I result (\ref{Jf21}), (\ref{Jf22}) and use (\ref{Gr}) to get
\begin{align*}
J_{\rm II}(r)_{\pm} - J_\pm
&= \sum_{i,j,m}\!G_{0,m}(d_1,\ldots, d_s)
(-1)^{i+j+1}q^{\frac{1}{2}i(i-1)+\frac{1}{2}j(j+1-2m)}
\binom{m}{j}_{\!\!q}\binom{j}{i}_{\!\!q}
\sum_{k=0}^{i-1}
\frac{y\eta_j^{r-s}q^k}{(1-\mu^{(1\pm1)/2}yq^k)^2},
\end{align*}
where the sum $\sum_{i,j,m}$ is taken in the same way as (\ref{snn}).
This time we apply the identity
\begin{align*}
\sum_{i=0}^j \sum_{k=0}^{i-1}(-1)^i q^{\frac{1}{2}i(i-1)}
\binom{j}{i}_q\frac{\zeta q^k}{(1-\zeta q^k)^2}
= - \frac{(q)_{j-1}}{(\zeta)_j}
\sum_{k=0}^{j-1}\frac{\zeta q^k}{1-\zeta q^k},
\end{align*}
which follows from (\ref{hgi}) by differentiation.
The result yields (\ref{Jf1}).

\subsection{Proof of Theorem \ref{th:hidari}}\label{ss:hidari}
Theorem \ref{th:hidari} 
is concerned with the region III (\ref{reg}).
From the definition of $X_{0,n}$ in (\ref{cie}) and 
$\langle 0 | \frac{(\mu \bb)_\infty}{(\bb)_\infty} = 
\langle 0 | \bk^n = \langle 0|$ by (\ref{qb}),
the conditional probability (\ref{Pl}) in the region III becomes
\begin{align*}
P_{\rm III}(r,n) = \lim_{L \rightarrow \infty}
\frac{y^ng_n\langle 0|A_0^{|r|} A_{d_1}\cdots A_{d_s} 
A_0^{L-|r|-s-1}|0\rangle}
{\langle 0|A_{d_1}\cdots A_{d_s} 
A_0^{L-s}|0\rangle} = y^n g_n \Lambda(y)^{-1}
\end{align*}
due to (\ref{Ad0}).
This is independent of $r$ and 
coincides with the probability $P(n)$ (\ref{P0}) for the defect-free case.  
Consequently the local density and current also reduce to the average density 
$\rho$ and $J_\pm$ in (\ref{J0}).

\subsection{Proof of Proposition \ref{pr:excess}}\label{ss:excess}

From (\ref{N1}), (\ref{etadef}) and (\ref{lin}), it follows that
\begin{align*}
\sum_{r>s} (\rho_{\rm II}(r)-\rho) = 
\sum_{j, m} G_{0,m}(d_1,\ldots, d_s) \frac{q^j(q^{-m})_j}{(1-q^j)(\mu y)_j},
\end{align*}
where the sum $\sum_{j, m}$ ranges over $m$ such that $G_{0,m}\neq 0$ 
implied by (\ref{Gz}) and $1 \le j \le m$.
On the other hand we have 
\begin{align*} 
\sum_{j=1}^m\frac{q^j(q^{-m})_j}{(1-q^j)(\mu y)_j} 
= -\sum_{k=0}^{m-1}\frac{1}{1-\mu yq^k}
\end{align*}
by replacing $(q,d,\mu)$ in (\ref{kzn}) 
with $(q^{-1}, m, q^{m-1}\mu y)$.
Thus the above expression is simplified to 
\begin{align}\label{hate}
\sum_{r>s} (\rho_{\rm II}(r)-\rho) = 
-\sum_m G_{0,m}(d_1,\ldots, d_s) 
\sum_{k=0}^{m-1}\frac{1}{1-\mu yq^k}.
\end{align}
Now we introduce
\begin{align*}
&K(r) = \sum_m G_{0,m}(d_1,\ldots, d_r) 
\sum_{k=0}^{m-1}\frac{1}{1-\mu y q^k},
\quad
{\tilde K}(r) = \sum_m G_{0,m}(d_1,\ldots, d_r) 
\sum_{k=0}^{m-d_r-1}\frac{1}{1-y q^k}.
\end{align*}
From (\ref{N2}) it is easy to see 
\begin{align*}
\rho_{\rm I}(r) - \rho = -d_r + K(r)-{\tilde K}(r).
\end{align*}
Thus Proposition \ref{pr:excess} follows from
\begin{align*}
&{\tilde K}(1)=0,\qquad 
K(r) = {\tilde K}(r+1) \quad (1 \le r <s),\\
&K(s) + \sum_{r>s}(\rho_{\rm II}(r)-\rho) = 0.
\end{align*}
The first equality is obvious from (\ref{Gex1}) and 
the last one has already been derived in (\ref{hate}).
So we are left to show $K(r) = {\tilde K}(r+1)$ only.
By using (\ref{Grec}) with $(t,s)=(r,r+1)$ and (\ref{Gex1}) we find
\begin{align*}
{\tilde K}(r+1) = \sum_m G_{0,m}(d_1,\ldots, d_r)\sum_{j=0}^m
\phi(m-j|m)\sum_{i=0}^{j-1}\frac{1}{1-y q^i},
\end{align*}
which tells that ${\tilde K}(r+1)$ 
is actually independent of $d_{r+1}$ as with $K(r)$.
Now the proof is finished by Lemma \ref{le:ma}.

\section{Discussion}\label{sec7}

In this paper we have investigated the 
stationary properties of the $U_q(A^{(1)}_2)$ ZRP \cite{KMMO} on a ring
whose stationary probability is not factorized but expressed in a nontrivial
matrix product form.
The second class particles are treated in the grand canonical ensemble
subject to the influence of 
the $d_1,\ldots, d_s$ first class particles 
fixed as defects at the sites $1,\ldots, s$.
The local density and current of the second class particles 
in the infinite volume limit are obtained in 
Theorem \ref{th:naka}, \ref{th:migi} and \ref{th:hidari}.
The density profile is shown to exhibit a peak and a 
valley at the boundaries of the defect cluster.
The currents are locally suppressed by the defects.
The detail is dependent on the inhomogeneity of the data 
$(d_1,\ldots, d_s)$.
Outside the defect cluster, 
its influence reaches longer distance when the average 
density of the second class particles is lower.
It was shown that 
the second class particles are decreased 
by the number of the defect particles in Proposition \ref{pr:excess}.

The effect of the defects is expressed through the function 
$G_{l,m}(d_1,\ldots, d_s)$, which is essentially a column monodromy matrix of 
the $U_q(A^{(1)}_1)$ ZRP containing the fugacity $y$ as the spectral parameter.
Compare the diagrams (\ref{Gm}) and (\ref{tdiag}).
We regard it as another reminiscent of the nested Bethe ansatz structure 
in addition to Remark \ref{re:hrm}.

It is natural to generalize the analysis in this paper to   
the grand canonical ensemble for the both classes of particles simultaneously.
Denoting the fugacity of the first class particles by $x$, 
one is led to the analogue of the grand canonical partition function 
in two variables as
\begin{align*}
Z_L(x,y) &= \mathrm{Tr}(V(x,y)^L)' 
= \mathrm{Tr}(\tilde{V}(x,y)^L)',\\
V(x,y) &= \sum _{m,n \ge 0}X_{m,n}x^m y^n
= \frac{(\mu \bb)_\infty}{(\bb)_\infty}
\frac{\Gamma(\mu x, \mu y)}{\Gamma(x,y)},
\quad
\tilde{V}(x,y) = \Xi(\mu x,\mu y,\mu)\Xi(x,y,1)^{-1},\\
\Gamma(x,y)&= (x \bc)_\infty(y \bk)_\infty,\qquad
\Xi(x,y,\mu) = (x\bc)_\infty(y\bk)_\infty(\mu\bb)_\infty,
\end{align*}
where we have used the 
cyclicity of the trace and the commutativity 
$[\Gamma(x,y), \Gamma(\mu x, \mu y)]=0$  
which follows from (\ref{cie}) by applying 
$(\mu)_{m+n}=(\mu)_m(q^m\mu)_n = (\mu)_n(q^n\mu)_m$.
See also \cite[Rem.3]{KO2}.
We leave the large $L$ asymptotics of $Z_L(x,y)$  
and related issues for a future study.

\appendix
\section{Proof of (\ref{ex1}) and Proposition \ref{pr:red}}\label{app:ls}
We invoke the interchangeability of limits and infinite sums 
in elementary calculus adapted in the following form.
\begin{lemma}\label{le:s1}
For the sequence 
$\{S_{L,l} \in \R_{\ge 0} \mid L,l \in \Z_{\ge 0}\}$,
the interchangeability 
$\lim_{L \rightarrow \infty}\sum_{l=0}^\infty S_{L,l}
= \sum_{l=0}^\infty \lim_{L \rightarrow \infty} S_{L,l}$
holds if the following (i) or (ii) is satisfied for 
$L$ sufficiently large:
\begin{enumerate}
\item
$\lim_{L \rightarrow \infty} S_{L,l} < \infty$ and 
$S_{L,l} \le T_l$ for some $T_l$ 
such that $\sum_{l=0}^\infty T_l < \infty$.

\item 
$S_{L+1,l} \le S_{L,l}$ and 
$\sum_{l=0}^\infty S_{L,l}<B$ for some $B$ independent of $L$.
\end{enumerate}
\end{lemma}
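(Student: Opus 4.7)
The plan is to recognize Lemma \ref{le:s1} as the counting-measure version of the standard dominated and monotone convergence theorems, and to prove each case by a short self-contained $\epsilon$-argument that splits each sum into a finite head and a tail.

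For part (i), I would argue by a tail-truncation. Given $\epsilon>0$, pick $N$ with $\sum_{l>N}T_l<\epsilon$. Because $S_{L,l}\le T_l$ and $S_l:=\lim_{L\to\infty}S_{L,l}\le T_l$ for $L$ large, both $\sum_{l>N}S_{L,l}$ and $\sum_{l>N}S_l$ are at most $\epsilon$ uniformly in $L$. The head $\sum_{l=0}^N S_{L,l}$ is a finite sum, so $\lim_{L\to\infty}\sum_{l=0}^N S_{L,l}=\sum_{l=0}^N S_l$ termwise. Combining these bounds yields $\bigl|\lim_{L\to\infty}\sum_l S_{L,l}-\sum_l S_l\bigr|\le 2\epsilon$, and letting $\epsilon\searrow 0$ closes the argument.

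For part (ii), set $S_l:=\lim_{L\to\infty}S_{L,l}$, which exists because $(S_{L,l})_L$ is monotonically decreasing and bounded below by $0$. The inequalities $S_{L,l}\ge S_l$ and $\sum_l S_{L,l}\le B$ immediately give $\sum_l S_l\le B$ together with $\lim_{L\to\infty}\sum_l S_{L,l}\ge\sum_l S_l$. For the reverse inequality, fix $L_0$ at which the hypothesis begins to hold and apply part (i) to the non-negative sequence $R_{L,l}:=S_{L_0,l}-S_{L,l}$ (well defined and nondecreasing in $L$ for $L\ge L_0$), with dominating sequence $T_l:=S_{L_0,l}$, which is summable because $\sum_l S_{L_0,l}\le B$. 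This produces $\lim_{L\to\infty}\sum_l R_{L,l}=\sum_l(S_{L_0,l}-S_l)$, and rearranging yields $\lim_{L\to\infty}\sum_l S_{L,l}=\sum_l S_l$.

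There is essentially no analytic obstacle: both claims are textbook interchange theorems for the counting measure on $\Z_{\ge 0}$, and the only care needed is in the tail-truncation step, whose validity is precisely what the hypotheses in (i) and (ii) are designed to guarantee. The lemma is stated in this explicit finite-$B$, summable-$T_l$ form so that it can be invoked directly to justify manipulations such as (\ref{ex1}) and the limits leading to Proposition \ref{pr:red} without appealing to measure-theoretic machinery; in each application one need only exhibit the dominating sequence or verify the monotonicity in $L$.
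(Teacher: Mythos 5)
Your proof is correct. Part (i) is essentially identical to the paper's argument: the same head/tail splitting, with the tail controlled uniformly in $L$ by $\sum_{l>N}T_l<\epsilon$ and the finite head handled termwise. For part (ii) you take a genuinely different route. The paper sets $a_{i,l}=S_{i,l}-S_{i-1,l}$ (with $a_{0,l}=S_{0,l}$), observes that $\sum_{i=0}^{L}\sum_{l}|a_{i,l}|=2\sum_l S_{0,l}-\sum_l S_{L,l}<2B$, and concludes that the double series $\sum_{i,l}a_{i,l}$ is absolutely convergent and hence may be summed in either order; this is a Fubini-for-series argument on the telescoped increments. You instead reduce (ii) to (i) by applying (i) to $R_{L,l}=S_{L_0,l}-S_{L,l}$ with dominating sequence $T_l=S_{L_0,l}$, summable because $\sum_l S_{L_0,l}<B$. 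Both are valid; your reduction is arguably cleaner in that it exposes (ii) as a consequence of (i) rather than requiring a separate rearrangement argument, and in fact the detour through $R_{L,l}$ can be skipped entirely: since $S_{L,l}\le S_{L_0,l}$ for $L\ge L_0$ by monotonicity and $\sum_l S_{L_0,l}<B$, hypothesis (i) already holds for $S_{L,l}$ itself with $T_l=S_{L_0,l}$, so (ii) is literally a special case of (i). The paper's telescoping argument, by contrast, is self-contained and does not invoke (i).
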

\begin{proof}
(i) Set $s_l=\lim_{L \rightarrow \infty}S_{L,l}$
By the definition $s_l \le T_l$.
For any $\varepsilon>0$ there is $M_\varepsilon$ such that 
$\sum_{l > M_\varepsilon}T_l < \varepsilon$, and there is also
$L_\varepsilon$ such that
$|S_{L,l}-s_l|<\frac{\varepsilon}{1+M_\varepsilon}\, (0 \le l \le M_\varepsilon)$
for $L > L_\varepsilon$.
Then one has
\begin{equation*}
\begin{split}
\left| \sum_{l=0}^\infty S_{L,l}-\sum_{l=0}^\infty s_l\right| 
&\le \left| \sum_{l=0}^{M_\varepsilon}(S_{L,l}-s_l) \right|
+\left| \sum_{l > M_\varepsilon}(S_{L,l}-s_l)\right|\\
&< \varepsilon + \sum_{l > M_\varepsilon}(S_{L,l}+s_l)
\le \varepsilon + 2 \sum_{l > M_\varepsilon}T_l< 3\varepsilon.
\end{split}
\end{equation*}
(ii) Without loss of generality we assume the condition holds for 
$L \ge 0$.
From the assumption $\lim_{L \rightarrow \infty} S_{L,l}$ exists.
Set $a_{i,l} = S_{i,l}-S_{i-1,l}\,(i\ge 1)$ and $a_{0,l} = S_{0,l}$. 
Then $|a_{i,l}| = (-1)^{\theta(i \ge 1)}a_{i,l}$.
Since $\sum_{i=0}^L\sum_{l=0}^\infty |a_{i,l}| 
= 2\sum_{l=0}^\infty S_{0,l}-\sum_{l=0}^\infty S_{L,l}<2B$,
The sum of $a_{i,l}$ over $(i,l) \in \Z_{\ge 0}^2$ is absolutely convergent
and can be taken in any order.
Thus we get
$\lim_{L \rightarrow \infty} \sum_{i=0}^L\sum_{l=0}^\infty a_{i,l}
= \lim_{m \rightarrow \infty} \sum_{l=0}^m \sum_{i=0}^\infty a_{i,l}$.
This means $\lim_{L \rightarrow \infty}\sum_{l=0}^\infty S_{L,l}
= \sum_{l=0}^\infty \lim_{L \rightarrow \infty} S_{L,l}$.
\end{proof}

{\em Proof of (\ref{ex1})}.
For $n \ge 1$, set
$S_{L,l} = q^{ln}\eta^L_l$ and $T_l = q^{ln}$.
Then the condition (i) in Lemma \ref{le:s1} is satisfied.
For $n=0$, set $S_{L,l} = \eta^L_l-\eta^L_\infty$ which 
satisfies $S_{L+1,l} \le S_{L,l}$ for sufficiently large $L$.
Set further $B_0 = \sum_{l=0}^\infty S_{1,l} 
= \eta_\infty \sum_{j\ge 1}\frac{g_j y^j}{1-q^j}$ which is finite.
From (\ref{lin}), we have
$\sum_{l=0}^\infty S_{L,l} = 1-\eta_\infty^L + 
\sum_{l=1}^\infty (\eta_l-\eta_\infty)(\eta^{L-1}_l + \eta^{L-2}_l\eta_\infty
+ \cdots + \eta^{L-1}_\infty) < 1 + B_0L\eta_1^{L-1}$.
Thus the condition (ii) in Lemma \ref{le:s1} is satisfied by taking  
$B = 1+B_0\max \{ L\eta_1^{L-1} \mid L \in \Z_{\ge 1}\}< \infty$.
\qed

Let $Q$ be the operators of the form
\begin{align}\label{Qf}
Q = A_{m_1} \cdots A_{m_\kappa}\;\;\text{or}\;\;
Q = A_{m_1} \cdots A_{m_{t-1}} X_{e,h} A_{m_{t+1}}\cdots A_{m_\kappa}
\end{align}
for some $m_1,\ldots, m_\kappa, e, h \ge 0$ and $1 \le t \le \kappa$.
They are contained in 
the matrix product formula of the probabilities
(\ref{Pmiddle})--(\ref{Pleft}) in the form
$\mathrm{Tr}(QA_0^L)'$ with $L$ large, where $(\cdots)'$ 
is explained after (\ref{Bd}).

\begin{lemma}\label{le:zer}
For the operator $Q$ of the form (\ref{Qf}), 
the following limits exist and are equal:
\begin{align}\label{rz}
\lim_{L \rightarrow \infty}
\Lambda(y)^{-L}
\mathrm{Tr}\bigl(Q A_0^L \bigr)'
=
\lim_{L \rightarrow \infty}
\Lambda(y)^{-L}
{\langle 0|Q A_0^L|0\rangle}.
\end{align}

\end{lemma}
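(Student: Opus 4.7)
The strategy is to identify both sides of \eqref{rz} with the same quantity $\langle 0|Q|v\rangle$, where $|v\rangle:=\sum_{m\ge 0}\frac{y^{-m}(\mu y)_m}{(q)_m}|m\rangle$. The spectral input comes from $A_0$: by \eqref{qb} one has $\langle 0|\bb=0$ and $\langle 0|\bk=\langle 0|$, so inspection of \eqref{Adef} yields $\langle 0|A_0=\Lambda(y)\langle 0|$. Dually, \eqref{ma0} is exactly the componentwise statement $\lim_L\Lambda(y)^{-L}A_0^L|0\rangle=|v\rangle$; combined with the recursion $\Lambda(y)^{-L-1}A_0^{L+1}|m\rangle=\eta_m\sum_{j\ge 0}\frac{(\mu)_j}{(q)_j}\,\Lambda(y)^{-L}A_0^L|m+j\rangle$ and $\eta_m<1$ for $m\ge 1$, one sees that the only bounded solution forces $\lim_L\Lambda(y)^{-L}A_0^L|m\rangle=0$ for every $m\ge 1$. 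Thus $\Lambda(y)^{-L}A_0^L$ converges matrix-element-wise to the rank-one operator $|v\rangle\langle 0|$.

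For the right-hand side of \eqref{rz} I insert $\mathrm{Id}_F=\sum_{k\ge 0}|k\rangle\langle k|/(q)_k$ between $Q$ and $A_0^L$ to obtain
$\Lambda(y)^{-L}\langle 0|QA_0^L|0\rangle=\sum_{k\ge 0}\frac{\langle 0|Q|k\rangle}{(q)_k}\,\Lambda(y)^{-L}\langle k|A_0^L|0\rangle$. Lemma \ref{le:GF} (together with the straightforward adaptation obtained by inserting another $\mathrm{Id}_F$ around the $X_{e,h}$ in $Q$) expresses $\langle 0|Q|k\rangle$ in closed form as a multiple of $G_{0,k}(\cdots)$ times an explicit $y$-factor, and \eqref{Gz} shows this has \emph{finite} support in $k$. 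The $k$-sum is therefore a finite sum, and termwise convergence from \eqref{ma0} directly yields $\sum_k\langle 0|Q|k\rangle v_k=\langle 0|Q|v\rangle$.

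For the left-hand side I use the cyclicity $\mathrm{Tr}(QA_0^L)'=\mathrm{Tr}(A_0^LQ)'$, which is valid because the canonical-form scalar coefficient $c_{0,0}$ — the sole obstruction to convergence of $\mathrm{Tr}$ — is invariant under cyclic rearrangement. Mimicking the reduction in \eqref{P01}, I slide each $(\mu\bb)_\infty/(\bb)_\infty$ factor cyclically past the subsequent $\bc^{m_i}$ (and $\bc^e$ if $X_{e,h}$ is present) using the commutation rules \eqref{akn}, collapsing all $\bb,\bc$ operators to leave a $\bk$-diagonal operator whose $\bk\to 0$ value is precisely the scalar subtraction effected by the prime. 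The result is an expression of the shape $\Lambda(y)^{-L}\mathrm{Tr}(QA_0^L)'=\sum_{n\ge 0}\bigl(\alpha_n(Q)\,\eta_n^L-\alpha_\infty(Q)\,\eta_\infty^L\bigr)$, generalizing the $Q=I$ case $\sum_n(\eta_n^L-\eta_\infty^L)$ displayed in \eqref{P01}, with $\alpha_\infty(Q)=\lim_{n\to\infty}\alpha_n(Q)$. Since $\eta_n<1$ for $n\ge 1$ and $\eta_\infty<1$ by \eqref{lin}, only the $n=0$ term survives as $L\to\infty$; the interchange is controlled by Lemma \ref{le:s1}(ii) (monotonicity of partial sums), and a direct computation of $\alpha_0(Q)$ via the left-eigenvector identity $\langle 0|A_0=\Lambda(y)\langle 0|$ together with the reduction identifies it with $\langle 0|Q|v\rangle$, matching the RHS.

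The principal obstacle is the cyclic reduction step for a general $Q$ of form \eqref{Qf}: one must verify that the rearrangement compatibly preserves the primed trace and that the resulting sum genuinely exhibits the pairing structure $\alpha_n(Q)\eta_n^L - \alpha_\infty(Q)\eta_\infty^L$ with $\alpha_\infty(Q)=\alpha_n(Q)|_{n\to\infty}$ — the precise matching that makes the subtraction of the scalar divergence a term-by-term cancellation. This bookkeeping is trivial for $Q=I$ (as exhibited in \eqref{P01}) but requires careful tracking of the $\bc^{m_i}\bb^{l_j}$ collapses when $Q$ contains several $A_{m_i}$'s (and possibly one $X_{e,h}$); once done, the identification $\alpha_0(Q)=\langle 0|Q|v\rangle$ and the limit interchange follow mechanically from the spectral convergence and Lemma \ref{le:s1}.
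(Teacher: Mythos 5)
Your treatment of the right-hand side of \eqref{rz} is sound and coincides with the paper's: insert the resolution of the identity, observe that $\langle 0|Q|m\rangle$ has finite support in $m$, and pass the limit through the finite sum using \eqref{ma0} (equivalently Lemma \ref{le:GFred}). The gap is entirely on the left-hand side, and it is the part that carries all the difficulty. Your claim that the cyclic reduction produces an expression of the form $\sum_{n}\bigl(\alpha_n(Q)\eta_n^L-\alpha_\infty(Q)\eta_\infty^L\bigr)$ with $L$-independent coefficients $\alpha_n(Q)$ is not what actually comes out of normal-ordering $QA_0^L$. What appears (see \eqref{uvdef} and \eqref{Fdef}) is a sum over the trace index $l$ of quantities like $\sum_{l_1+\cdots+l_L=\lambda}g_{l_1}\cdots g_{l_L}\,\eta_{l+l_2+\cdots+l_L}\cdots\eta_{l+l_L}\eta_l$: a product of $L$ \emph{distinct} $\eta$'s summed over $O(L^{\lambda})$ compositions, so the ``coefficient of $\eta_l^L$'' depends on $L$ and only behaves like $c_3\eta_l^L-c_4L^{\lambda}\eta_\infty^L$ asymptotically. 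Moreover the subtraction of the scalar divergence is not a uniform term-by-term pairing: after normal-ordering, the terms split into those carrying a factor $\bk^n$ with $n\ge 1$ (no subtraction needed, and the interchange of $\lim_L$ with $\sum_l$ uses the domination $q^{ln}$, i.e.\ Lemma \ref{le:s1}(i)) and those with no $\bk$ at all (where the subtraction $(A_0|_{y=0})^L$ enters and Lemma \ref{le:s1}(ii) is needed). You invoke only case (ii), and you assert rather than prove the monotonicity $S_{L+1,l}\le S_{L,l}$ and the uniform bound; in the paper establishing these for the $V_{L,l}$ requires the asymptotics $v_{L,l}\simeq c_3\eta_l^L-c_4L^{\lambda}\eta_\infty^L$ via composition-counting and Stirling-type estimates (Lemma \ref{le:krt}(b)).

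You are candid that this ``bookkeeping'' is the principal obstacle, but it is not bookkeeping that ``follows mechanically'': the matrix-element-wise convergence of $\Lambda(y)^{-L}A_0^L$ to a rank-one operator, which you establish, does \emph{not} imply convergence of the primed trace, precisely because the trace is an infinite sum over $l$ whose interchange with $L\to\infty$ is the content of the lemma. A secondary point: the identification of $\alpha_0(Q)$ with $\langle 0|Q|v\rangle$ via the left-eigenvector relation $\langle 0|A_0=\Lambda(y)\langle 0|$ is fine in spirit, but the cyclicity of the primed trace for these infinite formal sums of $q$-boson monomials is itself an unproved (if plausible) assertion, and the paper avoids it by working with $\mathrm{Tr}(QA_0^L)'$ directly in the form \eqref{2t}. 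As it stands, your argument establishes the value of the right-hand side and correctly guesses the answer, but does not prove that the left-hand side converges to it.
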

\begin{proof} 
As for the RHS we have
\begin{align*}
&
\Lambda(y)^{-L}\sum_{m\ge 0}
\frac{\langle 0 |Q |m \rangle }{(q)_m}
\langle m|\Bigl(\frac{(\mu \bb)_\infty}{(\bb)_\infty}
\frac{(\mu y \bk)_\infty}{(y \bk)_\infty}\Bigr)^L|0\rangle
\\
&=\Lambda(y)^{-L}\sum_{m\ge 0}
\frac{\langle 0 |Q |m \rangle }{(q)_m}\!\!\!
\sum_{l_1+\cdots + l_L = m}\!\!
g_{l_1}\cdots g_{l_L}
\langle m|
\bb^{l_1}\frac{(\mu y \bk)_\infty}{(y \bk)_\infty}
\cdots 
\bb^{l_L}\frac{(\mu y \bk)_\infty}{(y \bk)_\infty}
|0\rangle
\\
&=
\sum_{m\ge 0}
\langle 0 |Q|m \rangle
\sum_{l_1+\cdots + l_L = m}
g_{l_1}\cdots g_{l_L}
\eta_{l_2+\cdots+ l_L}\cdots 
\eta_{l_{L-1}+l_L}\eta_{l_L}
= \sum_{m\ge 0}
\langle 0 |Q|m \rangle F_{m,L}(y),
\end{align*}
where $l_1,\ldots, l_L$ are summed over $\Z_{\ge 0}$ with the 
specified condition.
We have used the expansion (\ref{air}), 
the quantity $\eta_m$ (\ref{etadef}) and the definition of the 
function $F_{m,L}(y)$ (\ref{Fdef}).
By the assumption on $Q$,   
$\langle 0 |Q|m \rangle\neq 0$ holds only for 
{\em finitely many} $m \in \Z_{\ge 0}$.
Thus we get
\begin{align*}
\text{RHS of (\ref{rz})} &= 
\sum_{m\ge 0}
\langle 0 |Q|m\rangle \lim_{L \rightarrow \infty} F_{m,L}(y)\\
&= \sum_{m\ge 0}
\frac{y^{-m}(\mu y)_m\langle 0 |Q|m\rangle}{(q)_m}
\lim_{L \rightarrow \infty}
\sum_{0 \le j \le m}(-1)^jq^{\frac{1}{2}j(j+1-2m)}
\binom{m}{j}_q\eta^L_j\\
&=  \sum_{m\ge 0}
\frac{y^{-m}(\mu y)_m\langle 0 |Q|m\rangle}{(q)_m},
\end{align*}
where the second equality is
due to (\ref{gfq}) and Lemma \ref{le:GFred} with $(i,r)=(0,L)$.
This result shows that the RHS is finite. 

As for the LHS 
we expand $Q$ by 
$\frac{(\mu \bb)_\infty}{(\bb)_\infty}
= \sum_{j\ge 0} g_j \bb^j$ (\ref{air}) and apply the 
commutation relation (\ref{akn}).
The resulting terms contain
$\bk$ as an overall factor and are grouped into two types as
\begin{align}\label{2t}
\Lambda(y)^{-L}
\mathrm{Tr}\bigl(Q A_0^L \bigr)'&=
\Lambda(y)^{-L}\sum_{m\ge l \ge 0}\left(
 \frac{\langle l |Q_0 |m \rangle }{(q)_l(q)_m}\langle m|A_0^L|l\rangle
+ \frac{\langle l |Q_1 |m \rangle }{(q)_l(q)_m}
\langle m|\bigl(A_0^L-(A_0|_{y=0})^L\bigr)|l\rangle\right).
\end{align}
Here $Q_0$ and $Q_1$ are {\em finite} linear combinations 
of the operators of the form
\begin{align*}
Q_0 \,:&\;\;
\prod_{i=1}^\kappa\frac{(q^{\alpha_i}\mu y \bk)_\infty}
{(q^{\beta_i} y \bk)_\infty}\,\bk^n \bc^{\lambda},
\qquad 
Q_1 \,:\;\;  
\prod_{i=1}^\kappa\frac{(q^{\alpha_i}\mu y \bk)_\infty}
{(q^{\beta_i} y \bk)_\infty}\,\bc^{\lambda}.
\end{align*}
with various 
$\alpha_i, \beta_i \in \Z,\, \lambda \in \Z_{\ge 0},\,
n \in \Z_{\ge 1}$, and $\kappa$ is the one in (\ref{Qf}).
Let us pick one such term for each type:
\begin{align*}
W_0 &= \Lambda(y)^{-L}\sum_{m\ge l \ge 0}
\frac{1}{(q)_l(q)_m}
\langle l |\prod_{i=1}^\kappa\frac{(q^{\alpha_i}\mu y \bk)_\infty}
{(q^{\beta_i} y \bk)_\infty}\,\bk^n \bc^{\lambda}|m \rangle
\langle m| A_0^L|l\rangle,\\
W_1&= \Lambda(y)^{-L}\sum_{m\ge l \ge 0}
\frac{1}{(q)_l(q)_m}
\langle l |\prod_{i=1}^\kappa\frac{(q^{\alpha_i}\mu y \bk)_\infty}
{(q^{\beta_i} y \bk)_\infty}\,\bc^{\lambda}|m \rangle
\langle m|\bigl(A_0^L-(A_0|_{y=0})^L\bigr)|l\rangle.
\end{align*}
These are actually single sums over $l$ 
since $m$ is frozen to $m=l+\lambda$.
Denote them by
$W_0 =\sum_{l \ge 0}U_{L,l}$ and 
$W_1 =\sum_{l \ge 0}V_{L,l}$.
A direct calculation gives
\begin{equation}\label{uvdef}
\begin{split}
&U_{L,l} = q^{ln}\Gamma_l u_{L,l},
\quad V_{L,l} = \Gamma_l v_{L,l},
\qquad
\Gamma_l = \frac{(q)_{l+\lambda}}{(q)_l}
\prod_{i=1}^\kappa\frac{(q^{l+\alpha_i}\mu y)_\infty}
{(q^{l+\beta_i} y)_\infty},
\\
&u_{L,l}= \sum_{l_1+\cdots + l_L = \lambda}\!\!
g_{l_1}\cdots g_{l_L}
\eta_{l+l_2+\cdots +l_L}\cdots \eta_{l+l_L}\eta_l,
\\
&v_{L,l}= \sum_{l_1+\cdots + l_L = \lambda}\!\!
g_{l_1}\cdots g_{l_L}\bigl(
\eta_{l+l_2+\cdots +l_L}\cdots 
\eta_{l+l_L}\eta_l -\eta_\infty^L\bigr).
\end{split}
\end{equation}
We also introduce 
\begin{align*}
\overline{V}_L &=
\Lambda(y)^{-L}\sum_{m\ge 0}
\frac{1}{(q)_m}
\langle 0 |\prod_{i=1}^\kappa\frac{(q^{\alpha_i}\mu y \bk)_\infty}
{(q^{\beta_i} y \bk)_\infty}\,\bc^{\lambda}|m \rangle
\langle m|A_0^L|0\rangle
= \Gamma_0 \overline{v}_L,\\
\overline{v}_L &= 
\sum_{l_1+\cdots + l_L = \lambda}\!\!
g_{l_1}\cdots g_{l_L}
\eta_{l_2+\cdots +l_L}\cdots 
\eta_{l_{L-1}+l_L}\eta_{l_L}\; (= F_{\lambda,L}(y) \,\;\text{in }\, (\ref{Fdef})).
\end{align*}
Since the $l$'s in $U_{L,l}$ and $V_{L,l}$ have the same meaning as 
those in (\ref{2t}), 
the proof is reduced to showing that the only $l=0$ term survives 
in the large $L$ limit.
\begin{align}\label{uu}
\lim_{L \rightarrow \infty}\sum_{l \ge 0}U_{L,l}
= \lim_{L \rightarrow \infty}U_{L,0},\qquad
\lim_{L \rightarrow \infty}\sum_{l \ge 0}V_{L,l}
= \lim_{L \rightarrow \infty}\overline{V}_L.
\end{align} 
From (\ref{lin}) the equalities
$\lim_{L \rightarrow \infty}u_{L,l} 
= \delta_{l,0}\lim_{L \rightarrow \infty}u_{L,0}$ and 
$\lim_{L \rightarrow \infty}v_{L,l} 
= \delta_{l,0}\lim_{L \rightarrow \infty}v_{L,0}
= \delta_{l,0}\lim_{L \rightarrow \infty}\overline{v}_{L}$
are valid obviously.
Therefore (\ref{uu}) follows from the next Lemma \ref{le:krt}.
\end{proof}

\begin{lemma}\label{le:krt}
The following interchangeability holds for the series involving 
$U_{L,l}, V_{L,l}$ in (\ref{uvdef}):   
\begin{align*}
&\mathrm{(a)}\;\lim_{L \rightarrow \infty} \sum_{l \ge 0}U_{L,l}
= \sum_{l \ge 0}\lim_{L \rightarrow \infty}U_{L,l},\qquad 
\mathrm{(b)}\;\lim_{L \rightarrow \infty} \sum_{l \ge 0}V_{L,l}
= \sum_{l \ge 0}\lim_{L \rightarrow \infty}V_{L,l}.
\end{align*}
\end{lemma}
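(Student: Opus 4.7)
The plan is to apply Lemma \ref{le:s1}(i) to the sequences $S_{L,l}=U_{L,l}$ and $S_{L,l}=V_{L,l}$. Both sequences are non-negative (the factors $g_m,\eta_m,\Gamma_l$ in (\ref{uvdef}) are all non-negative once $l$ is larger than $-\min_i\alpha_i$ and $-\min_i\beta_i$, and $v_{L,l}\ge 0$ because each factor $\eta_{l+s_i}\ge\eta_\infty$), and the pointwise limits $\lim_{L\to\infty}u_{L,l}=\delta_{l,0}\lim_{L\to\infty}u_{L,0}$ and $\lim_{L\to\infty}v_{L,l}=\delta_{l,0}\lim_{L\to\infty}\overline v_{L}$ were already recorded immediately before the statement of the lemma. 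The entire task is therefore to construct summable dominators $T_l$ with $U_{L,l},V_{L,l}\le T_l$ for $L$ sufficiently large. Two preparatory facts will be used throughout: (i) $\Gamma_l$ is uniformly bounded in $l$, since $(q)_{l+\lambda}/(q)_l\to 1$ and $(q^{l+\alpha_i}\mu y)_\infty,(q^{l+\beta_i}y)_\infty\to 1$ as $l\to\infty$ (and for $l$ larger than $-\min_i(\alpha_i,\beta_i)$ there is no pole); (ii) $\tilde B_L:=\sum_{l_1+\cdots+l_L=\lambda}g_{l_1}\cdots g_{l_L}$ is a polynomial in $L$ of degree $\lambda$, and since $\eta_1<1$ by (\ref{lin}) one has $\sup_L L^\lambda\eta_1^L<\infty$, so $\tilde B_L\eta_1^L$ is bounded uniformly in $L$.

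For part (a) the decay in $l$ is supplied by the explicit factor $q^{ln}$ with $n\ge 1$. Using that $\eta_m$ is non-increasing with $\eta_0=1$, every factor in the $L$-fold product defining $u_{L,l}$ is bounded by $\eta_l\le\eta_1$ when $l\ge 1$, hence $u_{L,l}\le\tilde B_L\eta_1^L\le C$ uniformly in $L$. Combined with the bound on $\Gamma_l$ this gives $U_{L,l}\le C'q^{ln}$ for all $l\ge 1$ uniformly in $L$; the $l=0$ contribution is bounded separately since $u_{L,0}=F_{\lambda,L}(y)$ has a finite limit by (\ref{bx}). The sequence $T_l$ defined as $C'q^{ln}$ for $l\ge 1$ (with $T_0$ chosen large enough to dominate $U_{L,0}$) is then summable, and Lemma \ref{le:s1}(i) yields (a).

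For part (b) the analogous decay in $l$ must be extracted from the subtraction of $\eta_\infty^L$ inside $v_{L,l}$, since $\Gamma_l$ alone is not summable and no $q^{ln}$ factor is present. Setting $x_i=\eta_{l+s_i}$ with $s_i=l_{i+1}+\cdots+l_L$ and $y=\eta_\infty$, the telescoping identity
\begin{equation*}
\prod_{i=1}^L x_i-y^L=\sum_{i=1}^L\Bigl(\prod_{j<i}x_j\Bigr)(x_i-y)\,y^{L-i},
\end{equation*}
together with the bounds $x_j,y\le\eta_1$ and the estimate $x_i-y=\eta_{l+s_i}-\eta_\infty=O(q^{l+s_i})$, which is immediate from $\eta_m/\eta_\infty=(q^m\mu y)_\infty/(q^m y)_\infty=1+O(q^m)$, yields
\begin{equation*}
\Bigl|\prod_{i=1}^L\eta_{l+s_i}-\eta_\infty^L\Bigr|\le CLq^l\eta_1^{L-1}\qquad(l\ge 1).
\end{equation*}
Multiplying by $\tilde B_L$ and using $\sup_L L^{\lambda+1}\eta_1^{L-1}<\infty$ gives $v_{L,l}\le C'' q^l$ uniformly in $L$ for $l\ge 1$; the $l=0$ bound follows from $v_{L,0}=F_{\lambda,L}(y)-\eta_\infty^L\tilde B_L$ having a finite limit. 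Hence $T_l=C''q^l$ (with a suitably enlarged $T_0$) is a summable dominator, and Lemma \ref{le:s1}(i) yields (b). The delicate step is precisely this derivation of an $l$-decay in (b): the absence of an intrinsic exponential factor forces one to trade it for the rate of convergence $\eta_m\to\eta_\infty$ via the telescoping identity above, and this is the only substantive point in the proof.
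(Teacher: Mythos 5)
Your proof is correct, and while part (a) follows essentially the paper's own route (dominated convergence via Lemma \ref{le:s1}(i) with a dominator proportional to $q^{nl}\Gamma_l$, differing only in whether the uniform bound on $u_{L,l}$ is read off from (\ref{bx}) or from $\tilde B_L\eta_1^L$ being bounded), part (b) is handled by a genuinely different mechanism. The paper verifies the \emph{monotone} criterion, Lemma \ref{le:s1}(ii): it shows $v_{L+1,l}\le v_{L,l}$ for large $L$ via the asymptotics $v_{L,l}\simeq c_3\eta_l^L-c_4L^\lambda\eta_\infty^L$, and then bounds $\sum_l V_{L,l}$ uniformly in $L$ by a separate estimate involving the auxiliary series $f_j=\sum_{l\ge l_0}\Gamma_lq^{jl}$. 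You instead manufacture an $l$-summable dominator directly, so that (b) reduces to the same criterion (i) as (a): the telescoping identity $\prod_ix_i-y^L=\sum_i(\prod_{j<i}x_j)(x_i-y)y^{L-i}$ combined with the uniform rate $\eta_m-\eta_\infty\le\frac{(1-\mu)y}{(1-\mu y)(1-q)}q^m$ gives $0\le v_{L,l}\le CLq^l\eta_1^{L-1}\tilde B_L\le C''q^l$ for $l\ge1$, which is exactly the exponential decay in $l$ that the bare $\Gamma_l$ lacks. This buys a shorter and more self-contained argument — no monotonicity verification, no Stirling-type asymptotics, no uniform bound on the full sum — at the cost of nothing essential; the one point both treatments must (and do) dispose of in the same routine way is the finitely many small $l$ for which $\Gamma_l$ need not be positive, which can be split off since a finite sum always commutes with the limit. (Minor notational caution: in your telescoping identity the symbol $y$ stands for $\eta_\infty$ and clashes with the fugacity; worth renaming if this were to be inserted into the text.)
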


\begin{proof}
(a).
From (\ref{bx}) and  (\ref{lin}) we have
\begin{align*}
u_{L,l} &\le F_{\lambda, L}(y) = 
y^{-\lambda}\frac{(\mu y)_\lambda}{(q)_\lambda}
\sum_{0 \le j \le \lambda}
(-1)^j q^{\frac{1}{2}j(j+1-2\lambda)}\binom{\lambda}{j}_q\eta_j^L\\
&< y^{-\lambda}\frac{(\mu y)_\lambda}{(q)_\lambda}\sum_{0 \le j \le \lambda}
q^{\frac{1}{2}j(j+1-2\lambda)}\binom{\lambda}{j}_q = 
y^{-\lambda}\frac{(\mu y)_\lambda(-q^{1-\lambda})_\lambda}{(q)_\lambda}.
\end{align*}
Thus by setting  $S_{L,l} = U_{L,l}$ and 
$T_l = y^{-\lambda}q^{nl}\Gamma_l
\frac{(\mu y)_\lambda(-q^{1-\lambda})_\lambda}{(q)_\lambda}$,
the condition (i) in Lemma \ref{le:s1} is satisfied, hence (a) is valid.
In particular, $\sum_{l=0}^\infty T_l < \infty$ 
because the series 
$\sum_{l=0}^\infty w^l \Gamma_l$ is convergent for $|w| < 1$
due to $\lim_{l\rightarrow \infty}\Gamma_l = 1$.

(b) We prove that Lemma \ref{le:s1} (ii) applies to
$S_{L,l} = V_{L,l}$.
Thus we are to verify 
(b)$_1: \, v_{L+1,l}\le v_{L,l}$ for sufficiently large $L$,
and (b)$_2: \,\sum_{l \ge 0}V_{L,l} < B$ for some $B$.
To show (b)$_1$, note the large $L$ asymptotic behavior
\begin{align*}
&\sum_{l_1+\cdots + l_L = \lambda}\!\!
g_{l_1}\cdots g_{l_L}
\eta_{l+l_2+\cdots +l_L}\cdots \eta_l
= \eta_l^LF_{\lambda,L}(q^ly)\\
&=
\frac{(q^l\mu y)_\lambda(q^l y)^{-\lambda}}{(q)_\lambda}
\sum_{0 \le j \le \lambda}(-1)^j q^{\frac{1}{2}j(j+1-2\lambda)}
\binom{\lambda}{j}_q \eta_{j+l}^L 
\simeq 
\frac{(q^l\mu y)_\lambda(q^l y)^{-\lambda}}{(q)_\lambda}
\eta_l^L(1+ O(\eta_{l+1}^L/\eta_l^L)),
\end{align*}
where the first and the second equalities 
follow from (\ref{Fdef}) and (\ref{bx})
with $\eta_{l+m}= \eta_l\eta_m(q^ly)$, 
and the last estimation is due to (\ref{lin}).
As for the second term in $v_{L,l}$ containing $-\eta^L_\infty$, 
one can estimate the coefficient of it as
$\binom{L-1+\lambda}{\lambda}c_1 
\le \sum_{l_1+\cdots + l_L = \lambda}
g_{l_1}\cdots g_{l_L} 
\le \binom{L}{\lambda}c_2$,
where $c_1, c_2$ are $L$-independent quantities
$c_1= \min\{g_{l_1}\cdots g_{l_\lambda}\mid l_i \in \Z_{\ge 0},
l_1+\cdots + l_\lambda = \lambda\}$
and $c_2 = \sum_{l_1+\cdots + l_\lambda = \lambda}
g_{l_1}\cdots g_{l_\lambda}$.
From these results and the Stirling formula
one finds that $v_{L,l}$ has the large $L$ asymptotic behavior
$v_{L,l} \simeq c_3 \eta_l^L - c_4 L^\lambda \eta_\infty^L$
with $L$-independent $c_3, c_4 >0$.
Thus the inequality (b)$_1$ for sufficiently large $L$ follows from (\ref{lin}).
To show (b)$_2$, we use
\begin{align*}
v_{L,\lambda} &\le 
\eta^L_\infty\sum_{l_1+\cdots + l_L = \lambda}
g_{l_1}\cdots g_{l_L}\Bigl(
\frac{(q^l \mu y)_\infty^L}{(q^l y)_\infty^L}-1\Bigr)
\\
&\le 
\eta^L_\infty\sum_{l_1+\cdots + l_L = \lambda}
g_{l_1}\cdots g_{l_L}\Bigl(
\frac{(q^l \mu y)_\infty}{(q^l y)_\infty}-1\Bigr)L 
\frac{(\mu y)_\infty^{L-1}}{(y)_\infty^{L-1}}
\\
&= L\eta_\infty\sum_{l_1+\cdots + l_L = \lambda}
g_{l_1}\cdots g_{l_L}
\sum_{j \ge 1}g_j (q^l y)^j.
\end{align*}
From (\ref{uvdef}), there exists $l_0 \ge 0$ such that 
$\Gamma_l >0$ holds for all $l \ge l_0$\footnote{
Such $l_0$ is not unique but the non-uniqueness does not spoil the 
subsequent argument.}.
Then the above estimation leads to  
$\sum_{l \ge 0}V_{L,l}
= C+ \sum_{l \ge l_0}\Gamma_l v_{L,l}
\le
C+ L\eta_\infty  \sum_{l_1+\cdots + l_L = \lambda}g_{l_1}\cdots g_{l_L}
\sum_{j \ge 1}f_jg_j y^j$,
where 
$C= \sum_{0 \le l <l_0}V_{L,l}$ is a finite constant and we have set 
$f_j = \sum_{l \ge l_0}\Gamma_l q^{jl}$.
This series $f_j$ is convergent because of 
$\lim_{l\rightarrow \infty}\Gamma_l = 1$ as noted in (a).
The series 
$\sum_{j \ge 1}f_jg_j y^j$ in $y$ is also convergent
due to $0 < y < 1$ and 
$\lim_{j \rightarrow \infty}\frac{f_jg_j}{f_{j+1}g_{j+1}}
= q^{-l_0}\lim_{j \rightarrow \infty}\frac{1-q^{j+1}}{1-q^j \mu} =
q^{-l_0}>1$.
Thus $\sum_{l \ge 0}V_{L,l}$ is convergent for any fixed $L$.
On the other hand (b)$_1$ tells that $\sum_{l \ge 0}V_{L,l}$
is monotonously decreasing with respect to $L$ for sufficiently large $L$.
This proves (b)$_2$.
\end{proof}

\section*{Acknowledgments}
The authors thank Masato Okado for discussion and 
Tomohiro Sasamoto for communication.
This work is supported by 
Grants-in-Aid for Scientific Research 
No.~15K13429 from JSPS.

\end{document}